\documentclass[aps,prb,twocolumn,superscriptaddress,showpacs]{revtex4}

\usepackage{color}
\usepackage{times}
\usepackage{epsfig}
\usepackage{amsmath}
\usepackage{amssymb}
\usepackage{amsmath, amsthm, amssymb,  mathrsfs, epsfig}
\usepackage{dcpic, pictex, pinlabel, setspace, rotating}
\usepackage{faktor}
\newcommand{\be}{\begin{equation}}
\newcommand{\ee}{\end{equation}}
\newcommand{\vspan}{\text{span}}

\hyphenation{self-neighboring self-neigh-bor-ing}


\newtheorem{theorem}{Theorem}[section]

\newtheorem{proposition}[theorem]{Proposition}

\newtheorem{fact}[theorem]{Fact}
\newtheorem{consequence}[theorem]{Consequence}
\newtheorem*{claim}{Claim}

\theoremstyle{definition}

\theoremstyle{remark}


\newcommand{\id}{{\rm{Id}}}
\newcommand{\R}{{\mathbb R}}

\newcommand{\Z}{{\mathbb Z}}


\newcommand{\Owe}{{\mathcal O}} 

\newcommand{\C}{{\mathbb C}}
\newcommand{\bbar}{\overline}

\newcommand{\til}{\widetilde}

\newcommand{\UO}{\til{U/O}}

\def\hline{\bigskip\hrule\bigskip}  


\begin{document}

\title{Projective Ribbon Permutation Statistics: a Remnant
of non-Abelian Braiding in Higher Dimensions}

\author{Michael Freedman}
\affiliation{Microsoft Research, Station Q, Elings Hall, University of California, Santa Barbara, CA 93106}
\author{Matthew B. Hastings}
\affiliation{Microsoft Research, Station Q, Elings Hall, University of California, Santa Barbara, CA 93106}
\author{Chetan Nayak}
\affiliation{Microsoft Research, Station Q, Elings Hall, University of California, Santa Barbara, CA 93106}
\affiliation{Department of Physics, University of California, Santa Barbara, CA 93106}
\author{Xiao-Liang Qi}
\affiliation{Microsoft Research, Station Q, Elings Hall, University of California, Santa Barbara, CA 93106}
\affiliation{Department of Physics, Stanford University, Stanford, CA 94305, USA}
\author{Kevin Walker}
\affiliation{Microsoft Research, Station Q, Elings Hall, University of California, Santa Barbara, CA 93106}
\author{Zhenghan Wang}
\affiliation{Microsoft Research, Station Q, Elings Hall, University of California, Santa Barbara, CA 93106}

\begin{abstract}
In a recent paper, Teo and Kane proposed
a 3D model in which the defects support Majorana
fermion zero modes. They argued that exchanging and twisting
these defects would implement a set ${\cal R}$
of unitary transformations on the zero mode Hilbert space
which is a `ghostly' recollection of the
action of the braid group on Ising anyons in 2D.
In this paper, we find the group ${\cal T}_{2n}$
which governs the statistics of these defects by
analyzing the topology of the space $K_{2n}$ of configurations
of $2n$ defects in a slowly spatially-varying
gapped free fermion Hamiltonian:
${\cal T}_{2n}\equiv {\pi_1}(K_{2n})$. We find that the group
${\cal T}_{2n}=\Z \times {\cal T}^r_{2n}$, where
the `ribbon permutation group' ${\cal T}^r_{2n}$
is a mild enhancement of the permutation
group $S_{2n}$: ${\cal T}^r_{2n} \equiv \Z_2 \times E((\mathbb{Z}_2)^{2n}\rtimes S_{2n})$. Here, $E((\mathbb{Z}_2)^{2n}\rtimes S_{2n})$
is the `even part' of $(\mathbb{Z}_2)^{2n} \rtimes S_{2n}$,
namely those elements for which the total parity of
the element in $(\mathbb{Z}_2)^{2n}$ added to
the parity of the permutation is even.
Surprisingly, ${\cal R}$ is only a projective representation
of ${\cal T}_{2n}$, a possibility proposed by Wilczek.
Thus, Teo and Kane's defects realize
`Projective Ribbon Permutation Statistics',
which we show to be consistent with locality.
We extend this phenomenon to other dimensions,
co-dimensions, and symmetry classes. Since it is an
essential input for our calculation, we review the
topological classification of gapped free fermion systems
and its relation to Bott periodicity.
\end{abstract}

\maketitle


\section{Introduction}

In two dimensions, the configuration space
of $n$ point-like particles ${\cal C}_n^{2D}$
is multiply-connected. Its first homotopy
group, or {\it fundamental group},
is the $n$-particle braid group,
${\pi_1}({\cal C}_n^{2D})={\cal B}_n$.
The braid group ${\cal B}_n$ is generated
by counter-clockwise exchanges $\sigma_i$
of the $i^\text{th}$ and $(i+1)^\text{th}$ particles
satisfying the defining relations:
\begin{eqnarray}
{\sigma_i} {\sigma_j} &=& {\sigma_j} {\sigma_i} \hskip 0.5 cm
\mbox{for } |i-j|\geq 2\cr {\sigma_i} \sigma_{i+1} {\sigma_i} &=&
\sigma_{i+1} {\sigma_i}\, \sigma_{i+1} \hskip 0.5 cm \mbox{for }
1\leq i \leq n-2
\label{eq:braidrelation1}
\end{eqnarray}
This is an infinite group, even for only two particles,
since $(\sigma_i)^m$ is a non-trivial element of the
group for any $m>0$. In fact, even if we consider
distinguishable particles, the resulting group,
called the `pure Braid group' is non-trivial.
(For two particles, the pure braid group
consists of all even powers
of $\sigma_1$.)

In quantum mechanics, the equation
${\pi_1}({\cal C}_n^{2D})={\cal B}_n$
opens the door to the possibility of anyons\cite{Leinaas77,Wilczek82a}.
Higher-dimensional representations
of the braid group give rise to
non-Abelian anyons \cite{Bais80,Goldin85,Frohlich90}.
There has recently been intense effort directed towards
observing non-Abelian anyons due, in part,
to their potential use for fault-tolerant quantum
computation \cite{Kitaev97,Nayak08}.
One of the simplest models of non-Abelian
anyons is called {\it Ising anyons}. They arise
in theoretical models of the $\nu=5/2$ fractional quantum
Hall state \cite{Moore91,Nayak96c,LeeSS07,Levin07}
(see also Ref. \onlinecite{Bonderson08}),
chiral $p$-wave superconductors \cite{Read00,Ivanov01},
a solvable model of spins on the honeycomb lattice
\cite{Kitaev06a}, and interfaces between
superconductors and either 3D topological
insulators \cite{Fu08} or spin-polarized
semiconductors with strong spin-obrit coupling \cite{Sau09}.
A special feature of Ising anyons, which makes
them relatively simple and connects them to BCS
superconductivity, is that they can be understood
in a free fermion picture.

A collection of $2n$ Ising anyons has a $2^{n-1}$-dimensional
Hilbert space (assuming fixed boundary condition).
This can be understood in terms of $2n$ Majorana fermion
operators ${\gamma^{}_i}={\gamma_i^\dagger}$,
$i=1,2,\ldots,n$, one associated to
each Ising anyon, satisfying the anticommutation rules
\begin{equation}
\label{eqn:clifford}
\{{\gamma^{}_i},{\gamma^{}_j}\}=2\delta_{ij}\,.
\end{equation}
The Hilbert space of $2n$ Ising anyons with fixed boundary condition furnishes a representation
of this Clifford algebra; by restricting to fixed boundary condition, we obtain
a representation
of products of an even number of $\gamma$ matrices, which has minimal dimension $2^{n-1}$.
When the $i^\text{th}$ and $(i+1)^\text{th}$ anyons
are exchanged in a counter-clockwise manner,
a state of the system is transformed according to the action of
\begin{equation}
\label{eqn:Ising-braid}
\rho({\sigma_i})=e^{i\pi/8}\,e^{-\pi{\gamma^{}_i}\gamma^{}_{i+1}/4}\,.
\end{equation}
(There is a variant of Ising anyons, associated with
SU(2)$_2$ Chern-Simons theory,
for which the phase factor $e^{i\pi/8}$
is replaced by $e^{-i\pi/8}$. In the fractional quantum
Hall effect, Ising anyons are tensored with
Abelian anyons to form more complicated models
with more particle species; the phase factor depends
on the model.) A key property, essential for applications
to quantum computing, is that {\it a pair} of Ising anyons
forms a two-state system. The two states
correspond to the two eigenvalues $\pm 1$
of ${\gamma^{}_i}\gamma^{}_{j}$. No local degree of freedom
can be associated with each anyon; if we insisted on doing so,
we would have to say that each Ising anyon has $\sqrt{2}$
internal states. In superconducting contexts,
the ${\gamma^{}_i}$s are the Bogoliubov-de Gennes operators
for zero-energy modes (or, simply, `zero modes')
in vortex cores; the vortices
themselves are Ising anyons if they possess a single
such zero mode ${\gamma^{}_i}$.
Although the Hilbert
space is non-local in the sense that it cannot be decomposed
into the tensor product of local Hilbert spaces associated
with each anyon, the system is perfectly compatible
with locality and arises in local lattice models and
quantum field theories.

In three or more dimensions,
the configuration space of $n$ point-like particles
is simply-connected if the particles are distinguishable.
If the particles are indistinguishable, it
is multiply-connected,
${\pi_1}({\cal C}_n^{3D})=S_n$.
The generators of the permutation group
satisfy the relations (\ref{eq:braidrelation1})
and one more, ${\sigma_i^2}=1$. As a result
of this last relation, the permutation group
is finite. The one-dimensional representations
of $S_n$ correspond to bosons and fermions.
One might have hoped that higher-dimensional
representations of $S_n$ would give rise to
interesting 3D analogues of non-Abelian anyons.
However, this is not the case, as shown in
Ref. \onlinecite{Doplicher71a,Doplicher71b}: any higher-dimensional
representation of $S_n$ which is compatible with
locality can be decomposed into the tensor product
of local Hilbert spaces associated
with each particle. For instance, suppose we
had $2n$ spin-$1/2$ particles but ignored
their spin values. Then we would have $2^{2n}$
states which would transform into each other
under permutations. Clearly, if we discovered such a system,
we would simply conclude that we were missing
some quantum number and set about trying to
measure it. This would simply lead us back
to bosons and fermions with additional
quantum numbers. (The color quantum number of quarks
was conjectured by essentially this kind of reasoning.)
The quantum information contained in these
$2^{2n}$ states would not have any special protection.

The preceding considerations point to
the following tension. The Clifford algebra
(\ref{eqn:clifford}) of Majorana fermion zero modes
is not special to two dimensions. One could imagine
a three (or higher) dimensional system with topological defects supporting such zero modes. But the Hilbert space of these
topological defects would be $2^{n-1}$-dimensional, which
manifestly cannot be decomposed into the tensor product
of local Hilbert spaces associated
with each particle, seemingly in contradiction with the results of 
Refs. \onlinecite{Doplicher71a,Doplicher71b} on
higher-dimensional representations of the permutation group
described above. However, as long as
no one had a three or higher dimensional system
in hand with topological defects supporting
Majorana fermion zero modes,
one could, perhaps, sweep this worry under the rug.
Recently, however, Teo and Kane \cite{Teo10}
have shown that a 3D system which is simultaneously
a superconductor and a topological insulator \cite{Moore07,Fu07,Roy09,Qi08}
(which, in many but not all examples, is
arranged by forming superconductor-topological insulator
heterostructures) supports Majorana zero modes at
point-like topological defects.

To make matters worse, Teo and Kane \cite{Teo10}
further showed that exchanging these
defects enacts unitary operations on this
$2^{n-1}$-dimensional Hilbert space which are
essentially equal to (\ref{eqn:Ising-braid}).
But we know that these unitary matrices
form a representation of the braid group,
which is not the relevant group in 3D.
One would naively expect that the relevant group is
the permutation group, but $S_n$ has no such
representation (and even if it did, its use in this
context would contradict locality, according to Ref. \onlinecite{Doplicher71a,Doplicher71b}
and arguments in Ref. \onlinecite{Read03}).
So this begs the question: what is the group ${\cal T}_{2n}$
for which Teo and Kane's unitary
transformations form a representation?

With the answer to this question in hand,
we could address questions such as the following.
We know that a 3D incarnation of Ising anyons
is one possible representation of ${\cal T}_{2n}$;
is a 3D version of other anyons another
representation of ${\cal T}_{2n}$?

Attempts to generalize the braiding of anyons
to higher dimensions sometimes start with
extended objects, whose configuration space
may have fundamental group which is
richer than the permutation group.
Obviously, if one has line-like defects
in 3D which are all oriented in the same direction,
then one is essentially back to the 2D situation
governed by the braid group. This is too
trivial, but it is not clear what kind of extended
objects in higher dimensions would be the best starting point.
What is clear, however, is that Teo and Kane's topological
defects must really be some sort of extended objects.
This is clear from the above-noted contradiction
with the permutation group. It also follows from
the `order parameter' fields which must deform
as the defects are moved, as we will discuss.

In this paper, we show that Teo and Kane's
defects are properly viewed as point-like
defects connected pair-wise by ribbons.
We call the resulting $2n$-particle configuration
space $K_{2n}$. We compute
its fundamental group ${\pi_1}(K_{2n})$, which we denote by
${\cal T}_{2n}$ and find that
${\cal T}_{2n}=\Z \times {\cal T}^r_{2n}$.
Here, ${\cal T}^r_{2n}$ is the `ribbon permutation group',
defined by ${\cal T}^r_{2n} \equiv \Z_2
\times E((\mathbb{Z}_2)^{2n}\rtimes S_{2n})$.
The group $E((\mathbb{Z}_2)^{2n}\rtimes S_{2n})$
is a non-split extension of the permutation group $S_{2n}$ by $\Z_2^{2n-1}$
which is defined as follows: it is the
subgroup of $(\mathbb{Z}_2)^{2n} \rtimes S_{2n}$
composed of those elements for which the total parity of
the element in $(\mathbb{Z}_2)^{2n}$ added to
the parity of the permutation is even.
The `ribbon permutation group' for $2n$ particles,
 by ${\cal T}^r_{2n}$, is the fundamental group of the
 reduced space of $2n$-particle configurations.

Our analysis relies on the topological
classification of gapped free fermion Hamiltonians
\cite{Ryu08,Kitaev09} -- band insulators
and superconductors -- which
is the setting in which Teo and Kane's 3D defects
and their motions are defined.
The starting point for this classification is reducing the problem
from classifying gapped Hamiltonians defined on a lattice to
classifying Dirac equations with a spatially varying mass term.
One can motivate the reduction to a Dirac equation as
Teo and Kane do: they
start from a lattice Hamiltonian and assume
that the parameters in the Hamiltonian vary smoothly in space, so that
the Hamiltonian can be described as a function of both the momentum $k$
and the position $r$.  Near the minimum of the band gap, the
Hamiltonian can be expanded in a Dirac equation, with a position-dependent
mass term.  In fact, Kitaev\cite{Kitaev09} has shown that the reduction
to the Dirac equation with a spatially varying mass term can be derived
much more generally:
gapped lattice Hamiltonians, even if the parameters
in the Hamiltonian do not vary smoothly in space, are {\it stably equivalent}
to Dirac Hamiltonians with a spatially varying mass term.  Here, equivalence
of two Hamiltonians means that one can be smoothly deformed into the other
while preserving locality of interactions and the spectral gap, while
stable equivalence means that one can add additional ``trivial" degrees of freedom (additional
sites which have vanishing hopping matrix elements) to the original lattice
Hamiltonian to obtain a Hamiltonian which is equivalent to
a lattice discretization of the Dirac Hamiltonian.

Since this classification of Dirac Hamiltonians
is essential for the definition of $K_{2n}$, we give
a self-contained review, following Kitaev's
analysis \cite{Kitaev09}. Our exposition parallels
the discussion of Bott periodicity in Milnor's book
\cite{Milnor63}. The basic idea is that each additional
discrete symmetry which squares to $-1$ which we
impose on the system is encapsulated by an
anti-symmetric matrix which defines a complex structure
on $\mathbb{R}^N$, where $N/2$ is the number of
bands (or, equivalently, $N$ is the number of bands
of Majorana fermions). For any given system,
these are chosen and fixed. This leads to
a progression of symmetric spaces $\text{O}(N)\rightarrow
\text{O}(N)/\text{U}(N/2) \rightarrow \text{U}(N/2)/\text{Sp}(N/4)
\rightarrow \ldots$ as the number of such symmetries is increased.
Following Kitaev \cite{Kitaev09}, we view the Hamiltonian
as a final anti-symmetric matrix which must be chosen (and, thus,
put almost on the same footing as the symmetries); it is defined by a choice
of an arbitrary point in the next symmetric space in the progression.
The space of such Hamiltonians is topologically-equivalent
to that symmetric space.
However, as the spatial dimension is increased, $\gamma$-matrices
squaring to $+1$ must be chosen in order to expand
the Hamiltonian in the form of the Dirac equation
in the vicinity of a minimum of the band gap. These halve the dimension
of subspaces of $\mathbb{R}^N$ by separating it
into their $+1$ and $-1$ eigenspaces and thereby
lead to the opposite progression of symmetric spaces. Thus,
taking into account both the symmetries of the system and
the spatial dimension, we conclude that the space of gapped
Hamiltonians with no symmetries in $d=3$ is topologically
equivalent to $\text{U}(N)/\text{O}(N)$. (However, by the preceding
considerations, the same symmetric space also, for instance, classifies
systems with time-reversal symmetry in $d=4$.)
All such Hamiltonians can be continuously deformed into each
other without closing the gap, $\pi_{0}(\text{U}(N)/\text{O}(N))=0$.
However, there are topologically-stable point-like defects
classified by $\pi_{2}(\text{U}(N)/\text{O}(N))=\mathbb{Z}_2$.
These are the defects whose multi-defect configuration space
we study in order to see what happens when they are exchanged.

The second key ingredient in our analysis
is 1950's-vintage homotopy theory, which we use to compute
${\pi_1}(K_{2n})$.  We apply the Pontryagin-Thom
construction to show that $K_{2n}$, which
includes not only the particle locations but also
the full field configuration  around the particles
(i.e. the way in which the gapped free fermion
Hamiltonian of the system explores $\text{U}(N)/\text{O}(N)$),
is topologically-equivalent to a much simpler
space, namely point-like defects connected
pair-wise by ribbons. In order to then
calculate ${\pi_1}(K_{2n})$, we rely on the long
exact sequence of homotopy groups
\begin{equation}
\label{eqn:long-exact-sequence}
\ldots \rightarrow \pi_{i}(E)\rightarrow
{\pi_i}(B)\rightarrow\pi_{i-1}(F)\rightarrow\pi_{i-1}(E)
\rightarrow ...
\end{equation}
associated to a fibration defined by
$F \rightarrow E\rightarrow B$.
(In an exact sequence, the kernel of each map
is equal to the image of the previous map.)
This exact sequence may be familiar to some readers
from Mermin's review of the topological theory of
defects \cite{Mermin79}, where a symmetry associated
with the group $G$ is spontaneously broken to $H$,
thereby leading to topological
defects classified by homotopy groups ${\pi_n}(G/H)$.
These can be computed by (\ref{eqn:long-exact-sequence})
with $E=G$, $F=H$, $B=G/H$, e.g.
if $\pi_{1}(G)=\pi_{0}(G)=0$,
then $\pi_{1}(G/H)=\pi_{0}(H)$.

The ribbon permutation group is a rather weak enhancement
of the permutation group and, indeed, we conclude
that Teo and Kane's unitary operations are
{\it not} a representation of the ribbon permutation
group. However, they are a {\it projective} representation
of the ribbon permutation group. In a
{\it projective} representation, the group
multiplication rule is only respected up to
a phase, a possibility allowed in quantum mechanics.
A representation $\rho$ (sometimes called a linear
representation) of some group $G$ is
a map from the group to the group of linear transformations
of some vector space such that
the group multiplication law is reproduced:
\begin{equation}
\rho(gh)=\rho(g)\cdot\rho(h)
\end{equation}
if $g,h\in G$. Particle statistics arising as a projective
representation of some group
realizes a proposal of Wilczek's \cite{Wilczek98},
albeit for the ribbon permutation group rather than
the permutation group itself. This difference
allows us to sidestep a criticism of Read \cite{Read03}
based on locality, which Teo and Kane's
projective representation respects.
The group $(\mathbb{Z}_2)^{2n-1}$ is generated
by $2n-1$ generators ${x_1}$, ${x_2}$,
\ldots, $x_{2n-1}$ satisfying
\begin{eqnarray}
{x_i^2} &=& 1\cr
{x_i} {x_j} &=& {x_j} {x_i}
\label{eq:Z_2-def}
\end{eqnarray}
However, the projective representation of
$(\mathbb{Z}_2)^{2n-1}$, which gives a subgroup of
Teo and Kane's transformations, is an ordinary
linear representation of a $\Z_2$-central extension,
called the extra special group $E^1_{2n-1}$:
\begin{eqnarray}
{x_i^2} &=& 1\cr
{x_i} {x_j} &=& {x_j} {x_i} \hskip 0.5 cm
\mbox{for } |i-j|\geq 2\cr
{x_i} x_{i+1} &=& z \,x_{i+1} {x_i}\cr
{z^2}&=&1
\label{eq:extra-special}
\end{eqnarray}
Here, $z$ generates the central extension, which we may
take to be $z=-1$. The operations generated
by the ${x_i}$s were dubbed `braidless operations'
by Teo and Kane \cite{Teo10} because they could
be enacted without moving the defects. While these
operations form an Abelian subgroup of ${\cal T}_{2n}$,
their representation on the Majorana zero mode
Hilbert space is {\it not} Abelian -- two such operations
which twist the same defect {\it anti-commute} (e.g. $x_i$ and
$x_{i+1}$).

The remaining sections of this paper will
be as follows. In Section \ref{sec:strong-coupling},
we rederive Teo and Kane's zero modes and unitary transformations
by simple pictorial and counting arguments in a `strong-coupling'
limit of their model. In Section \ref{sec:free-fermion},
we review the topological classification of free-fermion
Hamiltonians, including topological insulators and
superconductors. From this classification, we obtain
the classifying space relevant to Teo and Kane's model
and, in turn, the topological classification of
defects and their configuration space.
In Section \ref{sec:tethered}, we use a toy
model to motivate a simple picture for the defects
used by Teo and Kane and give a heuristic
construction of the ribbon permutation group.
In Section \ref{sec:Kane_space}, we give a full
homotopy theory calculation.
In Section \ref{sec:projective}, we compare
the ribbon permutation group to
Teo and Kane's unitary transformations and conclude that
the latter form a projective, rather than a linear,
representation of the former. Finally, in Section
\ref{sec:discussion}, we review and discuss our results.
Several appendices contain technical details.

\section{Strong-coupling limit of the Teo-Kane Model}
\label{sec:strong-coupling}

In this section, we present a lattice model
in $d$ dimensions which has,
as its continuum limit in $d=3$, the model discussed by
Teo and Kane \cite{Teo10}. In the limit that the mass terms
in this model are large (which can be viewed as
a `strong-coupling' limit), a simple picture of
topological defects (`hedgehogs') emerges.
We show by a counting argument that hedgehogs
possess Majorana zero modes which evolve as the
hedgehogs are adiabatically moved. This adiabatic evolution
is the 3D non-Abelian statistics which it is the main purpose
of this paper to explain.

The strong coupling limit which we describe is the
simplest way to derive the existence of Majorana zero
modes and the unitary transformations of their Hilbert space
which results from exchanging them. This section does
not require the reader to be {\it au courant} with the
topological classification of insulators and superconductors
\cite{Ryu08,Kitaev09}. (In the next section, we will
review that classification in order to make our exposition
self-contained.)

We use a hypercubic lattice in $d$-dimensions, with a single Majorana degree of freedom at each site.
That is,
for $d=1$, we use a chain, in $d=2$ we use a square lattice,
in $d=3$ we use
a cubic lattice, and so on.
We first construct a lattice model whose continuum limit is the Dirac equation with $2^d$-dimensional
$\gamma$-matrices to reproduce the Dirac equation considered by Teo
and Kane; we then show how to perturb this model to open a mass gap.
We begin by considering only nearest neighbor couplings.  The Hamiltonian $H$ is an anti-symmetric
Hermitian matrix.  In $d=1$, we can take the linear chain to give a lattice model with the Dirac equation as its continuum
limit.  That is, $H_{j,j+1}=i$ and $H_{j+1,j}=-i$.  To describe this state in
pictures, we draw these bonds as oriented lines,
as shown in Fig.~(\ref{figDirac}a), with the orientation indicating the
sign of the bond.  The continuum limit of this Hamiltonian is described by a Dirac equation with $2$-dimensional $\gamma$ matrices.
While this system can be described by a unit cell of a single site, we instead choose to describe it by a unit
cell of two sites for convenience when adding mass terms later.
In $d=2$, we can take a $\pi$-flux
state to obtain the Dirac equation in the continuum limit.
A convenient gauge to take to describe the $\pi$-flux state is shown in Fig.~(\ref{figDirac}b), with all the vertical bonds
having the same orientation, and the orientation of the horizontal bonds alternating from row to row.  The continuum limit
here has $4$-dimensional $\gamma$ matrices and we use a $4$-site unit cell.

\begin{figure}[t]
\centering
\includegraphics[width=3.5in]{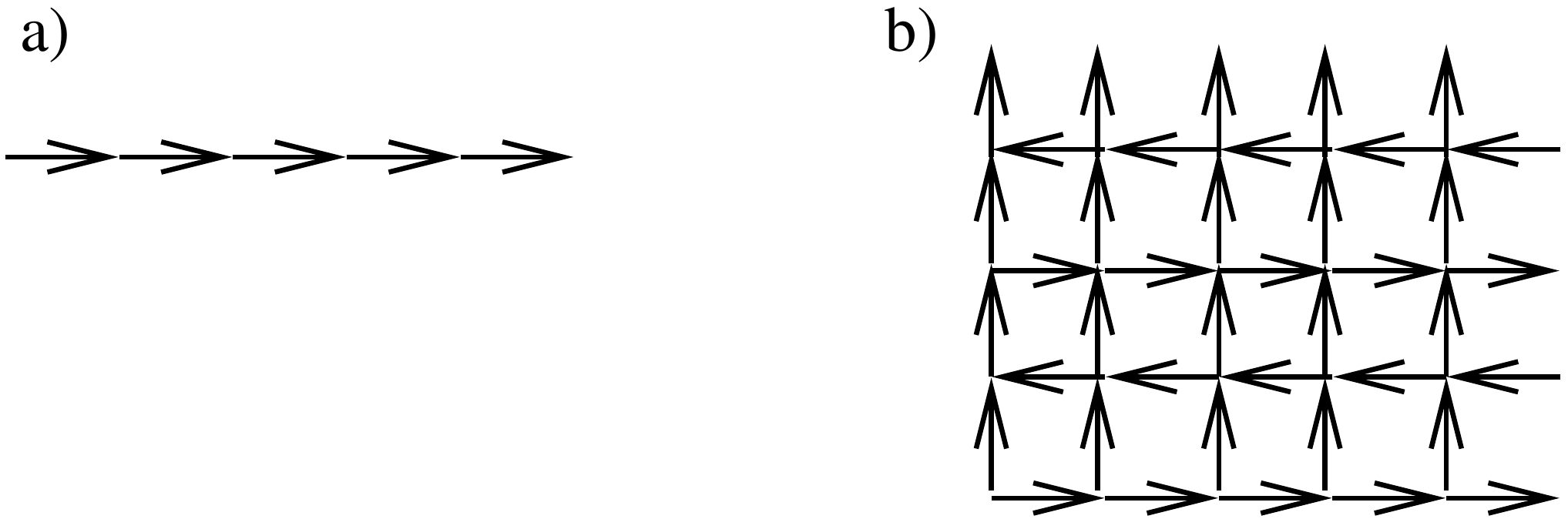}
\caption{(a) A lattice model giving the Dirac equation in $d=1$.
(b) A lattice model in $d=2$.}
\label{figDirac}
\end{figure}

In general, in $d$ dimensions, we can obtain a Dirac equation with $2^d$-dimensional $\gamma$ matrices
by the following iterative
procedure.  Let the ``vertical" direction refer to the direction of the $d$-th basis vector.
Having constructed the lattice Hamiltonian in $d-1$ dimensions, we stack these Hamiltonians vertically on top of each other,
with alternating signs in each layer.  Then, we take all the vertical bonds to be oriented in the
same direction.
This Hamiltonian is invariant under translation in the vertical direction by distance $2$.  Thus,
if $H_{d-1}$ is the Hamiltonian in $d-1$ dimensions, the Hamiltonian $H_d$ is given by
\be
H_d=\begin{pmatrix} H_{d-1} & 2\sin(k/2) I  \\ 2\sin(k/2) I& -H_{d-1} \end{pmatrix},
\ee
where $I$ is the identity matrix and $k$ is the momentum in the vertical direction.
Near $k=0$, this is
\be
\label{Hdcontinuum}
H_d \approx H_{d-1} \otimes \sigma_z + k \otimes \sigma_x.
\ee

This iterative construction corresponds to an iterative construction of $\gamma$-matrices.  Having constructed $d-1$
different $2^{d-1}$-dimensional $\gamma$-matrices $\gamma^{}_1,...,\gamma^{}_{d-1}$, we construct $d$ different $2^d$-dimensional $\gamma$-matrices,
$\tilde \gamma^{}_1,...,\tilde \gamma^{}_{d}$, by $\tilde \gamma^{}_i=\gamma^{}_i \otimes \sigma_z$ for $i=1,...,d-1$, and $\tilde \gamma^{}_d=I\otimes \sigma_x$.

In one dimension, dimerization of bonds corresponds to alternately strengthening and weakening the bonds as shown in
Fig.~(\ref{figDimer}).  In two dimensions, we can dimerize in either the horizontal or vertical directions.
In $d$-dimensions, we have $d$ different directions to dimerize.  Dimerizing in the ``vertical" direction gives, instead
of (\ref{Hdcontinuum}), the result
\be
\label{eqn:dimerizations}
H_d \approx H_{d-1} \otimes \sigma_z + k\otimes \sigma_x + m_d \otimes \sigma_y,
\ee
where $m_d$ is the dimerization strength.  This corresponds to an iterative construction of mass matrices, $M_i$, as follows.
In one dimension, we have $M_1=i\sigma_y$.  Given $d-1$ different mass matrices in $d-1$ dimensions, $M_i$, we construct $\tilde M_i$ in
$d$-dimensions by $\tilde M_i=M_i \otimes \sigma_z$, for $i=1...d-1$, and $\tilde M_d=iI\otimes \sigma_y$.

\begin{figure}
\centering
\includegraphics[width=3.25in]{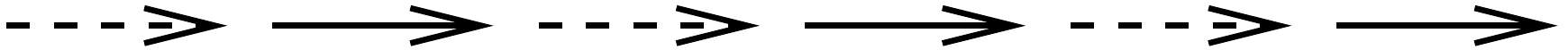}
\caption{Dimerization in $d=1$.}
\label{figDimer}
\end{figure}

If the dimerization is non-zero, and constant, we can increase the dimerization strength without closing the gap until a strong
coupling limit is reached.  In one dimension, by increasing the dimerization strength, we eventually
reach a fully dimerized configuration, in which each site has one non-vanishing bond
connected to it.
In two or more dimensions, the dimerization
can be a combination of dimerization in different directions.  However, if the dimerization is completely in one direction, for example
the vertical direction, we increase the dimerization strength until the vertical bonds are fully dimerized.  Simultaneously, we reduce the
strength of the other bonds to zero without closing the gap.
This is again a fully dimerized state, the columnar state, with each site having one non-vanishing bond.
Any configuration with uniform, small dimerization can be deformed into this pattern without closing the gap by rotating the
direction of dimerization, increasing the strength of dimerization, and then setting the bonds in the other directions to zero.

It is important to understand that the ability to reach such a strong coupling
limits depends on the perturbation of the Dirac equation that we consider;
for dimerization, it is possible to reach a strong coupling limit, while if
we had instead chosen to open a mass gap by adding, for example, diagonal
bonds with imaginary coupling to the two-dimensional Dirac equations, we
would open a mass gap by perturbing the Hamiltonian with the term $i\gamma_1\gamma_2$, and such a perturbation cannot be continued to the strong coupling
limit due to topological obstruction.

Further, if the dimerization is non-uniform then it may not be possible to reach a fully dimerized state without
having defect sites.  Consider the configurations in
Fig.~(\ref{figHedgehog}a) in $d=1$ and in Fig.~(\ref{figHedgehog}b) in $d=2$.  These are the strong coupling limits of the
hedgehog configuration,
and each contains a zero mode, a single unpaired site.
This is one of the central results of the strong-coupling
limit: {\it topological defects have unpaired sites which,
in turn, support Majorana zero modes}.

Such strong-coupling hedgehog configurations can be constructed by the following iterative process in any dimension $d$.
Let $x_d$ correspond to the coordinate in the vertical direction.  For $x_d\geq 0$, stack $d-1$-dimensional hedgehog configurations.
Along the half-line given by $x_d>0$ and $x_i=0$ for $1\leq i \leq d-1$, arrange vertical bonds, oriented to connect
the site with $x_d=2k-1$ to
that with $x_d=2k$, for $k\geq 1$.  Along the lower half plane, given by $x_d<0$, arrange vertical bonds oriented to connect
a site with $x_d=-(2k-1)$ to that with $-2k$, for $k\geq 1$.  This procedure gives the $d=2$ hedgehog in
Fig.~(\ref{figHedgehog}b) from the $d=1$ hedgehog in
Fig.~(\ref{figHedgehog}a), and gives a strong coupling limit of the
Teo-Kane hedgehog in $d=3$.  That is, the Teo-Kane
hedgehog can be deformed into this configuration,  without closing the gap.

\begin{figure}
\centering
\includegraphics[width=3.25in]{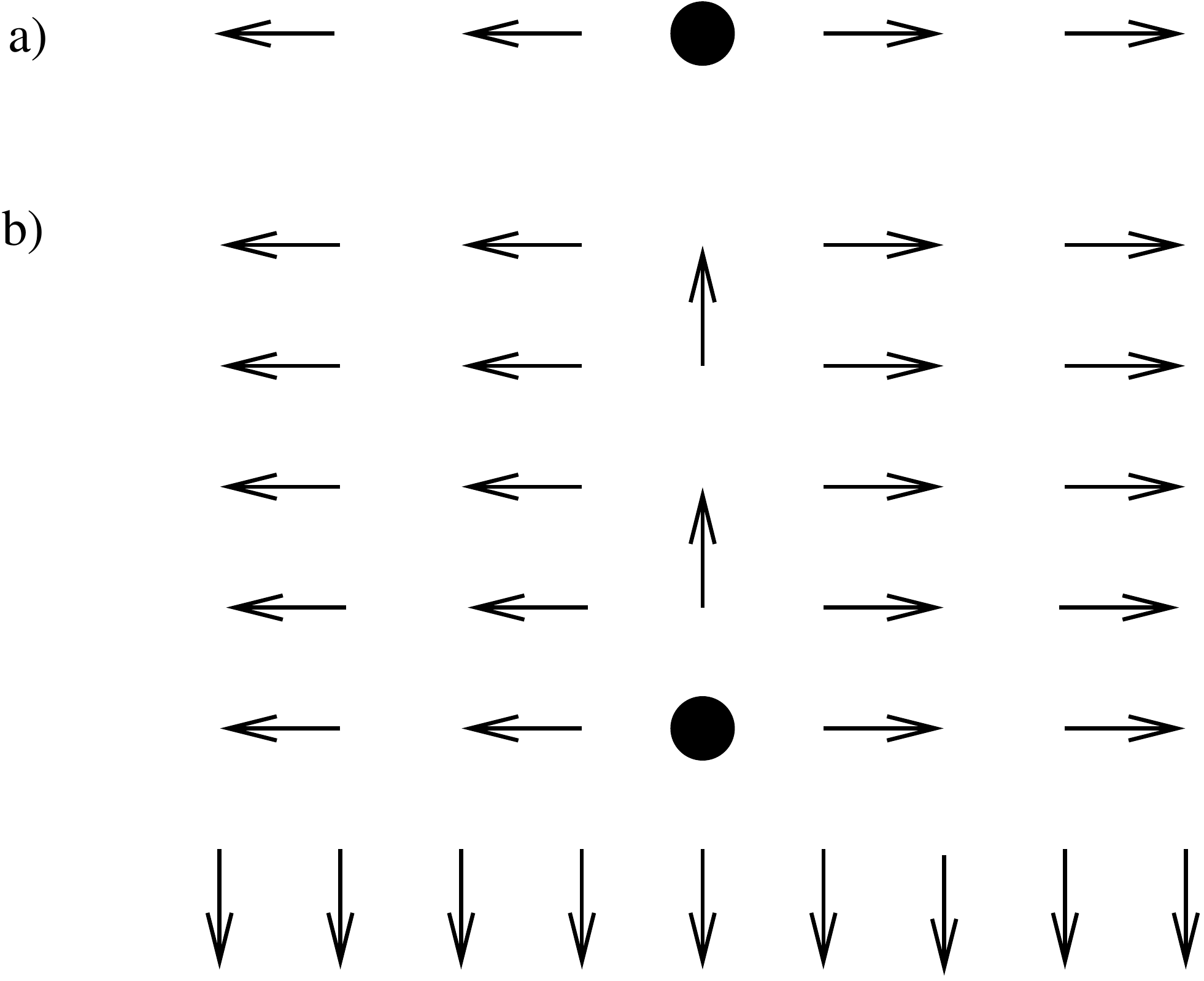}
\caption{(a) A one-dimensional hedgehog.
(b) A two-dimensional hedgehog.}
\label{figHedgehog}
\end{figure}

So long as we consider only nearest-neighbor bonds,
there is an integer index $\nu$ describing different dimerization
patterns in the strong-coupling limit.  This index, which
is present in any dimension, arises from the sublattice symmetry
of the system, and is closely-related to the U(1) symmetry
of dimer models of spin systems\cite{Rokhsar88}.  Label the two
sublattices by $A$ and $B$.  Consider any set of sites, such that every site in that
set has exactly one bond connected to it.  (Recall that,
in the strong coupling
limit, every bond has strength $0$ or $1$ and every site
has exactly one bond connected to it, except for defect sites.)
Then, the number of bonds going from $A$ sites in this set to $B$ sites outside the set
is exactly equal to the number of bonds going from $B$ sites in this set to $A$ sites outside the set.  On the other hand,
if there are defect sites in the set, then this rule is broken.
Consider the region defined by the dashed line
in Fig.~(\ref{figU1}a).  We define the ``flux" crossing the dashed line to be the number of bonds crossing that boundary which leave starting on an $A$ site, minus the number
which leave starting on a $B$ site. The flux is non-zero in this case, but is unchanging as we increase the size of the region.
This flux is the index $\nu$. By the argument given above for
the existence of zero modes, $\nu$ computed for any region
is equal to the number of Majorana zero modes contained
within the region.

The index $\nu$ can be defined beyond the strong-coupling
limit. Consider, for the sake of concreteness, $d=3$.
There are 3 possible dimerizations, one for each
dimension, as we concluded in Eq. \ref{eqn:dimerizations}.
In weak-coupling, the square of the gap is equal to the sum of
the squares of the dimerizations. Thus, if we assume
a fixed gap, we can model these dimerizations by a unit vector.
The integer index discussed above is simply the total winding
number of this unit vector on the boundary of any region.

However, once diagonal bonds are allowed,
the integer index $\nu$ no longer counts zero modes.
Instead, there is a $\mathbb{Z}_2$ index,
equal to $\nu(\text{mod} 2)$ which
counts zero modes modulo 2. To see this in
the strong-coupling limit,
consider the configuration in Fig.~(\ref{figU1}b).
This is a configuration with $\nu=2$
but no Majorana zero modes. However, a $\nu=1$
configuration must still have a zero mode and, thus,
any configuration with odd $\nu$ must have at least one zero mode.

\begin{figure}
\centering
\includegraphics[width=3.25in]{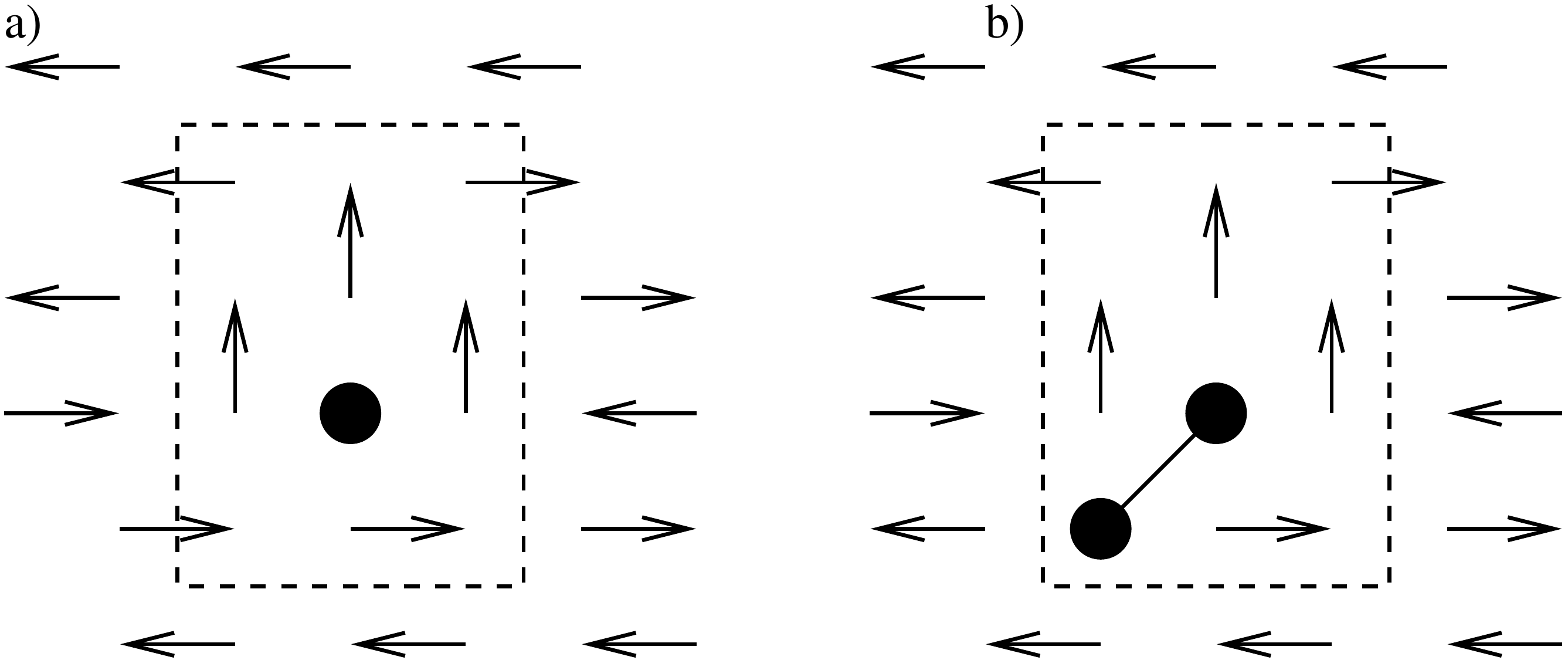}
\caption{(a) Defect acting as source of $U(1)$-flux.  Bonds are oriented from $A$ to $B$ sublattice.  There is a net
flux of one leaving the region defined by the dashed line.
(b) Configuration with diagonal bond added, indicated by the undirected line
connecting the two circles; either orientation of this line, corresponding to different choices of the sign of the
term in the Hamiltonian, would lead to the same result.  There is a net of flux of two leaving the region
defined by the dashed line.}
\label{figU1}
\end{figure}

In Fig.~(\ref{figU1}), we have chosen to orient the
bonds from A to B sublattice
to make it easier to compute $\nu$.  However,
the $\nu$ and its residue modulo 2,
defined above are independent of the orientation of
the bonds (which indicate the sign of terms in the Hamiltonian) and depend only on which sites are connected by bonds (which indicate which terms in the Hamiltonian are non-vanishing).

The $\nu(\text{mod} 2)$ with diagonal bonds
is the same as Kitaev's ``Majorana number"\cite{Kitaev06a}.
We can use this to show the existence of
zero modes in the Teo-Kane hedgehog even outside the
strong-coupling limit.
Consider a hedgehog configuration.
Outside some large distance $R$ from
the center of the hedgehog, deform to the
strong coupling limit without closing the gap.  Then, outside a distance $R$, we can count $\nu(\text{mod} 2)$ by counting bonds leaving
the region and we find a nonvanishing result relative to a reference configuration: if there are an even number of sites in the
region then there are an odd number of bonds leaving in a hedgehog configuration, and if there are an odd number of sites then there
are an even number of bonds leaving.
However, since this implies a nonvanishing
Majorana number, there must be a zero mode inside the region, regardless of what the Hamiltonian inside is.
We note that this is a highly non-trivial result
in the weak-coupling limit, where
the addition of weak diagonal bonds, all oriented
the same direction, to the configuration of Fig.~(\ref{figDirac}b)
corresponds to adding the term
$i\gamma^{}_1 \gamma^{}_2$ to the Hamiltonian in $d=2$.
By the argument given above, even this Hamiltonian
has a zero mode in the presence of a defect
with non-zero $\nu(\text{mod} 2)$.

Given any two zero modes, corresponding to defect sites in the strong coupling limit, we can identify a string of sites
connecting them.  If we have a pair of defect sites on opposite sublattices, corresponding to opposite hedgehogs, then one particular string
corresponds to the north pole of the order parameter, as in Fig.~(\ref{figstring}a).  However, we
can simply choose {\it any} arbitrary string.
Let $\gamma^{}_i,\gamma^{}_j$ be the Majorana operators at the two defect sites.  The operation $\gamma^{}_i\rightarrow -\gamma^{}_i,\gamma^{}_j\rightarrow
-\gamma^{}_j$ can be implemented as follows.  We begin with an adiabatic operation on one of the defect sites and the nearest $2$
sites on the line.  The Hamiltonian on those three sites is an anti-symmetric, Hermitian matrix.  That is, it corresponds to
an oriented plane in three dimensions.  We can adiabatically perform orthogonal rotations of this plane.  Thus, by rotating by $\pi$
in the plane corresponding to the defect site and the first site on the string,
we can change the sign of the mode on the defect and the orientation of the
bond, as shown in Fig.~(\ref{figstring}b).  This rotation is an adiabatic transformation
of the three site Hamiltonian
\be
\begin{pmatrix}
0 & 0 & i\sin(\theta) \\
0 & 0 & i\cos(\theta) \\
-i\sin(\theta) & -i\cos(\theta) & 0
\end{pmatrix}
\ee
along the path $\theta=0\rightarrow \pi$.
We then perform rotations on consecutive triples of sites along the defect line, which changes
the orientation of pairs of neighboring bonds, arriving
at the configuration in Fig.~(\ref{figstring}c).  Finally, we rotate by $\pi$
in the plane containing the other defect site and the last site.
This returns the system to the original configuration,
having effected the desired operation.

Since we only consider adiabatic transformation, we can only perform orthogonal rotations with unit determinant.  Thus, any transformation which swaps two defects and returns the
bonds to their original configuration, must change the sign of one of
the zero modes: $\gamma^{}_i \rightarrow \gamma^{}_j, \gamma^{}_j \rightarrow -\gamma^{}_i$.  Indeed, any orthogonal
transformation with determinant equal to minus one would change the sign of the fermion parity in the system, as the
fermion parity operator is equal to the product of the $\gamma_i$ operators.

\begin{figure}
\centering
\includegraphics[width=3.25in]{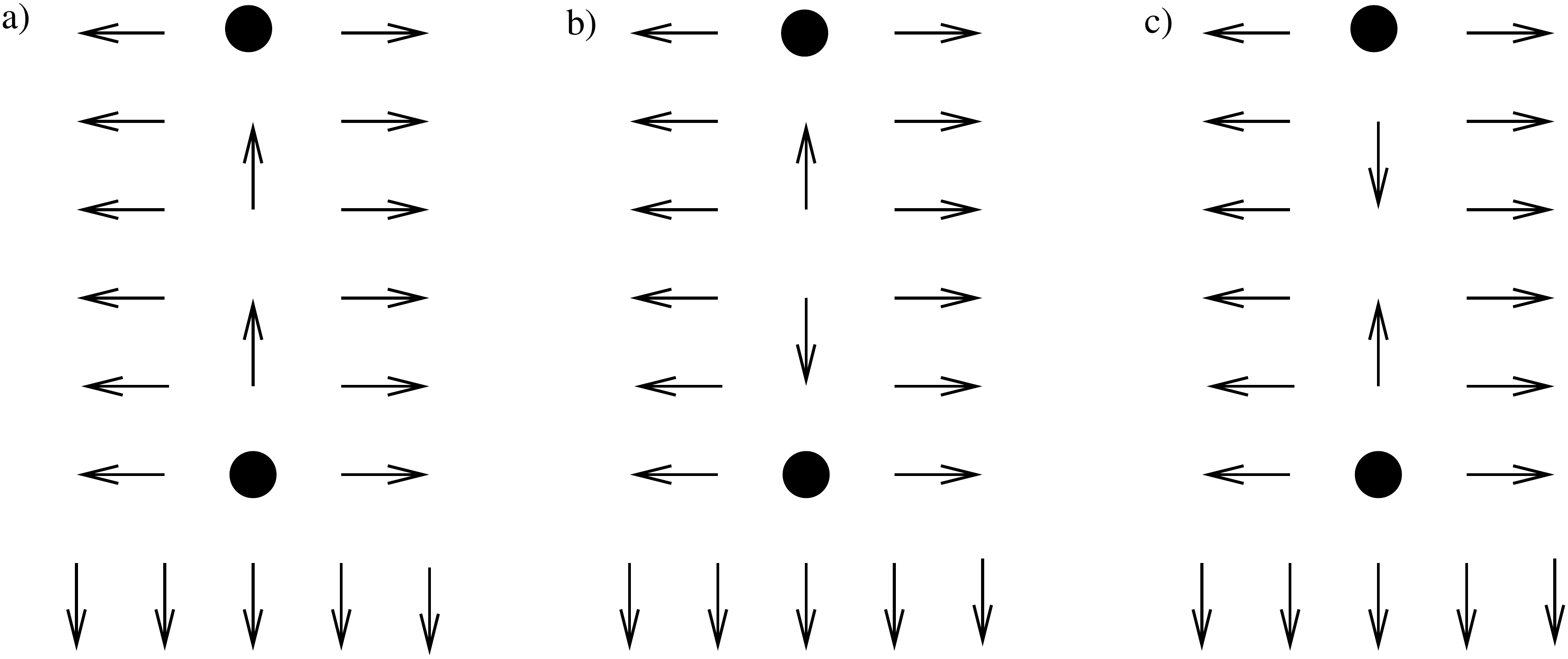}
\caption{(a) Pairs of defects connected by a string.
(b) First rotation applied to the configuration in (a)  Open circle replaces filled circle to indicate sign change of the Majorana mode on the site.  (c)After rotating along the string. (d) Rotating the last site and restoring the string to its original configuration}
\label{figstring}
\end{figure}

We used the ability to change the orientation of a pair of bonds in this
construction.
The fact that one can only change the
orientation of bonds in pairs, and not the
orientation of a single bond,
is related to a global $Z_2$
invariant: the Hamiltonian is an anti-symmetric matrix and the sign
of its Pfaffian cannot be changed without closing the gap.  Changing the direction of a single bond changes the sign of this Pfaffian and so is not possible.

The above discussion left open the question
of which zero changes its sign, i.e. is the
effect of the exchange  $\gamma^{}_i \rightarrow \gamma^{}_j, \gamma^{}_j \rightarrow -\gamma^{}_i$ or  $\gamma^{}_i \rightarrow -\gamma^{}_j, \gamma^{}_j \rightarrow \gamma^{}_i$? The answer is that it depends on how the bonds are returned to their original configuration
after the exchange is completed (which is a clue that
the defects must be understood as extended objects,
not point-like ones). For the bonds to be
restored, one of the defects must be rotated by
$2\pi$; the corresponding zero mode acquires a minus sign.
We will discuss this in greater detail in a later section.
The salient point here
is that the effect of an exchange is a unitary transformation
generated by the operator $e^{\pm\pi \gamma^{}_i \gamma^{}_j /4}$.
This is reminiscent of the representation of braid group
generators for non-Abelian quasiparticles in the quantum
Hall effect \cite{Nayak96c} and vortices in chiral
$p$-wave superconductors \cite{Ivanov01},
namely the braid group representation realized
by Ising anyons \cite{Nayak08}.
But, of course, in 3D the braid group is not
relevant, and the permutation group, which is associated with
point-like particles in $d>2$, does not have non-trivial
higher-dimensional representations consistent with
locality \cite{Doplicher71a,Doplicher71b}. As noted in the introduction,
this begs the question:
what group are the unitary matrices
$e^{\pm\pi \gamma^{}_i \gamma^{}_j /4}$ representing?

\section{Topological Classification of Gapped Free
Fermion Hamiltonians}
\label{sec:free-fermion}

\subsection{Setup of the Problem}

In this section, we will briefly review the topological
classification of translationally-invariant or slowly
spatially-varying free-fermion Hamiltonians following Kitaev's analysis
in Ref. \onlinecite{Kitaev09}. (For a different perspective,
see Schnyder {\it et al.}'s approach in Ref. \onlinecite{Ryu08}.)
The 3D Hamiltonian of the previous section is a specific example which
fits within the general scheme and, by implication,
the 3D non-Abelian statistics which we derived at
the end of the previous section also holds for an entire
class of models into which it can be deformed without
closing the gap. Our discussion will
follow the logic of Milnor's treatment of Bott periodicity
in Ref. \onlinecite{Milnor63}.

Consider a system of $N$ flavors of electrons
${c_j}({\bf k})$ in $d$ dimensions. The flavor index
$j$ accounts for spin as well as the possibility of multiple
bands. Since we will not be assuming charge conservation,
it is convenient to express the complex fermion operators
${c_j}({\bf k})$ in terms of real fermionic operators
(Majorana fermions),
${c_j}({\bf k})=(a_{2j-1}({\bf k})+ia_{2j}({\bf k}))/2$
(the index $j$ now runs from $1$ to $2N$).
The momentum ${\bf k}$ takes values in
the Brillouin zone, which has the topology of the
$d$-dimensional torus $T^d$. The Hamiltonian
may be written in the form
\begin{equation}
\label{eqn:basic}
H = \sum_{i,j,{\bf p}} iA_{ij}({\bf p}){a_i}({\bf p}){a_j}(-{\bf p})
\end{equation}
where, by Fermi statistics, $A_{ij}({\bf p})=-A_{ji}(-{\bf p})$.
Let us suppose that the Hamiltonian
(\ref{eqn:basic}) has an energy gap $2\Delta$,
by which we mean that its eigenvalues $E_{\alpha}(p)$
($\alpha$ is an index labeling the eigenvalues of
$H$) satisfy $|E_{\alpha}(p)|\geq \Delta$.
The basic question which we address in this section
is the following. What topological obstructions
prevent us from continuously deforming one
such gapped Hamiltonian into another?

Such an analysis can apply, as we will see,
not only to free fermion Hamiltonians, but also
to those interacting fermion Hamiltonians which,
deep within ordered phases, are well-approximated
by free-fermion Hamiltonians. (This can include
rather non-trivial phases such as Ising anyons,
but not Fibonacci anyons.) In such settings, the
fermions may be emergent fermionic quasiparticles;
if the interactions between these quasiparticles
are irrelevant in the renormalization-group sense,
then an analysis of free-fermion Hamiltonians
can shed light on the phase diagrams of
such systems. Thus, the analysis of free fermion
Hamiltonians is equivalent to the analysis of
{\it interacting fermion ground states} whose
low-energy quasiparticle excitations are free fermions.

Let us begin by considering a few concrete examples,
in order of increasing complexity.

\subsection{Zero-Dimensional Systems}

First, we analyze
a zero-dimensional system which we will not assume
to have any special symmetry. The Hamiltonian
(\ref{eqn:basic}) takes the simpler form:
\begin{equation}
\label{eqn:zero-dim}
H = \sum_{i,j} iA_{ij}{a_i}{a_j}
\end{equation}
where $A_{ij}$ is a $2N\times 2N$ antisymmetric matrix,
$A_{ij}=-A_{ji}$. Any real antisymmetric matrix can be written
in the form
\begin{equation}
A =  O^T \left( \begin{array}{rrrrr}
0 & -{\lambda_1} &   &  & \\
{\lambda_1} & 0  &    &  &  \\
   &    & 0  & -{\lambda_2}&  \\
   &    & {\lambda_2}  &  0&  \\
   &    &     &    &  \ddots\end{array} \right) O
\end{equation}
where $O$ is an orthogonal matrix and
the $\lambda_i$s are positive. The eigenvalues
of $A$ come in pairs $\pm i\lambda_i$;
thus, the {\it absence of charge conservation} can
also be viewed as the {\it presence of a particle-hole symmetry}.
By assumption, ${\lambda_i}\geq\Delta$ for all $i$.
Clearly, we can continuously deform $A_{ij}$
without closing the gap so that ${\lambda_i}=\Delta$ for all $i$.
(This is usually called `spectrum flattening'.)
Then, we can write:
\begin{equation}
A = \Delta\,\cdot \,{O^T} J O
\end{equation}
where
\begin{equation}
\label{eqn:canonical-J}
J =  \left( \begin{array}{rrrrr}
0 & -1 &   &  & \\
1 & 0  &    &  &  \\
   &    & 0  & -1&  \\
   &    & 1  &  0&  \\
   &    &     &    &  \ddots\end{array} \right)
\end{equation}
The possible choices of $A_{ij}$
correspond to the possible choices of $O\in\text{O}(2N)$,
modulo $O$ which commute with the matrix $J$.
But the set of $O\in\text{O}(2N)$ satisfying ${O^T} J O~=~J$
is U($N$)$\subset$O($2N$).
Thus, the space of all
possible zero-dimensional free fermionic Hamiltonians
with $N$ single-particle energy levels
is topologically-equivalent to the symmetric space O($2N$)/U($N$).

This can be restated in more geometrical terms
as follows. Let us here and henceforth
take units in which $\Delta=1$.
Then the eigenvalues of $A$ are $\pm i$.
If we view the $2N\times 2N$ matrix $A$
as a linear transformation on $\mathbb{R}^{2N}$, then
it defines a complex structure.
Consequently, we can view $\mathbb{R}^{2N}$ as
$\mathbb{C}^{N}$ since multiplication of
$\vec{v}\in \mathbb{R}^{2N}$ by a
complex scalar can be defined as
$(a+ib) \vec{v} \equiv a\vec{v} + b A\vec{v}$.
The set of complex structures on $\mathbb{R}^{2N}$
is given by performing an arbitrary O($2N$) rotation
on a fixed complex structure, modulo the rotations
of $\mathbb{C}^{N}$ which respect the complex structure,
namely U($N$). Thus, once again, we conclude that
the desired space of Hamiltonians is topologically-equivalent
to O($2N$)/U($N$).

What are the consequences of this equivalence?
Consider the simplest case, $N=1$. Then, the
space of zero-dimensional Hamiltonians is topologically-equivalent to
O($2$)/U($1$)$=\mathbb{Z}_2$: there are two classes
of Hamiltonians, those in which the single fermionic level
is unoccupied in the ground state,
$c^\dagger c = (1+i{a_1}{a_2})/2=0$,
and those in which it is occupied. For larger $N$,
O($2N$)/U($N$) is a more complicated space, but it
still has two connected components,
${\pi_0}(\text{O}(2N)/\text{U}(N))=\mathbb{Z}_2$,
so that there are two classes of free fermion Hamiltonians,
corresponding to even or odd numbers of occupied
fermionic levels in the ground state.

Suppose, now, that we restrict ourselves to time-reversal
invariant systems and, furthermore, to those time-reversal
invariant systems which satisfy ${T^2}=-1$, where
$T$ is the anti-unitary operator generating time-reversal.
Then, following Ref. \onlinecite{Kitaev09}, we
write $T {a_i} T^{-1} = ({J_1})_{ij}{a_j}$. The matrix
${J_1}$ is antisymmetric and satisfies ${J_1^2}=-1$.
$T$-invariance of the Hamiltonian requires
 \begin{equation}
{J_1} A = -A {J_1}
\end{equation}
Since ${J_1}$ is antisymmetric and satisfies ${J_1^2}=-1$,
its eigenvalues are $\pm i$.
Therefore, ${J_1}$ defines a complex structure
on $\mathbb{R}^{2N}$ which may, consequently,
be viewed as $\mathbb{C}^{N}$.
Now consider $A$, which is also
antisymmetric and satisfies ${A^2}=-1$, in addition
to anticommuting with $J_1$. It defines
a quaternionic structure on $\mathbb{C}^{N}$
which may, consequently, be viewed as
$\mathbb{H}^{N/2}$.
Multplication of $\vec{v}\in \mathbb{R}^{2N}$
by a quaternion can be defined as:
$(a+bi+cj+dk)\vec{v}\equiv a\vec{v}+b{J_1}\vec{v}
+cA\vec{v}+d {J_1}A\vec{v}$.
The possible choices of $A$
can be obtained from a fixed one
by performing rotations of $\mathbb{C}^{N}$,
modulo those rotations which respect the quaternionic structure,
namely Sp($N/2$).
Thus, the set of time-reversal-invariant zero-dimensional free
fermionic Hamiltonians with ${T^2}=-1$
is topologically-equivalent to U($N$)/Sp($N/2$).
Since ${\pi_0}(\text{U}(N)/\text{Sp}(N/2))=0$,
any such Hamiltonian can be continuously
deformed into any other. This can be understood
as a result of Kramers
doubling: there must be an even number
of fermions in the ground state so the division
into two classes of the previous case does not exist here.

\subsection{2D Systems: $T$-breaking
superconductors}

Now, let us consider systems in more than
zero dimensions. Once again, we will assume
that charge is not conserved, and we will also
assume that time-reversal symmetry is not
preserved. For the sake of
concreteness, let us consider a single band of
spin-polarized electrons on a two-dimensional lattice.
Let us suppose that the electrons condense into
a (fully spin-polarized) $p_x$-wave superconductor.
For fixed superconducting order parameter,
the low-energy theory is a free fermion Hamiltonian
for gapless fermionic excitations at the nodal
points $\pm\vec{k}_{F}\equiv (0,\pm p_F)$. We now ask the question,
what other order parameters could develop
which would fully gap the fermions? For fixed values
of these order parameters, we have a free fermion
Hamiltonian. Thus, these different possible order
parameters correspond to different
possible gapped free fermion Hamiltonians.

The low-energy Hamiltonian of a fully
spin-polarized $p_x$-wave superconductor
can be written in the form:
\begin{equation}
H = {\psi^\dagger}\left( i{v_\Delta}{\partial_x}{\tau_x}
+ i{v_F}{\partial_y}{\tau_z}\right)\psi
\end{equation}
where $v_F$, ${v_\Delta}$ are, respectively, the
Fermi velocity and slope of the gap near the node.
The Pauli matrices $\tau$ act in the particle-hole
space:
\begin{eqnarray}
\psi(k)
\equiv \left(
\begin{array}{c}
c_{\vec{k}_{F}+\vec{k}}\\
c^{\dagger}_{-\vec{k}_{F}+\vec{k}}
\end{array}
\right)
\end{eqnarray}
This Hamiltonian is invariant under the U(1):
$\psi\rightarrow e^{i\theta} \psi$ which corresponds
to conservation of momentum in the $p_y$ direction
(not to charge conservation). Since we will be considering
perturbations which do not respect this symmetry,
it is convenient to introduce Majorana fermions
${\chi^{}_1}$, ${\chi^{}_2}$ according to
$\psi={\chi^{}_1}+i{\chi^{}_2}$. Then
\begin{equation}
H = i{\chi^{}_a}\left( {v_\Delta}{\partial_x}{\tau_x}
+ {v_F}{\partial_y}{\tau_z}\right){\chi^{}_a}
\end{equation}
with $a=1,2$. Note that we have suppressed the
particle-hole index on which the Pauli matrices
$\tau$ act. Since $\chi^{}_1$, $\chi^{}_2$ are each
a 2-component real spinor, this model has
4 real Majorana fields.

We now consider the possible mass terms which
we could add to make this Hamiltonian fully gapped:
\begin{equation}
\label{eqn:Dirac+mass}
H = i{\chi^{}_a}\left( {v_\Delta}{\partial_x}{\tau_x}
+ {v_F}{\partial_y}{\tau_z}\right){\chi^{}_a} + i{\chi^{}_a}M_{ab}{\chi^{}_b}
\end{equation}
If we consider the possible order parameters
which could develop in this system, it is clear
that there are only two choices: an imaginary
superconducting order parameter $ip_y$ (which
breaks time-reversal symmetry) and
charge density-wave order (CDW). These take the form:
\begin{equation}
M^{ip_y}_{ab} = \Delta_{ip_y}\,i{\tau^y} \delta_{ab}
\end{equation}
and
\begin{equation}
\label{eqn:CDW-mass-eg}
M^{CDW}_{ab} = \rho^{}_{2k_F}
{\tau^y} \left(\cos\theta\,  \mu^z_{ab}+
\sin\theta\, \mu^x_{ab}\right)
\end{equation}
where $\mu^{x,z}$ are Pauli matrices and $\theta$
is an arbitrary angle.
For an analysis of the possible mass terms
in the more complex situation
of graphene-like systems, see, for instance,
Ref. \onlinecite{Ryu09}.

Let us consider the space of mass terms
with a fixed energy gap $\Delta$ which is
the same for all 4 of the Majorana fermions in the model
(i.e. a flattened mass spectrum).
An arbitrary gapped Hamiltonian can be continuously
deformed to one which satisfies this condition.
Then we can have
$\Delta_{ip_y}=\pm\Delta$,
$\rho^{}_{2k_F}=0$
or $\rho^{}_{2k_F}=\Delta$, $\Delta_{ip_y}=0$
(in the latter case, arbitrary $\theta$ is allowed).
If both order
parameters are present, then the energy gap
is not the same for all fermions.
It's not that there's anything wrong with
such a Hamiltonian -- indeed, one can imagine a
system developing both kinds of order.
Rather, it is that such a Hamiltonian
can be continuously deformed to one with
either $\Delta_{ip_y}=0$ or $\rho^{}_{2k_F}=0$
without closing the gap. For instance, if
$\Delta_{ip_y}>\rho^{}_{2k_F}$,
then the Hamiltonian can be continuously deformed to
one with $\rho^{}_{2k_F}=0$. (However if we try to deform
it to a Hamiltonian with $\Delta_{ip_y}=0$, the gap will
close at $\Delta_{ip_y}=\rho^{}_{2k_F}$.)
Hence, we conclude that the space of possible mass terms is
topologically-equivalent to the disjoint union
U($1$)$\cup\mathbb{Z}_2$: a single one-parameter family
and two disjoint points.

Since ${\pi_0}(\text{U}(1)\cup\mathbb{Z}_2) = \mathbb{Z}_3$,
there are three distinct classes of quadratic Hamiltonians
for $4$ flavors of Majorana fermions in $2D$.
The one-parameter family of CDW-ordered
Hamiltonians counts as a single class since
they can be continuously deformed into each other.
The parameter $\theta$ is the phase of the CDW,
which determines whether the density is maximum
at the sites, the midpoints of the bonds, or somewhere
in between. It is important to keep in mind, however,
that, although there is no topological obstruction to
continuously deforming one $\theta$ into another,
there may be an energetic penalty which makes it costly.
For instance, the coupling of the system to the lattice may prefer
some particular value of $\theta$.
The classification discussed here accounts only for topological
obstructions; the possibility of energetic barriers must
be analyzed by different methods.

We can restate the preceding analysis in the following,
more abstract language. This formulation will
make it clear that we haven't overlooked a
possible mass term and will generalize to more
complicated free fermion models.
Let us write
${\gamma^{}_1}={\tau_x}\delta_{ab}$,
${\gamma^{}_2}={\tau_z}\delta_{ab}$.
Then
\begin{equation}
\{{\gamma^{}_i},{\gamma^{}_j}\}=2\delta_{ij}
\end{equation}
The Dirac Hamiltonian for $N=4$
Majorana fermion fields takes the form
\begin{equation}
\label{eqn:Dirac-eqn-generic}
H = i\chi({\gamma_i}{\partial_i} + M)\chi
\end{equation}
The matrix $M$ plays the role that the
matrix $A$ did in the zero-dimensional case.
As in that case, we assume a flattened spectrum
which here means that each Majorana fermion
field has the same gap and that this gap is
equal to $1$. (It does {\it not} mean that the energy
is independent of the momentum ${\bf k}$.)
In order to satisfy this, we must require that
\begin{equation}
\{{\gamma^{}_i},M\}=0 {\hskip 0.3 cm} \text{and}
{\hskip 0.3 cm} {M^2}=-1
\end{equation}

Note that it is customary to write the Dirac Hamiltonian
in a slightly different form,
\begin{equation}
\label{eqn:Dirac-to-conventional}
H = \overline{\psi}(i{\gamma_i}{\partial_i} + m)\psi
\end{equation}
which can be massaged into the form of (\ref{eqn:Dirac-eqn-generic})
using $ \overline{\psi}=\psi^\dagger \gamma_0$:
\begin{eqnarray}
H &=& {\psi^\dagger}(i{\gamma_0}{\gamma_i}{\partial_i} +
m{\gamma_0})\psi\cr
&=&  {\psi^\dagger}(i{\alpha_i}{\partial_i} +
m\beta)\psi\cr
&=&i{\psi^\dagger}({\alpha_i}{\partial_i} -
im\beta)\psi
\end{eqnarray}
where ${\alpha_i}={\gamma_0}{\gamma_i}$ and
$\beta={\gamma_0}$. Thus, if we write
${\gamma_i}\equiv {\alpha_i}$ and $M\equiv
-im\beta$ and consider Majorana fermions
(or decompose Dirac fermions into Majoranas),
we recover (\ref{eqn:Dirac-eqn-generic}).
We have used the form (\ref{eqn:Dirac-eqn-generic})
so that it is analogous to (\ref{eqn:zero-dim}), with
$({\gamma_i}{\partial_i} + M)$ replacing $A_{ij}$
and the $i$ pulled out front. Then, the matrix $M$
determines the gaps of the various modes
in the same way as $A$ does in the zero-dimensional
case. Similarly, assuming a `flattened' spectrum
leads to the condition ${M^2}=-1$.

How many ways can we
choose such an $M$? Since ${\gamma^2_2}=1$,
its eigenvalues are $\pm 1$. Hence, viewed as a
linear map from $\mathbb{R}^4$ to itself, this matrix
divides $\mathbb{R}^4$ into two 2D subspaces
$\mathbb{R}^4={X_+}\oplus{X_-}$
with eigenvalue $\pm 1$ under
${\gamma^{}_2}$, respectively.
For ${\gamma^{}_2}={\tau_z}\delta_{ab}$,
this is trivial:
\begin{equation}
{X_+}= \text{span}\left\{
\left(\scriptstyle{\begin{array}{c}
1\\ 0 \end{array}} \right) \otimes  \left(\scriptstyle{\begin{array}{c}
1\\ 0 \end{array}}\right),
 \left(\scriptstyle{\begin{array}{c}
1\\ 0 \end{array}} \right) \otimes  \left(\scriptstyle{\begin{array}{c}
0\\ 1 \end{array}}\right)\right\}
\end{equation}
where $\tau_z$ acts on the
first spinor and the second spinor is indexed by $a=1,2$,
i.e. is acted on by the Pauli matrices $\mu^{x,z}$
in (\ref{eqn:CDW-mass-eg}).
This construction generalizes straightforwardly
to arbitrary numbers $N$ of Majorana fermions, which is
why we use it now.

Now ${\gamma^{}_1} M$ commutes with
${\gamma^{}_2}$ and satisfies ${({\gamma^{}_1} M)^2}=1$.
Thus, it maps ${X_+}$ to itself and defines
subspaces ${X_+^1}, {X_+^2}$
with eigenvalue $\pm 1$ under ${\gamma^{}_1} M$
(and equivalently for ${X_-}$). ${X_+}$ can decomposed
into ${X_+^1}\oplus{X_+^2}={X_+}$.
Choosing $M$ is thus equivalent to choosing
a linear subspace ${X_+^1}$ of ${X_+}$.

This can be divided into three cases.
If ${\gamma^{}_1} M$
has one positive eigenvalue and one negative one
when acting on ${X_+}$ then the space of possible choices
of ${\gamma^{}_1} M$ is equal to the space of 1D linear subspaces
of $\mathbb{R}^2$, which is simply U(1). If, on the other hand,
${\gamma^{}_1} M$ has two positive eigenvalues, then
there is a unique choice, which is simply
$M={\gamma^{}_1}{\gamma^{}_2}$. If ${\gamma^{}_1} M$
has two negative eigenvalues, then
there is again a unique choice,
$M=-{\gamma^{}_1}{\gamma^{}_2}$.
Therefore, the space of possible $M$s is topologically
equivalent to $\text{U}(1)\cup\mathbb{Z}_2$.

Now, suppose that we have $2N$ Majorana fermions.
Then $\gamma^{}_2$ defines $N$-dimensional
eigenspaces
${X_+},{X_-}$ such that
$\mathbb{R}^{2N}={X_+}\oplus{X_-}$
and ${\gamma^{}_1} M$ defines eigenspaces
of ${X_+}$: ${X_+^1}\oplus{X_+^2}={X_+}$.
If ${\gamma^{}_1} M$ has $k$ positive eigenvalues
and $N-k$ negative ones, then the space
of possible choices of ${\gamma^{}_1}M$
is O(N)/O(k)$\times$O(N-k), i.e we can
take the restriction of ${\gamma^{}_1}M$
to ${X_+}$ to be of the form
\begin{equation}
{\gamma^{}_1}M = O^T  \left( \begin{array}{cccccc}
1 &  &   &  & & \\
 & \ddots  &    &  &  &\\
   &    & 1  & &  &\\
   &    &   &  -1&  &\\
   &    &     &    &\ddots &\\
      &    &     &    &  & -1  \end{array} \right) O
\end{equation}
with $k$ diagonal entries equal to $+1$
and $N-k$ entries equal to $-1$. Thus, the space of Hamiltonians for
$N$ flavors of free Majorana fermions
is topologically equivalent to
\begin{equation}
\label{eqn:BO-def}
{\cal M}_{2N} = \bigcup_{k=0}^{N} \text{O}(N)/(\text{O}(k)\times
\text{O}(N-k))
\end{equation}
However, since ${\pi_0}(\text{O}(N)/(\text{O}(k)\times\text{O}(N-k)))
=0$, independent of $k$ (note that $0$ is the group with a single
element, not the empty set $\emptyset$),
${\pi_0}({\cal M}_{2N})=\mathbb{Z}_{N+1}$.

In the model analyzed above, we had
only a single spin-polarized band of electrons.
By increasing the number of bands and allowing
both spins, we can increase the number of
flavors of Majorana fermions. In principle, the number of
bands in a solid is infinity. Usually, we can introduce
a cutoff and restrict attention to a few bands
near the Fermi energy. However, for a purely
topological classification, we can ignore energetics
and consider all bands on equal footing.
Then we can take $N\rightarrow\infty$,
so that ${\pi_0}({\cal M}_\infty)=\mathbb{Z}$.
This classification permits us to
deform Hamiltonians into each
other so long as there is no topological obstruction,
with no regard to how energetically costly
the deformation may be. Thus, the $2N=4$ classification
which we discussed above can perhaps be viewed as
a `hybrid' classification which looks for topological
obstructions in a class of models with a fixed set of
bands close to the Fermi energy.

But even this point of view is not really
natural. The discussion above took as
its starting point an expansion about
a $p_x$ superconductor; the $p_x$ superconducting
order parameter was assumed to be large and
fixed while the $ip_y$ and CDW order parameters
were assumed to be small. In other words, we
assumed that the system was at a point in parameter
space at which the gap, though non-zero, was small
at two points in the Brillouin zone (the intersection
points of the nodal line in the $p_x$ superconducting
order parameter with the Fermi surface). This allowed
us to expand the Hamiltonian about these points
in the Brillouin zone and write it in Dirac form.
And this may, indeed, be reasonable in a system
in which $p_x$ superconducting order is strong
(i.e. highly energetically-favored) and other orders
are weak. However, a topological classification should
allow us to take the system into regimes in which
$p_x$ superconductivity is small and other orders
are large. Suppose, for instance, that we took
our model of spin-polarized electrons (which
we assume, for simplicity, to be at half-filling
on the square lattice) and went into a regime in
which there was a large
$d_{{x^2}-{y^2}}$-density-wave (or `staggered
flux') order parameter \cite{Nayak00b}
$\langle c^\dagger_{{\bf k}+{\bf Q}} c_{\bf k} \rangle
= i\Phi(\cos{k_x}a-\cos{k_y}a)$, where $a$ is the lattice
constant and $\Phi$ is the magnitude of the order parameter.
With nearest-neighbor hopping only,
the energy spectrum is $E^2_{\bf k}=
(2t)^2 (\cos{k_x}a+\cos{k_y}a)^2
+ \Phi^2 (\cos{k_x}a-\cos{k_y}a)^2$.
Thus, the gap vanishes at 4 points,
$(\pm\pi/2,\pm\pi/2)$ and $(\mp\pi/2,\pm\pi/2)$.
The Hamiltonian can be linearized in the
vicinity of these points:
\begin{multline}
H = {\psi^\dagger_1}\left( i{v_\Delta}{\partial_x}{\tau_x}
+ i{v_F}{\partial_y}{\tau_z}\right){\psi_1}\\
+ {\psi^\dagger_2}\left( i{v_\Delta}{\partial_y}{\tau_x}
+ i{v_F}{\partial_x}{\tau_z}\right){\psi_2}
\end{multline}
where $v_F$, ${v_\Delta}$ are, respectively, the
Fermi velocity and slope of the gap near the nodes;
the subscripts 1,2 refer to the two sets of nodes
$(\pm\pi/2,\pm\pi/2)$ and $(\mp\pi/2,\pm\pi/2)$;
and $\psi_{A}$, $A=1,2$ are defined by:
\begin{eqnarray}
\psi_{1,2}(k)
\equiv \left(
\begin{array}{c}
c_{(\pi/2,\pm\pi/2)+\vec{k}}\\
c_{(-\pi/2,\mp\pi/2)+\vec{k}}
\end{array}
\right)
\end{eqnarray}
If we introduce Majorana fermions
$\psi_{A}=\chi^{}_{A1}+i\chi^{}_{A2}$, then
we can write this Hamiltonian with possible
mass terms as:
\begin{multline}
H = i\chi^{}_{1a}\left( {v_\Delta}{\partial_x}{\tau_x}
+ {v_F}{\partial_y}{\tau_z}\right)\chi^{}_{1a}\\
+ i\chi^{}_{2a}\left( {v_\Delta}{\partial_y}{\tau_x}
+ {v_F}{\partial_x}{\tau_z}\right)\chi^{}_{2a}\\
+ i\chi^{}_{Aa}\, M_{Aa,Bb} \,  \chi^{}_{Bb}
\end{multline}
We have suppressed the spinor indices
(e.g. $\chi_{11}$ is a two-component spinor);
with these indices included, $M_{Aa,Bb}$
is an $8\times 8$ matrix. However, in order for
the gap to be the same for all flavors, the
mass matrix must anticommute with $\tau_{x,z}$.
Thus, $M_{Aa,Bb}={\tau_y} {\tilde M}_{Aa,Bb}$.
The matrix ${\tilde M}_{Aa,Bb}$ can have
$0,1,2,3$, or $4$ eigenvalues equal to $+1$
(with the rest being $-1$). The spaces
of such mass terms are, respectively,
$0$, $\text{O}(4)/(\text{O}(1)\times\text{O}(3))$,
$\text{O}(4)/(\text{O}(2)\times\text{O}(2))$,
$\text{O}(4)/(\text{O}(3)\times\text{O}(1))$,
and $0$. Mass terms with $0$ or $4$ eigenvalues
equal to $+1$ correspond physically to
$\pm id_{xy}$-density wave order,
$\langle c^\dagger_{{\bf k}+{\bf Q}} c_{\bf k} \rangle
= \pm\sin{k_x}a\,\sin{k_y}a$.
Mass terms with $2$ eigenvalues
equal to $+1$ correspond physically to
superconductivity, to $Q'=(\pi,0)$ CDW order,
and to linear combinations of the two.
Mass terms with $1$ or $3$ eigenvalues
equal to $+1$ correspond to
(physically unlikely) hybrid orders with, for instance,
superconductivity at $(\pm\pi/2,\pm\pi/2)$ and
$\pm id_{xy}$-density wave order at $(\pm\pi/2,\mp\pi/2)$.
Clearly, this is the $2N=8$ case of the general
classification discussed above. Thus, the same
underlying physical degrees of freedom -- a single
band of spin-polarized electrons on a
square lattice -- can correspond to either
$2N=4$ or $2N=8$, depending on where
the system is in parameter space. One
can imagine regions of parameter space where
the gap is small at an arbitrary number $N$
of points. Thus, if we restrict ourselves to systems
with a single band, then different regions of the parameter
space (with different numbers of points at which the gap is
small) will have very different topologies.
Although such a classification may be a necessary evil
in some contexts, it is far preferable, given the choice,
to allow topology to work unfettered by energetics.
Then, we can consider a large number $n$ of bands.
Suppose that the gap becomes small at $r$ points
in the Brillouin zone in each band. Then, the low-energy
Hamiltonian takes the Dirac form for $2N=2rn$
Majorana fermion fields.
As we will see below, if $N$ is sufficiently large,
the topology of the space of possible mass terms
will be independent of $N$. Consequently,
for $n$ sufficiently large, the topology of the space
of possible mass terms will be independent of $r$.
In other words, we are in the happy situation
in which the topology of the space of Hamiltonians
will be the same in the vicinity of any gap closing.
But any gapped Hamiltonian can be continuously
deformed so that the gap becomes small at some
points in the Brillouin zone. Thus, the problem of classifying
gapped free fermion Hamiltonians in $d$-dimensions
is equivalent to the problem of classifying possible mass terms
in a generic $d$-dimensional Dirac Hamiltonian
so long as the number of bands is sufficiently large \cite{Kitaev09}.
This statement can be made more precise and
put on more solid mathematical footing using ideas
which we discuss in Appendix \ref{sec:dimension}.

\subsection{Classification of Topological Defects}

The topological classification described above
holds not only for classes of translationally-invariant
Hamiltonians such as (\ref{eqn:Dirac-eqn-generic}),
but also for topological defects within a class.
Suppose, for instance, that we consider
(\ref{eqn:Dirac-eqn-generic}) with a mass
which varies slowly as the origin is encircled
at a great distance. We can ask whether such a
Hamiltonian can be continuously deformed into
a uniform one. In a system in which the mass term
is understood as arising as a result of some
kind of underlying ordering such as superconductivity or
CDW order, we are simply talking about topological
defects in an ordered media, but with the caveat that
the order parameter is allowed to explore a very large
space which may include many physically
distinct or unnatural orders, subject only to the condition that
the gap not close.

Let us, for the sake of concreteness,
assume that we have a mass term with $N/2$
positive eigenvalues when restricted to the $+1$ eigenspace of
$\gamma^{}_2$. (For $N$
large, the answer obviously cannot depend on the
number of positive eigenvalues $k$ so long
as $k$ scales with $N$. Thus, we will denote the
space ${\cal M}_{2N}$ defined in Eq. \ref{eqn:BO-def}
by $\mathbb{Z}\times\text{O}(N)/(\text{O}(N/2)\times\text{O}(N/2))$
where the integers in $\mathbb{Z}$ correspond to the number
of positive eigenvalues of the mass term when restricted
to the $+1$ eigenspace of $\gamma^{}_2$.)
Then $M(r=\infty,\theta)$ defines a loop in
$\text{O}(N)/(\text{O}(N/2)\times\text{O}(N/2))$
which cannot be continuously unwound if it
corresponds to a nontrivial element of
${\pi_1}(\text{O}(N)/(\text{O}(N/2)\times\text{O}(N/2)))$.

To compute ${\pi_1}(\text{O}(N)/(\text{O}(N/2)\times
\text{O}(N/2)))$, we parametrize
$\text{O}(N)/(\text{O}(N/2)\times\text{O}(N/2))$
by symmetric matrices $K$ which satisfy ${K^2}=1$
and $\text{tr}(K)=0$. (Such matrices decompose
$\mathbb{R}^{N}$ into their
$+1$ and $-1$ eigenspaces:
$\mathbb{R}^{N}={V_+}\oplus{V_-}$.
$K$ can be written in the form:
$K={O^T} {K_0} O$,
where $K_0$ has $N/2$ diagonal entries
equal to $+1$ and $N/2$ equal to $-1$, i.e
${K_0}=\text{diag}(1,\ldots,1,-1,\ldots,-1)$.)
Note that any such $K$ is itself an orthogonal matrix,
i.e. an element of O$(N)$;
thus $\text{O}(N)/(\text{O}(N/2)\times\text{O}(N/2))$
can be viewed as a submanifold of O$(N)$
in a canonical way. Consider a curve $L(\lambda)$
in $\text{O}(N)/(\text{O}(N/2)\times\text{O}(N/2))$
with $L(0)=K$ and $L(\pi)=-K$.
We will parametrize it as $L(\lambda)=K\, e^{\lambda A}$,
where $A$ is in the Lie algebra of $\text{O}(N)$.
In order for this loop to remain in
$\text{O}(N)/(\text{O}(N/2)\times\text{O}(N/2))$,
we need ${(K\, e^{\lambda A})^2}=1$.
Since ${(K\, e^{\lambda A})^2}=K\, e^{\lambda A} K\, e^{i\lambda A}
= e^{i\lambda K A K} \, e^{\lambda A}$, this condition
implies that $KA=-AK$.
In order to have $L(\pi)=-K$, we need ${A^2}=-1$.
Such a curve is, in fact, a minimal
geodesic from $K$ to $-K$.
Each such geodesic can be represented by its
midpoint $L(\pi/2)=KA$, so the space of such geodesics is
equivalent to the space of matrices $A$ satisfying
${A^2}=-1$ and $KA=-AK$. As discussed in
Ref. \onlinecite{Milnor63}, the space of minimal geodesics
is a good enough approximation to the entire space of loops
(essentially because an arbitrary loop can
be deformed to a geodesic) that we can compute
$\pi_1$ from the space of minimal geodesics
just as well as from the space of loops. Thus,
the loop space of $\text{O}(N/2)/(\text{O}(k)\times\text{O}(N/2-k))$
is homotopically equivalent to the space of matrices $A$ satisfying
${A^2}=-1$ and $KA=-AK$.
Since it anticommutes with $K$, $KA$
maps the $+1$ eigenspace of $K$ to
the $-1$ eigenspace. It is clearly a length-preserving map
since ${(KA)^2}=1$
and, since the $\pm 1$ eigenspaces of $K$ are isomorphic to
$\mathbb{R}^{N/2}$, $KA$ defines an element of
O$(N/2)$. Thus a loop in
$\text{O}(N)/(\text{O}(N/2)\times\text{O}(N/2))$
corresponds to an element of O$(N/2)$
or, in other words:
\begin{equation}
\label{eqn:Bott-step-1}
{\pi_1}(\text{O}(N)/(\text{O}(N/2)\times\text{O}(N/2)))=
{\pi_0}(\text{O}(N/2)).
\end{equation}
The latter group is simply $\mathbb{Z}_2$
since $\text{O}(N/2)$ has two connected components:
(1) pure rotations and (2) rotations combined with a reflection.

It might come as a surprise that we find a $\mathbb{Z}_2$
classification for point-like defects in two dimensions.
Indeed, if we require that the superconducting order parameter
has fixed amplitude at infinity, then vortices
of arbitrary winding number are stable and we have a
$\mathbb{Z}$ classification. However,
in the classification discussed here, we require a weaker
condition be satisfied: that the fermionic gap remain constant.
Consequently, a vortex configuration
of even winding number can be unwound without
closing the free fermion gap by, for instance, `rotating'
superconductivity into CDW order.

\subsection{3D Systems with No Symmetry}

With these examples under our belts, we now turn to
the case which is of primary interest in this paper:
free fermion systems in three dimensions
without time-reversal or charge-conservation symmetry.
We consider the Dirac Hamiltonian in $3D$
for an $2N$-component Majorana fermion field
$\chi$:
\begin{equation}
\label{eqn:3D-Dirac+mass}
H = i\chi\left( \partial_{1}{\gamma^{}_1}
+ \partial_{2}{\gamma^{}_2} +
\partial_{3}{\gamma^{}_3}\right)\chi + i{\chi}M{\chi}
\end{equation}
In the previous section, we discussed
a lattice model which realizes (\ref{eqn:3D-Dirac+mass})
in its continuum limit with
$2N=8$. Different mass terms
correspond to different quadratic perturbations
of this model which open a gap (which can be viewed
as order parameters which we are turning on at the mean-field level).
We could classify such terms by considering,
from a physical perspective, all such ways of opening a gap.
However, we will instead determine the topology
of the space of mass terms (and, thereby, the space
of gapped free fermion Hamiltonians) by
the same mathematical methods by which we analyzed the
$2D$ case.

Since ${\gamma_1^2}=1$ and has vanishing
trace, this matrix decomposes $\mathbb{R}^{2N}$
into its $\pm 1$ eigenspaces:
$\mathbb{R}^{2N}={X_+} \oplus {X_-}$.
Now ${({\gamma^{}_2}{\gamma^{}_3})^2}=-1$
and $[{\gamma^{}_1},{\gamma^{}_2}{\gamma^{}_3}]=0$.
Therefore, ${\gamma^{}_2}{\gamma^{}_3}$ is a
complex structure on ${X_+}$ (and also on ${X_-}$),
i.e. we can define multiplication of vectors
$\vec{v}\in{X_+}$ by complex scalars
according to $(a+bi)\vec{v}\equiv
a\vec{v} + {\gamma^{}_2}{\gamma^{}_3}\vec{v}$.
(Consequently, we can view ${X_+}$ as
$\mathbb{C}^{N/2}$.)
Now, consider a possible mass term $M$,
with ${M^2}=-1$. ${({\gamma^{}_2}M)^2}=1$
and $[{\gamma^{}_1},{\gamma^{}_2}M]=0$.
Let $Y$ be the subspace of ${X_+}$
with eigenvalue $+1$ under ${\gamma^{}_2}M$.
Since $\{{\gamma^{}_2}M,{\gamma^{}_2}{\gamma^{}_3}\}=0$,
${\gamma^{}_2}{\gamma^{}_3}Y$ is the subspace of ${X_+}$
with eigenvalue $-1$ under ${\gamma^{}_2}M$.
In other words, ${X_+}=Y\oplus{\gamma^{}_2}{\gamma^{}_3}Y$,
i.e. $Y$ is a real subspace of ${X_+}$.
Hence, the space of choices of $M$ is the space
of real subspaces $Y\subset{X_+}$ (or,
equivalently, of real subspaces
$\mathbb{R}^{N/2}\subset\mathbb{C}^{N/2}$).
Given any fixed real subspace $Y\subset{X_+}$,
we can obtain all others by performing $\text{U}(N/2)$
rotations of ${X_+}$, but two such rotations
give the same real subspace if they differ only
by an $\text{O}(N/2)$ rotation of $Y$. Thus,
the space of gapped  Hamiltonians
for $2N$ free Majorana fermion fields
in $3D$ with no symmetry is topologically-equivalent
to $\text{U}(N/2)/\text{O}(N/2)$. In the remaining
sections of this paper, we will be discussing
topological defects in such systems and their motions.

\subsection{General Classification and Bott Periodicity}
\label{sec:bott}

Before doing so, we pause for a minute
to consider the classification in other dimensions
and in the presence of symmetries such as
time-reversal and charge conservation.
We have seen that systems with no symmetry
in $d=0,2,3$ are classified by the spaces
$\text{O}(2N)/\text{U}(N)$,
$\mathbb{Z}\times
\frac{\text{O}(N)}{\text{O}(N/2)\times\text{O}(N/2)}$,
and $\text{U}(N/2)/\text{O}(N/2)$. By similar methods,
it can be shown that the $d=1$ case is
classified by $\text{O}(N)$. As we have seen,
increasing the spatial dimension increases the
number of $\gamma$ matrices by one.
The problem of choosing $\gamma_{1},\ldots,\gamma_{d}$
satisfying $\{{\gamma_i},{\gamma_j} \}=2\delta_{ij}$
and $M$ which anti-commutes with the $\gamma_i$s
and squares to $-1$ leads us to subspaces of
$\mathbb{R}^{2N}$ of smaller and smaller
dimension, isometries between these spaces,
or complex of quaternionic structures on these spaces.
This leads the progression of spaces
in the top row of Table \ref{tbl:classifying}.

At the same time, we have seen
that a time-reversal-invariant system in $d=0$
is classified by $\text{U}(N)/\text{Sp}(N/2)$.
Suppose that we add a discrete anti-unitary symmetry
$S i S^{-1} =-i$ defined by
\begin{equation}
\label{eqn:symmetries}
S {a_i} S^{-1} = ({J})_{ij}{a_j}
\end{equation}
which squares to ${J^2}=-1$.
It must anti-commute with the mass term
\begin{equation}
\label{eqn:anti-comm}
\{J,M\}=0
\end{equation}
in order to ensure invariance under the symmetry,
so choosing a $J$
amounts to adding a complex structure,
which leads to the {\it opposite} progression of
classifying spaces.
Consider, as an example of the preceding statements,
a time-reversal invariant system in $d=3$.
Then time-reversal symmetry $T$ is an
example of a symmetry generator $J$
discussed in the previous paragraph.
We define a real subspace $Y\subset{X_+}$,
in a similar manner as above, but now
as the subspace of ${X_+}$
with eigenvalue $+1$ under ${\gamma^{}_2}T$,
rather than under ${\gamma^{}_2}M$. Once again,
${X_+}=Y\oplus{\gamma^{}_2}{\gamma^{}_3}Y$.
Now, $\{{\gamma^{}_3}M,{\gamma^{}_2}T\}=0$,
and ${({\gamma^{}_3}M)^2}=1$, so the
$+1$ eigenspace of ${\gamma^{}_3}M$
is a linear subspace of $Y$.
The set of all such linear subspaces is
$\mathbb{Z}\times
\frac{\text{O}(N/2)}{\text{O}(N/4)\times\text{O}(N/4)}$.
But this is the same classifying space as for a system
with no symmetry in $d=2$ (apart from a reduction
of $N$ by a factor of $2$). Thus, we are led
to the list of classifying spaces for gapped
free fermion Hamiltonians
in Table \ref{tbl:classifying}.

\begin{table*}
\begin{tabular}{c | c c c c c c c}
dim.: &  0    &   1 & 2 & 3 & 4 & \ldots\\
\hline\hline
SU($2$), $T$, $Q$& $\mathbb{Z}\times
 \frac{\text{O}(N)}{\text{O}(N/2)\times\text{O}(N/2)}$
 & $\text{U}(N/2)/\text{O}(N/2)$ & $\text{Sp}(N/4)/\text{U}(N/4)$ &
 $\text{Sp}(N/8)$ & $\mathbb{Z}\times\frac{\text{Sp}(N/8)}
 {\text{Sp}(N/16)\times\text{Sp}(N/16)}$ \ldots\\
SU($2$), $T$, $Q$, $\chi$&  $\text{O}(N/4)$& $\mathbb{Z}\times
 \frac{\text{O}(N/4)}{\text{O}(N/8)\times\text{O}(N/8)}$
 & $\text{U}(N/8)/\text{O}(N/8)$ & $\text{Sp}(N/16)/\text{U}(N/16)$ &
 $\text{Sp}(N/32)$ &\ldots\\
no symm.& $\text{O}(2N)/\text{U}(N)$ &  $\text{O}(N)$& $\mathbb{Z}\times
 \frac{\text{O}(N)}{\text{O}(N/2)\times\text{O}(N/2)}$
 & $\text{U}(N/2)/\text{O}(N/2)$ & $\text{Sp}(N/4)/\text{U}(N/4)$ &
 \ldots\\
$T$ only & $\text{U}(N)/\text{Sp}(N/2)$  & $\text{O}(N)/\text{U}(N/2)$
&$\text{O}(N/2)$ & $\mathbb{Z}\times
 \frac{\text{O}(N/2)}{\text{O}(N/4)\times\text{O}(N/4)}$ & \ldots\\
$T$ and $Q$ &
$\mathbb{Z}\times\frac{\text{Sp}(N/2)}{\text{Sp}(N/4)\times\text{Sp}(N/4)}$
& $\text{U}(N/2)/\text{Sp}(N/4)$ & $\text{O}(N/2)/\text{U}(N/4)$
& $\text{O}(N/4)$\\
$T$, $Q$, $\chi$ & $\text{Sp}(N/4)$ &
$\mathbb{Z}\times\frac{\text{Sp}(N/4)}{\text{Sp}(N/8)\times
\text{Sp}(N/8)}$
& $\text{U}(N/4)/\text{Sp}(N/8)$ & $\text{O}(N/4)/\text{U}(N/8)$\\
&\vdots & & & &$\ddots$
\end{tabular}
\caption{The period-$8$ (in both dimension and number of
symmetries) table of classifying spaces for free fermion Hamiltonians
for $N$ complex $=2N$ real (Majorana) fermion fields in dimensions
$d=0,1,2,3,\ldots$ with no symmetries; time-reversal symmetry ($T$) only; time-reversal and charge conservation
symmetries ($T$ and $Q$); time-reversal, charge conservation,
and sublattice symmetries ($T$, $Q$, and $\chi$);
and the latter two cases with SU($2$) symmetry.
As a result of the period-$8$ nature of the table,
the top two rows could equally well be the
bottom two rows of the table.
Moving $p$ steps to the right and $p$ steps down leads to
the same classifying space (but for $1/2^p$ as many fermion
fields), which is a reflection of Bott periodicity, as explained
in the text. The number of disconnected components of any such
classifying space -- i.e. the number of different phases in that
symmetry class and dimension -- is given by the corresponding
$\pi_0$, which may be found in Eq. \ref{eqn:stable-pi-0}. Higher homotopy groups, which classify defects, can be computed using Eq. \ref{eqn:Bott periodicity}. Table \ref{tbl:unitary-classifying}, given in Appendix
\ref{sec:QnotT}, is the analogous table
for charge-conserving Hamiltonians without time-reversal symmetry.}
\label{tbl:classifying}
\end{table*}

In this table, $Q$ refers to charge-conservation symmetry.
Charge conservation is due to the invariance of
the Hamiltonian of a system under the U(1) symmetry
${c_i}\rightarrow e^{i\theta}{c_i}$. In terms of
Majorana fermions $a_i$ defined according to
${c_j}=(a_{2j-1}+ia_{2j})/2$, the symmetry takes the
form $a_{2j-1}~\rightarrow~\cos\theta a_{2j-1}
+ \sin\theta a_{2j}$, $a_{2j}~\rightarrow~-\sin\theta a_{2j-1}
+ \cos\theta a_{2j}$. However, if a free fermion Hamiltonian
is invariant under the discrete
symmetry ${c_i}\rightarrow i{c_i}$ or, equivalently,
$a_{2j-1} \rightarrow a_{2j}$, $a_{2j} \rightarrow - a_{2j-1}$,
then it is automatically invariant under the full U(1) as well,
and conserves charge \cite{Kitaev09}. Thus, we can treat charge
conservation as a discrete symmetry $Q$
which is unitary, squares to $-1$, and commutes
with the Hamiltonian (i.e. with the $\gamma$ matrices and $M$).
Since $Q$ transforms ${c_i}\rightarrow i{c_i}$, it
anti-commutes with $T$.
Note further that if a system has time-reversal
symmetry, then the product of time-reversal $T$ and
charge conservation $Q$ is a discrete anti-unitary symmetry,
$QT$ which anti-commutes with the Hamiltonian and with $T$
and squares to $-1$. Then $QT$ is defined by a choice of matrix $J$,
analogous to $T$, as in Eq. \ref{eqn:symmetries}.
If the system is not time-reversal-invariant,
then charge conservation is a unitary symmetry.
It is easier then to work with complex fermions,
and the classification of such systems falls into an entirely
different sequence, as discussed in Appendix \ref{sec:QnotT}.)

If a system is both time-reversal symmetric
and charge-conserving, i.e. if it is a time-reversal
invariant insulator, then it may have an additional
symmetry which guarantees that the eigenvalues
of the Hamiltonian come in $\pm E$ pairs, just as in a superconductor.
An example of such a symmetry is the sublattice
symmetry of Hamiltonians on a bipartite lattice in
which fermions can hop directly from the $A$ sublattice
to the $B$ sublattice but cannot hop directly between sites on the
same sublattice. In such a case, the system is
invariant under a unitary symmetry
$\chi$ defined as follows. If we block diagonalize
$\chi$ so that one block acts on sites in the $A$ sublattice
and the other on sites in the $B$ sublattice, then
we can write $\chi=\text{diag}(k,-k)$, i.e.
${a_i}({\bf x})\rightarrow -k_{ij}{a_j}({\bf x})$ for
${\bf x}\in A$ and
${a_i}({\bf x})\rightarrow k_{ij}{a_j}({\bf x})$ for ${\bf x}\in B$.
This symmetry transforms the Hamiltonian to minus itself
if ${k^2}=1$ or, in other words, if ${\chi^2}=1$. Then
$\chi Q$ is a unitary symmetry which squares to $-1$
and anti-commutes with the Hamiltonian, $T$,
and $QT$. Hence $\chi Q$, too, is defined by a choice of matrix $J$,
as in Eq. \ref{eqn:symmetries}. We will call such a
symmetry a sublattice symmetry $\chi$ and a system satisfying
this symmetry a `bipartite' system, but the symmetry
may have a different microscopic origin.

In an electron system, time-reversal ordinarily squares
to $-1$, because the transformation law is 
${c_\uparrow}\rightarrow {c_\downarrow}$,
${c_\downarrow}\rightarrow -{c_\uparrow}$,
as we have thus far assumed in taking ${J^2}=-1$.
However, it is possible to have a system of
fully spin-polarized electrons which has an
anti-unitary symmetry $T$ which squares to $+1$.
(One might object to calling this symmetry time-reversal
because it doesn't reverse the electron spins,
but $T$ is a natural label because it is a symmetry
which is just as good for the present purposes.)
Then, since ${J^2}=1$, a choice of $J$ is similar to
a choice of a $\gamma$ matrix. In general,
symmetries (\ref{eqn:symmetries})
which square to $+1$ have the same effect
on the topology of the space of free fermion Hamiltonians
as adding dimensions since each such $J$
defines a subspace of half the dimension within the
eigenspaces of the $\gamma$ matrices.
This is true for systems with ${T^2}=1$.

SU(2) spin-rotation-invariant and time-reversal-invariant
insulators (systems with $T$ and $Q$) effectively
fall in this category. The Hamiltonian for such
a system can be written in the form $H=h\otimes {I_2}$
where the second factor is the $2\times 2$ identity
matrix acting on the spin index. Then time-reversal
can be written in the form $T= t \otimes i{\sigma_y}$,
where ${t^2}=1$, and $Q$ can be written in
the form $Q= q \otimes {I_2}$,
so that $QT =qt \otimes i{\sigma_y}$, where ${(qt)^2}=1$.
Thus, since the matrix $i\sigma_y$ squares to $-1$,
the symmetries $T$ and $QT$ have effectively become
symmetries which square to $+1$. They now move the
system through the progression of classifying spaces
in the same direction as increasing the dimension,
i.e. in the opposite direction to symmetries which
square to $-1$. Thus, SU(2) spin-rotation-invariant and
time-reversal-invariant insulators in $d$ dimensions
are classified by the same space as
systems with no symmetry in $d+2$ dimensions.
However, in a system which, in addition, has
sublattice symmetry $\chi=x\otimes {I_2}$,
we have $(qx)^{-1}$. Thus, sublattice symmetry is
still a symmetry which squares to $-1$.
Since the two symmetries which square to $+1$
($T$ and $QT$) have the same effect as increasing
the dimension while the symmetry which squares
to $-1$ has the same effect as decreasing the dimension,
SU(2) spin-rotation-invariant and time-reversal-invariant insulators
with sublattice symmetry in $d$ dimensions
are classified by the same space as
systems with no symmetry in $d+1$ dimensions
(but with $N$ replaced by $N/4$).
Similar considerations apply to superconductors with
SU(2) spin-rotational symmetry.

\begin{table*}
 \begin{tabular}[t]{|c|c|c|c|c|}
 \hline
Symmetry classes&Physical realizations&$d=1$&$d=2$&$d=3$
 \\\hline
\hline D&SC&{\color{blue}$p$-wave SC}&{\color{blue}$(p+ip)$-SC}&0
\\\hline
DIII&TRI SC&{\color{red} ${\rm Z_2}$}&{\color{blue} $(p+ip)(p-ip)$-SC}&He$^3$-B
\\\hline AII&TRI
ins.&0&HgTe Quantum well&${\rm Bi_{1-x}Sb_x}$, ${\rm Bi_2Se_3}$, etc.
\\\hline
CII&Bipartite TRI ins.&Carbon nanotube&0&{\color{red} ${\rm Z_2}$}
\\\hline
C&Singlet SC&0&{\color{blue} $(d+id)$-SC}&0
\\\hline
CI&Singlet TRI
SC&0&0&{\color{red} Z}
\\\hline
AI& TRI ins. w/o SOC&0&0&0
\\\hline
BDI&Bipartite TRI
ins. w/o SOC&Carbon nanotube&0&0
\\\hline
 \end{tabular}
  \caption{Topological periodic table in physical dimensions $1,2,3$. The first column contains 8 of the 10 symmetry classes in the Cartan notation
 adopted by Schnyder {\it et al.}\cite{Ryu08}, following Zirnbauer
 \cite{Zirnbauer96,Altland97}.
 The second column contains the requirements for physical systems which can
 realize the corresponding symmetry classes. ``w/o" stands for ``without". SC stands for superconductivity, TRI for time-reversal invariant, and SOC for spin-orbit coupling. The three columns $d=1,2,3$ list topological states in the spatial dimensions $1,2,3$ respectively. $0$ means the topological classification is trivial. The red labels {\color{red} Z} and {\color{red} ${\rm Z_2}$} stand for topological states
classified by these groups but for which states corresponding
to non-trivial elements of $\mathbb{Z}$ or $\mathbb{Z}_2$
have not been realized in realistic materials.
The blue text stands for topological states for which
a well-defined physical model has been proposed but convincing experimental candidate has not been found yet. (See text for more discussions on the realistic materials.) }
  \label{tbl:periodic}
\end{table*}

In order to discuss topological defects in
the systems discussed here,
it is useful to return to the arguments which led to
(\ref{eqn:Bott-step-1}).
By showing that the space of loops in
$\text{O}(N/2)/(\text{O}(N/4)\times\text{O}(N/4))$ is
well-approximated by $\text{O}(N/4)$,
we not only showed that
${\pi_1}(\text{O}(N/2)/(\text{O}(N/4)\times\text{O}(N/4)))=
{\pi_0}(\text{O}(N/4))$ but, in fact, that
${\pi_k}(\text{O}(N/2)/(\text{O}(N/4)\times\text{O}(N/4)))=
\pi_{k-1}(\text{O}(N/4))$ (see Ref. \onlinecite{Milnor63}).
Continuing in the same way, we can approximate
the loop space of $\text{O}(N/4)$ (i.e. the space of loops in
$\text{O}(N/4)$) by minimal
geodesics from $\mathbb{I}$ to $-\mathbb{I}$:
$L'(\lambda)=e^{\lambda A_1}$ where ${A_1^2}=-1$.
The mid-point of such a geodesic, $L'(\pi/2)=A_1$
again defines a complex structure ${A_1}={O^T}JO$,
where $J$ is given by (\ref{eqn:canonical-J}) so that
$\pi_{k}(\text{O}(N/4))=\pi_{k-1}(\text{O}(N/4)/\text{U}(N/8))$.
In a similar way, minimal geodesics in
$\text{O}(N/4)/\text{U}(N/8)$ from $A_1$ to $-A_1$
can be parametrized by their mid-points $A_2$,
which square to $-1$ and anti-commute
with $A_1$, thereby defining a quaternionic
structure, so that the loop space of
$\text{O}(N/4)/\text{U}(N/8)$ is equivalent to
$\text{U}(N/8)/\text{Sp}(N/8)$ and, hence
$\pi_{k}(\text{O}(N/4)/\text{U}(N/8))=
\pi_{k-1}(\text{U}(N/8)/\text{Sp}(N/8))$.
Thus, we see that {\it the passage from
one of the classifying spaces to its loop space
is the same as the imposition of a symmetry
such as time-reversal to a system classified by that space}:
both involve the choice of successive anticommuting
complex structures.
Continuing in this fashion (see Ref. \onlinecite{Milnor63}),
we recover {\it Bott periodicity}:
\begin{multline}
\label{eqn:Bott periodicity}
{\hskip -0.5 cm} {\pi_k}(\text{O}(16N))=\\
\pi_{k-1}(\text{O}(16N)/\text{U}(8N))
= \pi_{k-2}(\text{U}(8N)/\text{Sp}(4N)) \\
= \pi_{k-3}(\mathbb{Z}\times
\text{Sp}(4N)/(\text{Sp}(2N)\times\text{Sp}(2N)))
= \pi_{k-4}(\text{Sp}(2N))\\
= \pi_{k-5}(\text{Sp}(2N)/\text{U}(2N))
= \pi_{k-6}(\text{U}(2N)/\text{O}(2N))\\
 = \pi_{k-7}(\mathbb{Z}\times
 \text{O}(2N)/(\text{O}(N)\times\text{O}(N)))\\
 = \pi_{k-8}(\text{O}(N))
\end{multline}
The approximations made at each step require
that $N$ be in the {\it stable
limit}, in which the desired homotopy groups are
independent of $N$. For instance,
${\pi_k}(\text{O}(N))$ is
independent of $N$ for $N>k/2$.

It is straightforward to compute $\pi_0$ for each of these
groups:
\begin{eqnarray}
\label{eqn:stable-pi-0}
{\pi_0}(\text{O}(N))&=&\mathbb{Z}_2\cr
\pi_{0}(\text{O}(2N)/\text{U}(N))&=&\mathbb{Z}_2\cr
\pi_{0}(\text{U}(2N)/\text{Sp}(N))&=&0\cr
\pi_{0}(\mathbb{Z}\times
\text{Sp}(2N)/\text{Sp}(N)\times\text{Sp}(N))&=&\mathbb{Z}\cr
\pi_{0}(\text{Sp}(N))&=&0\cr
\pi_{0}(\text{Sp}(N)/\text{U}(N))&=&0\cr
\pi_{0}(\text{U}(N)/\text{O}(N))&=&0\cr
\pi_{0}(\mathbb{Z}\times
 \text{O}(2N)/(\text{O}(N)\times\text{O}(N)))
 &=&\mathbb{Z}
\end{eqnarray}
Combining (\ref{eqn:stable-pi-0}) with (\ref{eqn:Bott periodicity}),
we can compute any of the stable homotopy groups
of the above $8$ classifying spaces. As discussed above,
the space of gapped free fermion Hamiltonians in
$d$-dimensions in a given {\it symmetry class}
(determined by the number modulo $8$ of symmetries squaring to $-1$
minus the number of those squaring to $+1$) is
homotopically-equivalent to one of these classifying spaces.
Thus, using (\ref{eqn:stable-pi-0}) with (\ref{eqn:Bott periodicity})
to compute the stable homotopy groups of these classifying spaces
leads to a complete classification of topological states and
topological defects in all dimensions and
symmetry classes, as we now discuss.

Gapped Hamiltonians with a given symmetry and dimension are classified by
$\pi_0$ of the corresponding classifying space in Table \ref{tbl:classifying}. Due to Bott periodicity, the table is periodic along both directions of dimension and symmetry, so that there are 8 distinct symmetry classes.
Ryu {\it et. al}\cite{Ryu08} denoted these classes using the
Cartan classification of symmetric spaces, following
the corresponding classification of disordered systems and
random matrix theory \cite{Zirnbauer96,Altland97} which
was applied to the (potentially-gapless) surface states of
these systems. In this notation, systems with no symmetry
are in class D, those with $T$ only are in DIII, and those
with $T$ and $Q$ are in AII. The other 5 symmetry classes,
C, CI, CII, AI, and BDI arise, arise in systems which
have spin-rotational symmetry or a sublattice symmetry.
There are actually 2 more symmetry classes (denoted by A and AIII in the random matrix theory) which lie on a separate $2\times 2$ periodic table, which is less relevant to the present work and will be discussed in the Appendix \ref{sec:QnotT}. In Table \ref{tbl:periodic} we have listed examples of
topologically-nontrivial states in physical dimensions 1,2,3
in all 8 symmetry classes. To help with
the physical understanding of these symmetry classes, we have also listed the physical requirements for the realization of each symmetry class.
In each dimension, there are two symmetry classes
in which the topological states are classified by integer invariants and
two symmetry classes in which the different states are
distinguished by $\mathbb{Z}_2$ invariants. In all the
cases in which a real material or a well-defined physical model system is known 
with non-trivial $\mathbb{Z}$ or $\mathbb{Z}_2$ invariant,
we have listed a typical example in the table.
In some of the symmetry classes, non-trivial examples
have not been realized yet, in which case we leave the topological classification
$\mathbb{Z}$ or $\mathbb{Z}_2$ in the corresponding position in the table.

In one dimension, generic superconductors (class D) are classified by $Z_2$, of which the nontrivial example is a $p$-wave superconductor with a single Majorana zero mode on the edge. The time-reversal invariant superconductors (class DIII) are also classified by $Z_2$. The nontrivial example is a superconductor in which spin up electrons pair into a $p$-wave superconductor and spin down electron form another $p$-wave superconductor which is exactly the time-reversal of the spin-up one. Such a superconductor has two Majorana zero modes on the edge which form a Kramers pair and are topologically protected. The two integer classes are bipartite time-reversal invariant insulators with (CII) and without (BDI) spin-orbit coupling. An example of the
BDI class is a graphene ribbon, or equivalently a carbon nanotube with a zigzag edge. \cite{Fujita96,Nakada96}. The low-energy band structure of graphene and carbon nanotubes is well-described by a tight-binding model with nearest-neighbor hopping on a honeycomb lattice, which is bipartite. The integer-valued topological quantum number corresponds to the number of zero modes on the edge, which depends on the orientation of the nanotube. Because carbon has negligible spin-orbit coupling, to a good approximation it can be viewed as a system in the BDI class, but it can also be considered as a system in class CII when spin-orbit coupling is taken into account. In two dimensions, generic superconductors (class D) are classified by an integer, corresponding to the number of chiral Majorana edge states on the edge. The first nontrivial example was the $p+ip$ wave superconductor, shown by Read and Green\cite{Read00} to have one chiral Majorana edge state.
Non-trivial superconductors in symmetry class D
are examples of {\it topological superconductors}.
Some topological superconductors can be consistent with spin rotation symmetry; singlet superconductors (class C) are also classified by integer, with the simplest physical example a $d+id$ wave superconductor. Similar to the 1D case, the time-reversal invariant superconductors (class DIII) are classified by $Z_2$, of which the nontrivial example is a superconductor with $p+ip$ pairing of spin-up electrons and $p-ip$ pairing of spin-down electrons.\cite{Roy06,Qi09,Ryu08} The other symmetry class in
2D with a $Z_2$ classification is composed of time-reversal invariant insulators (class AII), also known as quantum spin Hall insulators\cite{Kane05A,Kane05B,Bernevig06a}. The quantum spin Hall insulator phase has been theoretically predicted\cite{Bernevig06b} and experimentally realized\cite{Koenig07} in HgTe quantum wells. In three dimensions, time-reversal invariant insulators (class AII) are also classified by $Z_2$. \cite{Fu07,Moore07,Roy09} The $Z_2$ topological invariant corresponds to a topological magneto-electric response with quantized coefficient $\theta=0,\pi$\cite{Qi08}. Several nontrivial topological insulators in this class have been theoretically predicted and experimentally realized, including ${\rm Bi_{1-x}Sb_x}$ alloy\cite{Fu07b,Hsieh08} and the family of ${\rm Bi_2Se_3}$, ${\rm Bi_2Te_3}$, ${\rm Sb_2Te_3}$\cite{Zhang09,Xia09,Chen09}. In 3D, time-reversal invariant superconductors (class DIII) are classified by an integer, corresponding to the number of massless Majorana cones on the surface.\cite{Ryu08} A nontrivial example with topological quantum number $N=1$ turns out to be the B phase of He$^3$.\cite{Qi09,Roy08b,Ryu08} The other classes with nontrivial topological classification in 3D are singlet time-reversal invariant superconductors (CI), classified by an integer; and bipartite time-reversal invariant insulators (CII), classified by $Z_2$. Some models have been proposed\cite{Schnyder09} but no realistic material proposal or experimental realization has been found in these two classes. We would like to note that different physical systems can correspond to the same symmetry class. For example, bipartite superconductors are also classified by the BDI class.

The two remaining symmetry classes (unitary (A) and chiral unitary (AIII)) corresponds to systems with charge conservation symmetry but without
time-reversal symmetry, which forms a separate $2\times 2$ periodic table. For the sake of completeness, we carry out
the preceding analysis for these two classes in Appendix \ref{sec:QnotT}.

Topological defects in these states are classified by higher homotopy groups of the classifying spaces. Following the convention of Ref. \cite{Kitaev09}, we name the classifying spaces by $R_q,q=0,1,2,...,7$, with $R_1=O(N),~R_2=O(2N)/U(N),~...~R_7=U(N)/O(N),~R_0=\mathbb{Z}\times
 \text{O}(2N)/(\text{O}(N)\times\text{O}(N))$ in the order of Eq. (\ref{eqn:stable-pi-0}). The symmetries in Table \ref{tbl:periodic} can be labeled by $p=0,1,2,...,7$, so that in $d$ dimensions and $p$-th symmetry class, the classifying space is $R_{2+p-d}$. A topological defect with dimension $D$ ($D<d$) is classified by
\begin{eqnarray}
\pi_{d-D-1}\left(R_{2+p-d}\right)=\pi_0\left(R_{p-D+1}\right)\label{eq:defectclassification}
\end{eqnarray}
which is determined by the zero-th homotopy groups in Eq. (\ref{eqn:stable-pi-0}). The important conclusion we obtain from this formula is that the classification of topological defect is determined by the dimension of the defect $D$ and the symmetry class $p$, and is {\em independent} of the spatial dimension $d$. For $p$ and $D$, we obtain the same $8\times 8$ periodic table for the classification of topological defects. \cite{noteaboutKane}

In the following
we will focus on the 3D system with no symmetry, and discuss the generic cases in Sec. \ref{sec:discussion}.
Since a 3D system with no symmetry is classified
by $R_7=\text{U}(N)/\text{O}(N)$, point-like defects in
such a system are classified by
${\pi_2}(\text{U}(N)/\text{O}(N))$.
In the stable limit,
${\pi_2}(\text{U}(N)/\text{O}(N))=\mathbb{Z}_2$.
However, we note that, for smaller values of
$N$ below the stable limit, the classification is a little different,
e.g. ${\pi_2}(\text{U}(2)/\text{O}(2))=
{\pi_2}(\text{U}(1)\times{S^2})=\mathbb{Z}$.
The `8-band model' in Ref. \onlinecite{Teo10},
which we discussed in the strong-coupling
limit in the previous section,
is an example of this particular `small $N$'
case, which is why the hedgehogs
in that model are classified by a
winding number $\in\mathbb{Z}$.

\section{Exchanging Particles Connected
by Ribbons in 3D}
\label{sec:tethered}

In this section, we will take an O(3) non-linear
$\sigma$ model, i.e. one with target space
$S^2$, as a toy model for our problem.
It is essentially the `8-band' model
discussed in Ref. \onlinecite{Teo10}
and in the strong-coupling limit in Section
\ref{sec:strong-coupling}. As noted above,
it is the $2N=8$ limit of the classification
reviewed above. It will be more familiar to
most readers and easier to visualize
than the full problem which we discuss in
the next section.
We will give a heuristic explanation
of $\pi_1$ of the configurations of
defects and will make a few comments about
where our toy model goes wrong, compared
to the full problem. In the next section,
we will undertake a full and careful calculation
of $\pi_1$ of the configurations of
defects of a model with classifying space
$\text{U}(N)/\text{O}(N)$.

A free fermion Hamitonian with no
symmetry, but limited to 8 bands,
can be expanded about its minimum
energy point in the Brillouin zone as
\begin{equation}
H = i\chi ({\partial_i}{\gamma_i} +
{n_i}{\Gamma_i})\chi
\end{equation}
From the considerations in the previous
section, we learned that the space of mass terms
is ${S^2}$, which is why we have written the mass
term in the above form with a unit vector $\vec{n}\in{S^2}$.
(More precisely, the space of mass terms is
$\text{U}(2)/\text{O}(2)=\text{U}(1)\times{S^2}$,
and the mass term can be written
$e^{\theta{\gamma_1}{\gamma_2}{\gamma_3}}{n_i}{\Gamma_i}
e^{-\theta{\gamma_1}{\gamma_2}{\gamma_3}}$.
However, the extra U(1) plays no role here and can
be ignored.)
In the model of Section \ref{sec:strong-coupling},
the three components of $\vec{n}$ correspond to dimerization
in each of the three directions on the cubic
lattice. In Teo and Kane's
model they correspond to the real and imaginary
parts of the superconducting order parameter
and the sign of the Dirac mass at a band inversion.

\begin{figure}[b]
\centering
\includegraphics[width=3in]{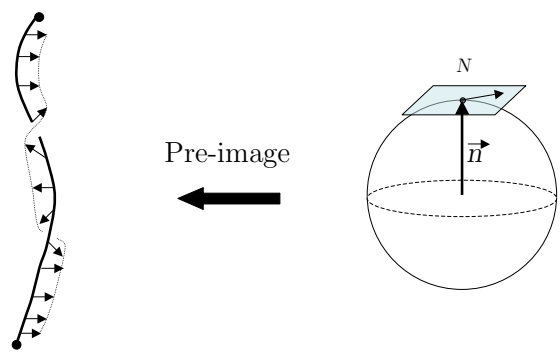}
\caption{The pre-image of the North pole in the $S^2$
target space of $\vec{n}$ is a collection of arcs
connecting hedgehogs. A fixed tangent vector at the North
pole defines a vector field along the arcs, thereby making
them framed arcs, which may be viewed as ribbons.}
\label{fig:north-pole}
\end{figure}

Now consider defects in the $\vec{n}$ field,
which are classified by ${\pi_2}({S^2})=\mathbb{Z}$.
Defects with winding number $\pm 1$ are positive and
negative hedgehogs. For simplicity, we will focus on
these; higher winding number hedgehods can be
built up from these. (In the real model, as opposed to the
toy model, the `hedgehogs' have a $\mathbb{Z}_2$
classification so there are no higher winding number
hedgehogs and, in fact, they do not even have a sign.)
As noted by Teo and Kane \cite{Teo10}, the
$\vec{n}$ field around a hedgehog can be
visualized in a simplified way, following Wilczek
and Zee's discussion of the Hopf term in
a $2+1$-D O(3) non-linear $\sigma$ model\cite{Wilczek83}.
The field $\vec{n}$ can be viewed as a map from
the physical space where the electrons live,
$\mathbb{R}^3$. If we assume
that the total winding number is zero
(equal numbers of + and - hedgehogs)
and that $\vec{n}$ approaches a constant
at $\infty$, we can compactify the physical space $\mathbb{R}^3$
so that it is $S^3$.
The target space of the map is $S^2$. The pre-image
of the north pole $N\in S^2$ is a set of arcs
and loops. The choice of the north pole $N$
is arbitrary, and any other point on the sphere
would be just as good for the following discussion.
Let's ignore the loops for the moment
and focus on the arcs.
Since $\vec{n}$ points in every direction at
a hedgehog, the arcs terminate at hedgehogs.
In fact, each arc connects a $+1$ hedgehog to
a $-1$ hedgehog.
We now pick an arbitrary unit vector in the tangent
space of the sphere at $N$. This vector can be
pulled back to $S^3$ to define a vector field
along the arcs which is clearly normal
to the arcs. This is a {\it framing}, which allows us to define,
for instance, a self-winding number for an arc. Intuitively,
we can think of a framing as a thickening of
an arc into a ribbon.
Thus, the field $\vec{n}$ allows
us to to define a set of framed arcs connecting
the hedgehogs -- in other words, a set
of ribbons connecting the hedgehogs. As the
normal vector twists around an arc, the ribbon twists,
as shown in Fig. \ref{fig:north-pole} (Although we will draw the ribbons
as bands in the physical space, their width should
not be taken seriously; they should really be viewed as
arcs with a normal vector field.)

\begin{figure}
\centering
\includegraphics[width=2in]{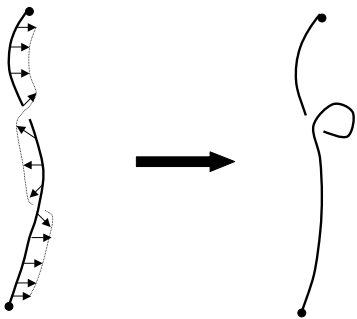}
\caption{We will depict framed arcs or ribbons
as arcs with twists accounted for by drawing
kinks in the arcs, as shown above.}
\label{fig:ribbons->arcs}
\end{figure}

Although these ribbons are strongly reminiscent
of particle trajectories, it is important to keep in
mind that they are not. A collection of ribbons
connecting hedgehogs defines a state of the system
at an instant of time. Ribbons, unlike particle trajectories,
can cross. They can break and reconnect as the system
evolves in time. As hedgehogs are moved, the
ribbons move with them.

A configuration of particles connected pairwise
by ribbons is a seemingly crude approximation
to the full texture defined by $\vec{n}$.
However, according to the Pontryagin-Thom
construction, as we describe in the next Section
(and explain in Appendix \ref{sec:appendix_pontryagin_thom_construction}),
it is just as good as the full texture for topological
purposes. Thus, we focus on the space of
particles connected pairwise by ribbons.

We now consider a collection of such particles
and ribbons. For a topological discussion,
all that we are interested in about
the ribbons is how many times they twist, so we will
not draw the framing vector but will, instead, be careful to
put kinks into the arcs in order to keep track
of twists in the ribbon, as depicted in Fig. \ref{fig:ribbons->arcs}.
The fundamental group of their configuration space is
the set of transformation which return the particles
and ribbons to their initial configurations,
with two such transformations identified if they
can be continuously deformed into each other.
Consider an exchange of two $+1$ hedgehogs, as depicted
in Fig. \ref{fig:exchange1}.
Although this brings the particles back to their
initial positions (up to a permutation, which
is equivalent to their initial configuration
since the particles are identical), it does not
bring the ribbons back to their initial configuration.
Therefore, we need to do a further motion
of the ribbons. By cutting and rejoining them
as shown in Fig. \ref{fig:exchange2}a, a procedure
which we call `recoupling', we now have the ribbons
connecting the same particles as in the initial configuration.
But the ribbon on the left has a twist in it. So we rotate
that particle by $-2\pi$ in order to undo the twist,
as in Fig. \ref{fig:exchange2}b.

\begin{figure}
\centering
\includegraphics[width=1.5in]{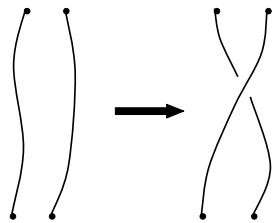}
\caption{When two defects are exchanged, the
$\vec{n}$-field around them is modified. This is encapsulated
by the dragging of the framed arcs as the defects are
moved.}
\label{fig:exchange1}
\end{figure}

\begin{figure}
\centering
\includegraphics[width=2.5in]{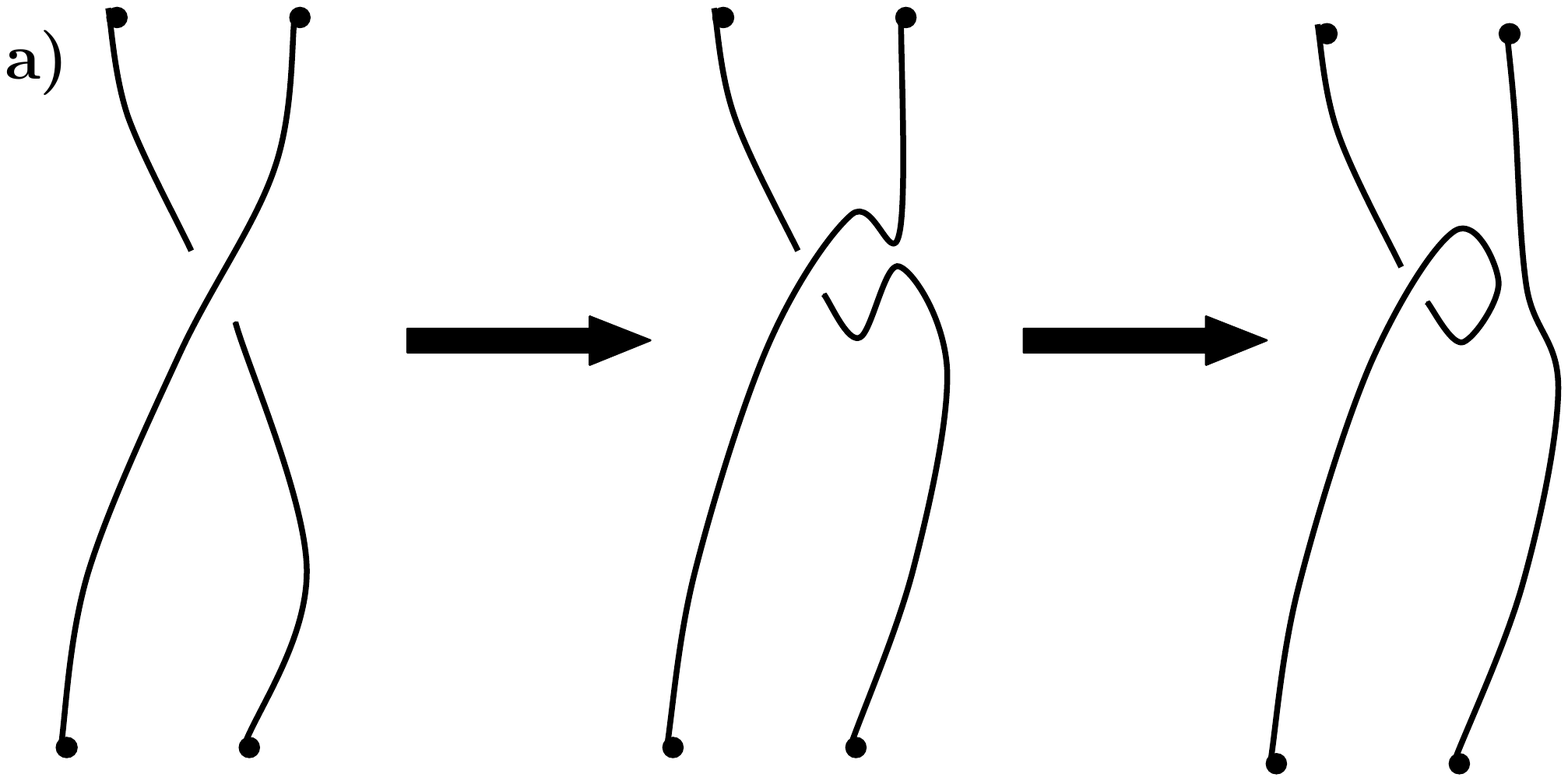}
\includegraphics[width=1.75in]{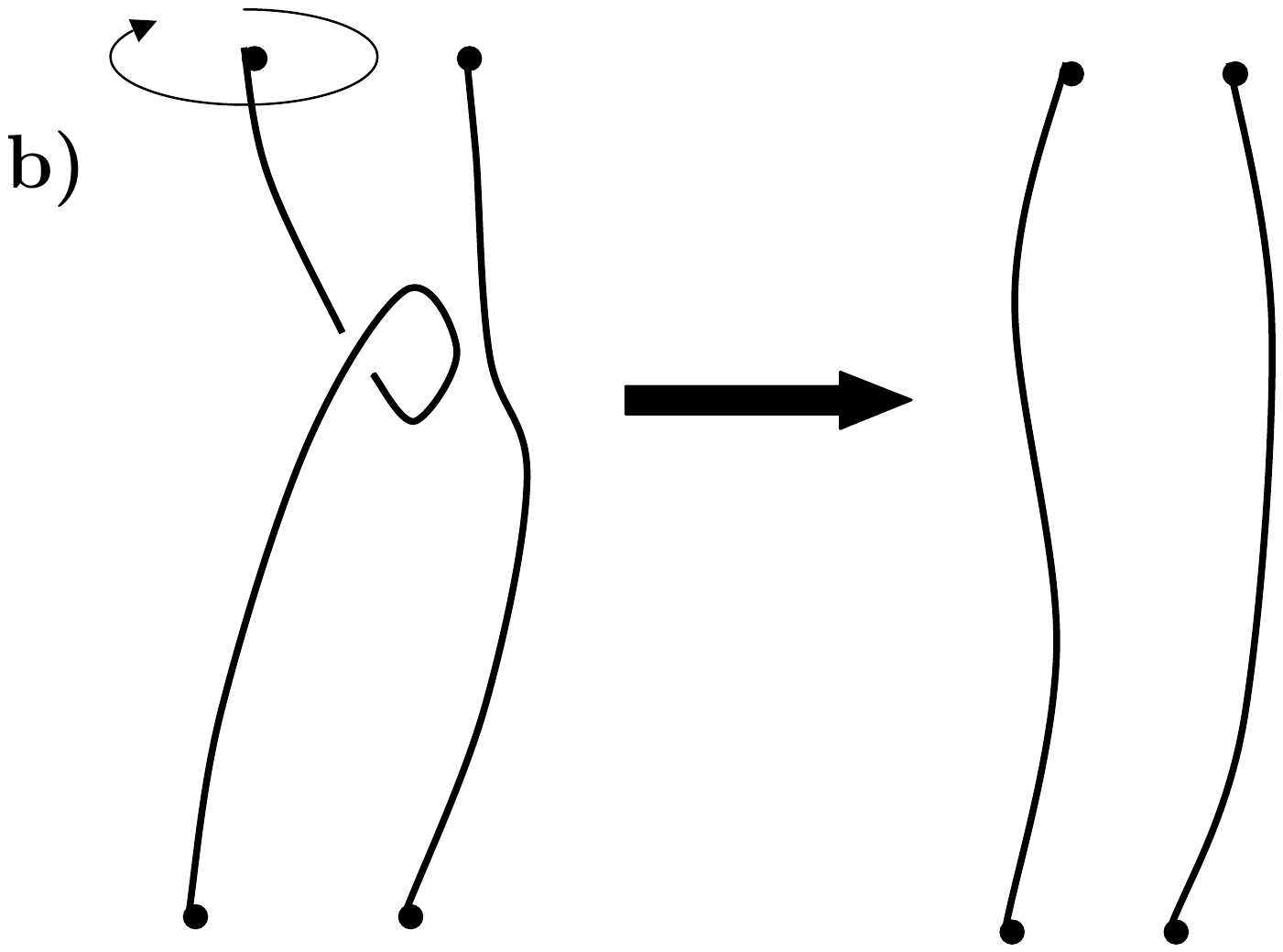}
\caption{(a) In order to restore the framed arcs so that
they are connecting the same defects, it is necessary
to perform a recoupling by which they are reconnected.
In order to keep track of the induced twist, it is easiest
to perform the recoupling away from the overcrossing.
(b) The particle on the left must be rotated by
$-2\pi$ in order to undo a twist in the framed arc
to which it is attached.}
\label{fig:exchange2}
\end{figure}

Let us use $t_i$ to denote such a transformation, defined by
the sequence in Figs. \ref{fig:exchange1},
\ref{fig:exchange2}a, and \ref{fig:exchange2}b.
The $t_i$s do not satisfy the multiplication rules of the
permutation group. In particular, ${t_i}\neq t_i^{-1}$.
The two transformations ${t_i}$ and $t_i^{-1}$
are not distinguished by whether the exchange
is clockwise or counter-clockwise -- this is immaterial
since a clockwise exchange can be
deformed into counter-clockwise one -- but rather
by which ribbon is left with a twist which must be
undone by rotating one of the particles.

\begin{figure}
\centering
\includegraphics[width=3.25in]{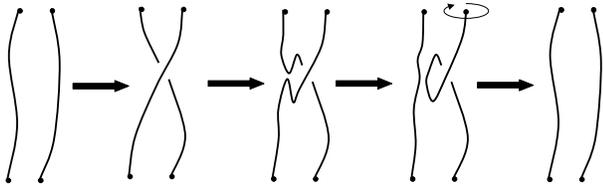}
\caption{The sequence of moves which defines
$t_i^{-1}$. (Here, the $i^{\rm th}$ particle is
at the top left and the $(i+1)^{\rm th}$ is at the
top right.) This may be contrasted with
the sequence in Figs. \ref{fig:exchange1},
\ref{fig:exchange2}a, and \ref{fig:exchange2}b,
which defines ${\sigma_i^1}$.}
\label{fig:exchange4}
\end{figure}

To see that the operations $t_i$, defined by
the sequence in Figs. \ref{fig:exchange1},
\ref{fig:exchange2}a, and \ref{fig:exchange2}b,
and $t_i^{-1}$, defined by the sequence in \ref{fig:exchange4},
are, in fact, inverses, it is useful to note that when they are performed
sequentially, they involve two $2\pi$ twists of the same hedgehog.
In \ref{fig:exchange2}b, it is the hedgehog on the left
which is twisted; this hedgehog moves to the right in the first step of
\ref{fig:exchange4} and is twisted again in the fourth step.
One should then note that a double twist
in a ribbon can be undone continuously by using
the ribbon to ``lasso'' the defect, a famous fact
related to the existence of spin-$1/2$ and the
fact that ${\pi_1}(SO(3))=\mathbb{Z}_2$.
This is depicted in Fig. \ref{fig:lassomove} in Appendix
\ref{sec:Postnikov}. It will be helpful for our late
discussion to keep in mind that $t_i$ not only
permutes a pair of particles but also rotates one
of them; any transformation built up by multiplying
$t_i$s will enact as many $2\pi$ twists as pairwise
permutations modulo two.

Thus far we have only discussed the $+1$ hedgehogs.
We can perform the similar transformations which
exchange $-1$ hedgehogs. We will not repeat the above
discussion for $-1$ hedgehogs since the discussion
would be so similar; furthermore, in the $N\rightarrow\infty$
model which is our main interest, defects do not carry
a sign, so they can all be permuted with each other.

We have concluded that ${t_i}\neq t_i^{-1}$
and, therefore, the group of transformations which bring
the hedgehogs and ribbons back to their initial configuration
is not the permutation group. This leaves open the question:
what is ${t_i}^2$? The answer is that $t_i^2$ can be continuously
deformed into a transformation which doesn't involve
moving any of the particles -- Teo and Kane's
`braidless operations'. Consider the transformation
$x_i$ depicted in Fig. \ref{fig:twist-transfer}. Defect $i$
is rotated by $2\pi$, the twist is transferred
from one ribbon to the other,
and defect $i+1$ is rotated by $-2\pi$.
Since a $4\pi$ rotation can be unwound, as depicted
in Fig. \ref{fig:lassomove}, ${x_i^2=1}$.

Intuitively, one expects that ${x_i}={t_i}^2$
since neither ${x_i}$ nor ${t_i}^2$
permutes the particles and both of them involve
$2\pi$ rotations of both particles $i$ and $i+1$.
To show that this is, in fact, the correct, we need
to show that the history in Fig. \ref{fig:twist-transfer}
can be deformed into the sequence of Figs.
\ref{fig:exchange1}, \ref{fig:exchange2}a,
\ref{fig:exchange2}b repeated twice.
If the history in Fig. \ref{fig:twist-transfer} is
viewed as a `movie' and the sequence of Figs.
\ref{fig:exchange1}, \ref{fig:exchange2}a,
\ref{fig:exchange2}b repeated twice is
viewed as another `movie', then we need a
one-parameter family of movies -- or a `movie
of movies' -- which connects the two movies.
We will give an example of such a `movie
of movies' shortly. With this example in hand, the
reader can verify that  ${x_i}={t_i}^2$ by drawing
the corresponding pictures, but we will not do so here
since this discussion
is superseded, in any case, by the
the next section, where a similar result is shown for
the $N\rightarrow \infty$ problem by more general
methods. We simply accept this identity for now.

\begin{figure}
\centering
\includegraphics[width=3.25in]{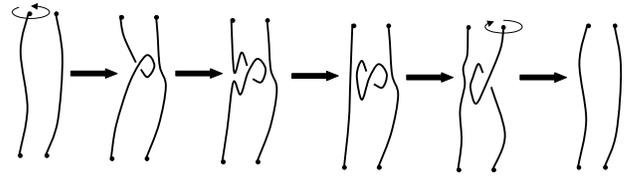}
\caption{The sequence of moves which defines
$x_i$: the defect on the left is rotated by $2\pi$,
the twist is transfered to the ribbon on the right by two
recouplings, and then the defect on the right is
rotated by $-2\pi$. (Here, the $i^{\rm th}$ defect is
at the top left and the $(i+1)^{\rm th}$ is at the
top right.) The defects themselves are not moved
in such a process.}
\label{fig:twist-transfer}
\end{figure}

We now consider the commutation
relation for the $x_i$s. Clearly, for $|i-j|\geq 2$,
${x_i} {x_j} = {x_j} {x_i}$. It is also
intuitive to conclude that
\begin{equation}
{x_i} x_{i+1} = x_{i+1} {x_i}
\end{equation}
since the order in which twists are transferred is
seemingly unimportant.
However, since this is a crucial point, we verify it
by showing in Figure \ref{fig:movie-of-movies}
that the sequence of moves which
defines ${x_i} x_{i+1}$ (a `movie') can be continuously
deformed into the sequence of moves which
defines $x_{i+1}{x_i}$ (another `movie'). Such a deformation
is a `movie of movies'; going from left-to-right in
Fig. \ref{fig:movie-of-movies} corresponds to going
forward in time while going from up to down corresponds
to deforming from one movie to another.

\begin{figure*}
\centering
\includegraphics[width=6in]{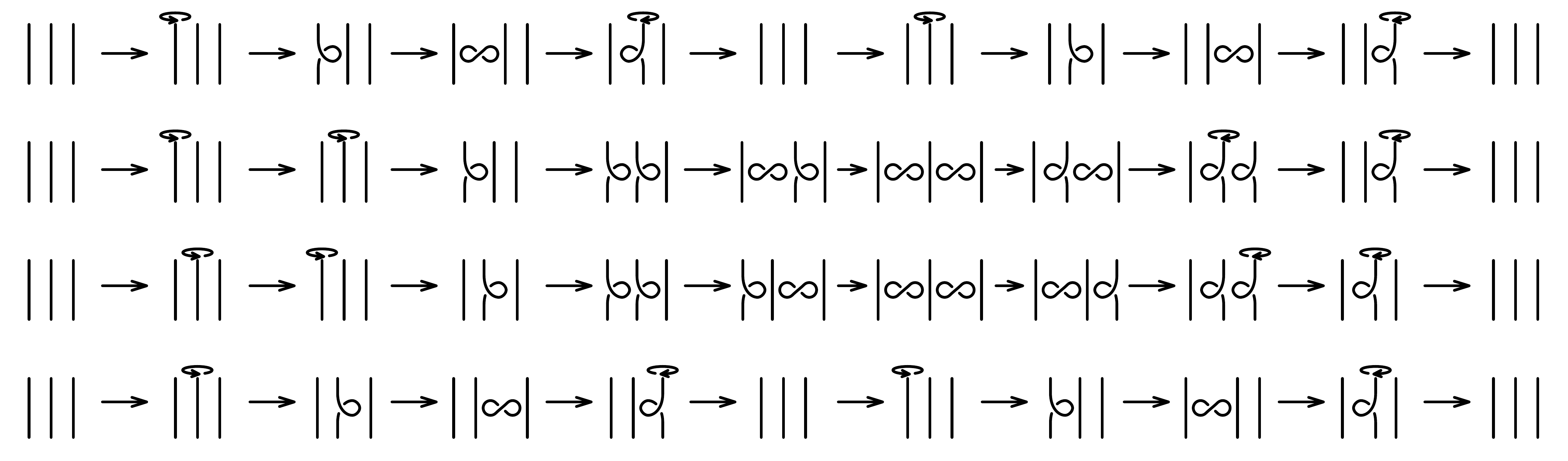}
\caption{The sequence of moves which defines
${x_i}x_{i+1}$ is shown in the top row.
The sequence of moves which defines
$x_{i+1}{x_i}$ is shown in the bottom row.
The rows in between show how the
top row can be continuously deformed
into the bottom one. Such a deformation
of two different sequences is a `movie of movies'
or a two-parameter family of configurations.
Moving to the right increases the time parameter while
moving down increases the deformation parameter
which interpolates between ${x_i}x_{i+1}$ and
$x_{i+1}{x_i}$.}
\label{fig:movie-of-movies}
\end{figure*}

Thus, we see that the equivalence
class of motions of the defects (i.e. $\pi_1$
of their configuration space) has an Abelian
subgroup generated by the $x_i$s. Since ${x_i^2}=1$
and they all commute with each other, this is simply
$n-1$ copies of $\mathbb{Z}_2$,
or, simply, $(\mathbb{Z}_2)^{n-1}$.

In order to fully determine
the group of transformations which bring
the hedgehogs and ribbons back to their initial configuration,
we need to check that the $t_i$s generate the full
set of such transformations -- i.e. that the transformations
described above and those obtained by combining them
exhaust the full set. In order to do this, we need
the commutation relations of the $t_i$s with
each other. Clearly, ${t_i}{t_j}={t_j}{t_i}$ for
$|i-j|\geq 2$ since distant operations which
do not involve the same hedgehogs nor the same
ribbons must commute. On the other hand
operations involving the same hedgehogs or
ribbons might not commute. For instance,
\begin{equation}
\label{eqn:t-x-comm}
{t_i} x_{i+1} = x_{i}x_{i+1}t_i
\end{equation}
To see why this is true, note that
if we perform $x_{i+1}$ first, then defects $i+1$ and
$i+2$ are twisted by $2\pi$. However, $t_i$ then permutes
$i$ and $i+1$ and twists $i$ by $2\pi$. Thus, the left-hand-side
permutes $i$ and $i+1$ and only twists $i+2$. The
$(i+1)^{\rm th}$ hedgehog was twisted by $x_{i+1}$
and then permuted by $t_i$ so that it ended up in the
$i^{\rm th}$ position, where it was twisted again in
the last step in $t_i$; two twists can be continuously deformed
to zero, so this hedgehog is not twisted at all.
The right-hand-side similarly permutes $i$ and $i+1$ and
only twists $i+2$ by $2\pi$. The reader may find it instructive
to flesh out the above reasoning by constructing
a movie of movies.

The multiplication rule which we have just described
(but not fully justified) is that of a semi-direct product,
which is completely natural in this context:
when followed by a permutation, a transfer of twists
ends up acting on the permuted defects.
The twists $x_i$ form the group $(\mathbb{Z}_2)^{n-1}$
which we can represent by $n$-component vectors
all of whose entries are $0$ or $1$ which satisfy the
constraint that the sum of the entries is even.
The entries tell us whether a given hedgehog is
twisted by $2\pi$ or not. In any product of $x_i$s,
an even number of hedgehogs is twisted by $2\pi$.
Now consider, for $n$ odd, the group elements given by
\begin{equation}
\label{sigma-i-odd}
{\sigma_i} = x_{n-1} x_{n-3} \ldots x_{i+3}x_{i+1} x_{i-2}x_{i-4}
\ldots x_{1}\, t_{i}
\end{equation}
for $i$ odd and
\begin{equation}
\label{sigma-i-even}
{\sigma_i} = x_{n-1} x_{n-3} \ldots x_{i+2}x_{i} x_{i-1}x_{i-3}
\ldots x_{1}\, t_{i}
\end{equation}
for $i$ even.
From (\ref{eqn:t-x-comm}), we see that ${{\sigma_i}^2}=1$.
The group element $\sigma_i$ permutes the
$i^{\rm th}$ and $(i+1)^{\rm th}$
hedgehogs and twists all of the hedgehogs.
Thus, the $\sigma_i$s generate
a copy of the permutation group $S_n$. The $\sigma_i$s do not
commute with the $x_i$s, however; instead they act
according to the semi-direct product structure noted above.
On the other hand, the situation is a bit different for
$n$ even. This may be a surprise since one might expect
that $n$ even is the same as $n$ odd but with the last
hedgehog held fixed far away. While this is true, exchanging the
last hedgehog with the others brings in an additional layer of
complexity which is not present for $n$ odd. The construction
above, Eqs. \ref{sigma-i-odd}, \ref{sigma-i-even}, does not
work. One of the hedgehogs will be left untwisted by such
a construction; since subsequent $\sigma_i$s will permute this
untwisted hedgehog with others, we must keep track of the
untwisted hedgehog and, therefore, the $\sigma_i$s will
not generate the permutation group. 
In the even hedgehog number case, the group of
tranformations has a $(\mathbb{Z}_2)^{n-1}$ subgroup,
as in the odd case, but there isn't an $S_n$ subgroup,
unlike in the odd case. To understand the even case,
it is useful to note that in both cases, every transformation
either (a) twists an even number of ribbons,
which is the subgroup $(\mathbb{Z}_2)^{n-1}$;
(b) performs an even permutation, which is the
subgroup $A_n$ of $S_n$; or (c)
twists an odd number of ribbons and performs
an odd permutation. Another way of saying this is
that the group of transformations is the `even part' of
$(\mathbb{Z}_2)^{n}\rtimes S_n$: the subgroup of
$(\mathbb{Z}_2)^{n}\rtimes S_n$ consisting of
those elements whose $(\mathbb{Z}_2)^{n}$ parity
added to their $S_{n}$ parity is even. In the
odd hedgehog number case, this is the
semidirect product $(\mathbb{Z}_2)^{n-1}\rtimes S_n$;
in the even hedgehog number case, it is not.
As we will see in Section \ref{sec:projective}, the
difference between the even and odd hedgehog number
cases is related to the fact that, for an even number
of hedgehogs, the Hilbert space decomposes into
even and odd total fermion number parity sectors.
By contrast, the situation is simpler for an
odd number of hedgehogs, where the parity
of the total fermion number is not well-defined and
the representation is irreducible.

To summarize, we have given some plausible
heuristic arguments that the `statistics' of $+1$ hedgehogs
in a model of $2N=8$ Majorana fermions
is governed by a group
$E((\mathbb{Z}_2)^{n}\rtimes S_{n})$,
the `even part' of $(\mathbb{Z}_2)^{n}\rtimes S_{n}$:
those elements of $(\mathbb{Z}_2)^{n}\rtimes S_{n}$
in which the parity of the sum of the entries of the element
in $(\mathbb{Z}_2)^{n}$ added to the parity of the
permutation in $S_{n}$ is even.
(The same group governs the $-1$
hedgehogs). Rather than devoting more time
here to precisely determining the group for
the toy model, we will move on to the problem
which is our main concern here,
a system of $2N\rightarrow\infty$ Majorana fermions.
This problem is similar, with some important differences.
(1) The target space is no longer $S^2$ but is, instead,
$U(N)/O(N)$. (2) Consequently, the defects do not carry a sign. There
is no preferred pairing into $\pm$ pairs; the defects
are all on equal footing. All $2n$ of them can be exchanged.
(3) The group obtained by computing $\pi_1$
of the space of configurations of $2n$
defects then becomes the direct product
of the `ribbon permutation group'
${\cal T}^r_{2n}$ with a trivial $\Z$,
${\cal T}_{2n}=\Z \times {\cal T}^r_{2n}$.
The ribbon permutation group ${\cal T}^r_{2n}$ is given by
${\cal T}^r_{2n} \equiv \Z_2 \times E((\mathbb{Z}_2)^{2n}\rtimes S_{2n})$, where $E((\mathbb{Z}_2)^{2n}\rtimes S_{2n})$
is the `even part' of $(\mathbb{Z}_2)^{2n}\rtimes S_{2n}$.

\section{Fundamental Group of the Multi-Defect Configuration Space}
\label{sec:Kane_space}

In Section \ref{sec:free-fermion} we concluded that
the effective target space for the order parameter
of a system of fermions in 3D with no symmetries is
$U(N)/O(N)$ -- which, as is conventional, we will simply call $U/O$,
dropping the $N$ in the large-$N$ limit. This enables
us to rigorously define the space of topological configurations,
$K_{2n}$, of $2n$ hedgehogs in a ball, and calculate
its fundamental group $\pi_1(K_{2n})$, thereby elucidating
Teo and Kane's \cite{Teo10} hedgehog motions and
unitary transformations.

We now outline the steps involved in this calculation:

\begin{itemize}

\item We approximate the space $U/O$ by a {\it cell complex}
(or CW complex), ${\cal C}$, a topological space constructed by
taking the union of disks of different dimensions and
specifying how the boundary of each higher-dimensional
disk is identified with a subset of the lower-dimensional disks.
This is a rather crude approximation in some respects,
but it is sufficient for a homotopy computation.

\item We divide the problem into (a) the motion of the hedgehogs
and (b) the resulting deformation of the field configuration
between the hedgehogs. This is accomplished by expressing
the configuration space in the following way. Let us call the
configuration space of $2n$ distinct points
in three dimensions $X_{2n}$. (For the sake of mathematical convenience,
we will take our physical system
to be a ball $B^3$ and stipulate that the points must lie
inside a ball $B^3$. Let's denote the space of field configurations
by ${\cal M}_{2n}$. This space is the space of maps to $U/O$
from $B^3$ with $2n$ points (at some standard locations) excised.
The latter space is denoted by
${B^3}\,\backslash\, 2n\mbox{ standard points}$.
Since we will be approximating $U/O$ by ${\cal C}$,
we can take ${\cal M}_{2n}$ to be the space of
maps from ${B^3}\,\backslash\, 2n\mbox{ standard points}$
to ${\cal C}$ with boundary conditions at the $2n$ points
specified below. Then, there is a {\it fibration} of spaces:

\[\begindc{0}[30]
    \obj(1,2)[a]{$\mathcal{M}_{2n}$}
    \obj(2,2)[b]{${K}_{2n}$}
    \obj(2,1)[c]{$X_{2n}$}
    \mor{a}{b}{}
    \mor{b}{c}{}
\enddc\]

\item We introduce another two fibrations which further
divide the problem into more manageable pieces:

\[\begindc{0}[30]
    \obj(1,2)[a]{${R}_{2n}$}
    \obj(2,2)[b]{${K}_{2n}$}
    \obj(2,1)[c]{$Y_{2n}$}
    \mor{a}{b}{}
    \mor{b}{c}{}
\enddc\]

\[\begindc{0}[30]
    \obj(1,2)[a]{$\mathcal{N}_{2n}$}
    \obj(2,2)[b]{${Y}_{2n}$}
    \obj(2,1)[c]{$X_{2n}$}
    \mor{a}{b}{}
    \mor{b}{c}{}
\enddc\]

The original fibration is kind of a "fiber-product" of the two new fibrations.
Here, $R_{2n}$ is essentially the space of order parameter textures
interpolating between the hedgehogs, and $Y_{2n}$ is the space of
configurations of $2n$ points with infinitesimal spheres
surrounding each point and maps from each of these
spheres to ${\cal C}$. ${\cal N}_{2n}$ is the space
of maps from $2n$ infinitesimal spheres to ${\cal C}$,
with each one of the spheres surrounding a different
one of the $2n$ points (at some standard locations)
excised from $B^3$. We call these order parameter
maps from infinitesimal spheres to ${\cal C}$
``germs''.

\item Having broken the problem down into smaller
pieces by introducing these fibrations, we use the
fact that a fibration
$F \rightarrow E\rightarrow B$
induces a long exact sequence for homotopy groups
\begin{equation*}
\ldots \rightarrow \pi_{i}(E)\rightarrow
{\pi_i}(B)\rightarrow\pi_{i-1}(F)\rightarrow\pi_{i-1}(E)
\rightarrow ...
\end{equation*}
For instance, applying this to the
fibration ${\cal M}_{2n}
\rightarrow K_{2n} \rightarrow
X_{2n}$ leads to the exact sequence
$\ldots \rightarrow \pi_1({\cal M}_{2n})
\rightarrow \pi_1(K_{2n}) \rightarrow
\pi_1(X_{2n})\rightarrow 1$.  It follows that $\pi_1(K_{2n})$ is an extension of the permutation
group $S_{2n}=\pi_1(X_{2n})$.
By itself, the above long exact sequence is not very helpful
for computing any of the homotopy groups involved
unless we can show by independent means that
two of the homotopy groups are trivial. Then the homotopy
groups which lie between the trivial ones in the sequence
are tightly constrained.

\item We directly compute that
${\pi_1}({\cal N}_{2n})=(\mathbb{Z}_2)^{2n}$
and ${\pi_1}(X_{2n})=S_{2n}$. We show that
the homotopy exact sequence then implies that
${\pi_1}(Y_{2n})=(\mathbb{Z}_2)^{2n}\rtimes S_{2n}$.

\item We compute the homotopy groups of $R_{2n}$,
defined by the fibration $R_{2n}\rightarrow
K_{2n} \rightarrow Y_{2n}$.
This computation involves a different
way from the cell structure of thinking about the topology
of a space, called the ``Postnikov tower'', explained in
detail Appendix \ref{sec:Postnikov}. The basic idea is to approximate
a space with spaces with only a few non-trivial homotopy groups.
(This is analogous to the cell structure, which has only a few
non-trivial homology groups.) The simplest examples of such spaces
are Eilenberg-Mac Lane spaces,
which only have a single non-trivial homotopy group.
The Eilenberg-Mac Lane space $K(A,m)$ is defined for a
group $A$ and integer $m$ as the space with homotopy group
${\pi_m}(K(A,m))=A$ and ${\pi_k}(K(A,m))=0$ for all $k\neq m$.
(The group $A$ must be Abelian for $m>1$.)
Such a space exists and is unique up to homotopy.
A space $T$ with only two non-trivial homotopy groups can
be constructed through the fibration
$K(B,n) \rightarrow T \rightarrow K(A,m)$. The space $T$ has ${\pi_m}(T)=A$
and ${\pi_n}(T)=B$, as may be seen from the
corresponding long exact sequence for homotopy groups.
Continuing in this fashion, one
can construct a sequence of such approximations
$M_n$ to a space $M$. They are defined by
${\pi_k}({M_n})={\pi_k}(M)$ for $k\leq n$ and
${\pi_k}({M_n})=0$ for $k>n$. They can be constructed
iteratively from the fibration
$K(A,n) \rightarrow {M_n} \rightarrow M_{n-1}$, where ${\pi_n}(M)=A$.

\item With ${\pi_1}(Y_{2n})$, ${\pi_2}(Y_{2n})$,
${\pi_0}(R_{2n})$ and  ${\pi_1}(R_{2n})$ in hand,
we compute the desired group ${\pi_1}(K_{2n})$
from the homotopy exact sequence.

\end{itemize}

We now go through these steps in detail.
\vskip 0.25 cm

{\bf Approximating U/O by a cell complex}.
Depending on microscopic details,
gradients in the overall phase of the
fermions may be so costly that we wish
to consider only configurations in which this
overall phase is fixed. We will refer to this
as the scenario in which `phase symmetry is broken'.
In this case, the effective target space is $SU/SO$,
the non-phase factor of $U/O \cong U(1)/O(1) \times SU/SO$.
In this case, we simplify matters by replacing $SU/SO$
by $\til{U/O}$, the universal cover of $U/O$.
$\til{U/O}$ is homotopy equivalent to $SU/SO$,
so this substitution is harmless.  This substitution
results in a reduced configuration space $\til{K}_{2n}$
and we will concentrate first on calculating $\pi_1(\til{K}_{2n})$.
In an appendix, we show that this
reduction essentially makes no difference:
$\pi_1(K_{2n}) = \pi_1(\til{K}_{2n})\times \Z$.

We now define a cell complex ${\cal C}$
approximating $\til{U/O}$.
In constructing this cell structure, we are not interested
in the beautiful homogeneous nature of $\til{U/O}$
but rather only its homotopy type.  The homotopy type
of a space tells you everything you will need to know
to study {\em deformation classes} of maps either
into or out of that space.  An important feature of any
homotopy type is the list of homotopy groups
(but these are by no means a complete characterization
in general). For $\til{U/O}$, the
homotopy groups are $\pi_i(\til{U/O}) =
0, \Z_2, \Z_2, 0, \Z, 0, 0, 0, \Z$ for $i = 1, \dots, 9$ and
thereafter $\pi_i(\til{U/O})$ cycles through the last eight groups.
(For $U/O$, the first group would be $\Z$.)

Because $\til{U/O}$ is simply-connected,
but has nontrivial $\pi_2$, it natural in building a
cellular model for its homotopy type to begin with $S^2$.
Since $\pi_2 (S^2) = \Z$ and we only need
a $\Z_2$ for $\pi_2(\UO)$, we should kill off the even
elements by attaching a 3-cell $D^3$ using a
degree-2 map of its boundary 2-sphere to the original $S^2$.
For future reference, take this map to be
$(\theta, \phi) \to (2\theta, \phi)$ in a polar coordinate
system where the north pole $N = (\pi,0)$.
Similarly, a 4-cell is attached to achieve
$\pi_3(\UO) \cong \Z_2$.  The necessity of the
4-cell is proved (Fact 1) below.

The preceding logic leads us to the cell structure:
\begin{equation}
\label{eq:U/O_cell_structure}
{\cal C} = S^2 \bigcup_{\text{degree}=2} D^3 \bigcup_{\text{2Hopf}} D^4 \bigcup \text{cells of dimension} \geq 5
\end{equation}

Since we are only trying to compute the fundamental group $\pi_1(\til{K}_{2n})$ from our various homotopy long exact sequences, we do not have to figure out the higher cells (dimension $\geq 5$) of $\UO$.  We will, however, verify that $\pi_3(\UO)$ is generated by the Hopf map into the base $S^2 \subset \UO$.

To summarize, we will henceforth assume that the order
parameter takes values in the cell complex ${\cal C}$.
Although ${\cal C}$ is a crude approximation for
U/O, it is good enough for the topological calculations
which follow.

{\bf Dividing the problem into the motion of the hedgehog
centers and the deformation of the field configuration.}
Let us assume that our physical system is a ball
of material $B^3$. Let $n \geq 0$ be the number of
hedgehog pairs in the system.
A configuration in $\til{K}_{2n}$ is a texture in the
order parameter, $\Phi(x): {B^3}\rightarrow {\cal C}$,
which satisfies the following
boundary conditions at the boundary of $B^3$
and at the $2n$ hedgehog locations (which
are singularities in the order parameter).
The order parameter has winding number $0$
at the boundary of the ball, $\partial B^3$
and winding number $1$ around each of the
hedgehog centers. (Recall that ${\pi_2}({\cal C})=\mathbb{Z}_2$,
so the winding number can only be $0$ or $1$).

From its definition, $\til{K}_{2n}$ is the total space of a fibration:

\[\begindc{0}[30]
    \obj(1,2)[a]{$\mathcal{M}_{2n}$}
    \obj(2,2)[b]{$\til{K}_{2n}$}
    \obj(2,1)[c]{$X_{2n}$}
    \mor{a}{b}{}
    \mor{b}{c}{}
\enddc\]

\noindent The above diagram suggests that we
should think of the fibration
$\mathcal{M}_{2n}
\rightarrow \til{K}_{2n} \rightarrow X_{2n}$
in the following way: above each point in
$X_{2n}$ there is a fiber $\mathcal{M}_{2n}$;
the total space formed thereby is $\til{K}_{2n}$.
(This is not quite a fiber bundle, since we
do not require that there be local coordinate
charts in which $\til{K}_{2n}$ is simply the
direct product.) Here, $X_{2n}$ is the simply the
configuration space of $2n$ distinct points in $B^3$.
We write this formally as
$X_{2n} = \prod_{i=1}^{2n} B^3 \setminus \text{big diagonal}$.
(The big diagonal consists of $2n$-tuples of points in $B^3$
where at least two entries are identical.)
The space $\mathcal{M}_{2n}$ consists of maps
from $B^3 \setminus 2n \text{ points in a fixed standard position}$
to ${\cal C}$ with the prescribed winding numbers given
in the preceding paragraph.

\begin{figure}[htpb]
\centering
\includegraphics[scale=0.5]{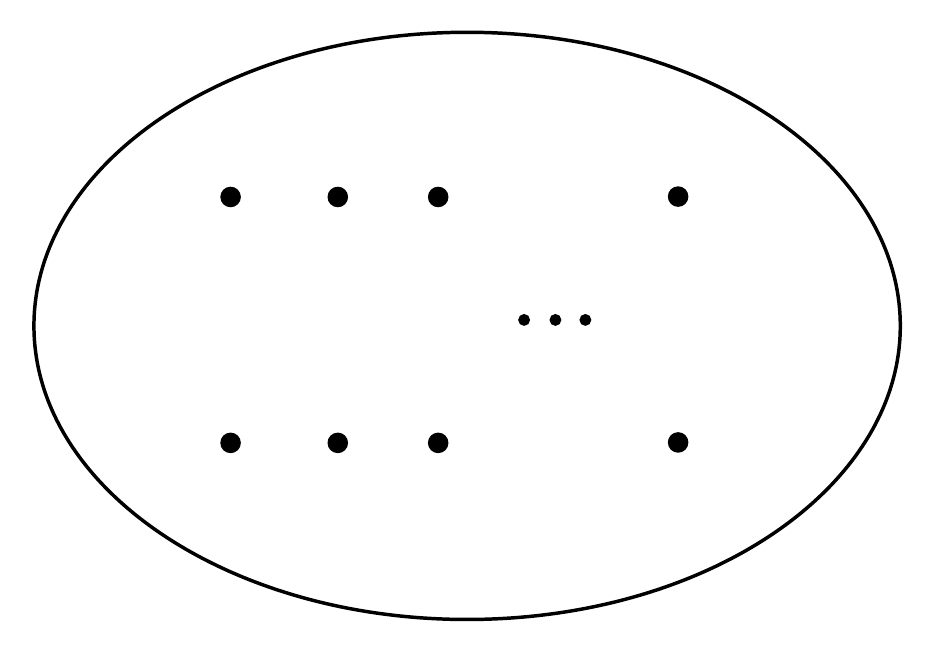}
\caption{$B^3 \setminus 2n$ points in standard position.
The space $\mathcal{M}_{2n}$ consists of maps from
this manifold to ${\cal C}$.}
\label{fig:M2n}
\end{figure}

{\bf Germs of order parameter textures.}
It is helpful to introduce an intermediate step in the fibration.
Define a point in $Y_{2n}$ as a configuration in
$X_{2n}$ together with a ``germ''
of $\Phi(x)$, which we call $\til{\Phi}(x)$,
defined only near $\partial B^3$
and the $2n$ points. The idea behind the germ
$\til{\Phi}(x)$ is to forget about the order parameter
$\Phi(x)$ except for its behavior in an infinitesimal
neighborhood around each hedgehog center
and at the boundary of the system.
$\til{\Phi}(x)$ must satisfy the same boundary conditions as
$\Phi(x)$ itself. We take $\til{\Phi}(x)$ to be
constant on $\partial B^3$ and to have
winding number $1$ around each of the
hedgehog centers. With this definition,
we now have the fibration:

\[\begindc{0}[30]
    \obj(1,2)[a]{$\mathcal{N}_{2n}$}
    \obj(2,2)[b]{$Y_{2n}$}
    \obj(2,1)[c]{$X_{2n}$}
    \mor{a}{b}{}
    \mor{b}{c}{}
\enddc\]

\noindent where $\mathcal{N}_{2n}$ is the space of
(germs of) order parameter textures $\til{\Phi}$ from the
neighborhoods of the $2n$ fixed standard points
and $\partial B^3$ to ${\cal C}$.
We will henceforth replace discussion of germs
with the equivalent and simpler concept of maps
on $\partial B^3 \cup \left( \bigcup_{i=1}^{2n} S_i^2 \right)$
where $S_i^2$ is a small sphere surrounding
the $i$th standard point.  Thus,
\begin{equation}
\mathcal{N}_{2n} \subset
\text{Maps}\biggl(\Bigl(\partial B^3 \cup \bigcup_{i=1}^{2n} S_i^2
\Bigr) \to {\cal C} \biggr).
 \end{equation}

We now define $Q_{2n}$ as the ball $B^3$
with small balls (denoted below by interior$(S_i^2)$)
centered about the hedgehogs
deleted:
\begin{equation}
Q_{2n}= \Bigl(B^3 \setminus \bigcup_{i=1}^{2n} \text{interior}(S_i^2)
\Bigr)
\end{equation}
for fixed standard positions $i = 1, \dots , 2n$.
Then $R_{2n}$ is the space of order parameter textures on
$Q_{2n}$ which satisfy the boundary condition
that the winding number is $0$ on $\partial B^3$
and $1$ on each of the small spheres.
With this definition, we have the fibration:

\[\begindc{0}[30]
    \obj(1,2)[a]{$R_{2n}$}
    \obj(2,2)[b]{$\til{K}_{2n}$}
    \obj(2,1)[c]{$Y_{2n}$}
    \mor{a}{b}{}
    \mor{b}{c}{}
    \label{eqn:R_2n-def}
\enddc\]

\noindent Given the cell structure ${\cal C}$, we can
specify the boundary conditions for the order
parameter precisely. On $\partial B^3$,
the order parameter is equal to the North Pole
in $S^2$. (Recall that ${S^2}\subset{\cal C}$ is
the bottom cell of the structure ${\cal C}$ which
we are using to approximate U/O.)  On each of
the spheres $S_i^2$, the order parameter
$\Phi(x)$ defines a map from ${S_i^2}\rightarrow S^2$
which is the identity map
(where, again $S^2$ is understood as a subset
of the order parameter space ${S^2}\subset{\cal C}$).
This ensures that the order parameter has the
correct winding numbers at the boundaries of $Q_{2n}$.
In essence, what we have done in writing
Eq. \ref{eqn:R_2n-def} is to break up an
order parameter texture containing hedgehogs
into (a) the hedgehogs together with the order parameter
on infinitesimal neighborhoods around them (i.e. `germs')
and (b) order parameter textures in the intervening regions
between the hedgehogs. The space of configurations (a)
is $Y_{2n}$; the space of configurations (b) is $R_{2n}$.

The name $R_{2n}$ is for ``ribbons.''
As we saw in Section \ref{sec:tethered},
if the order parameter manifold were $S^2$,
we could summarize an order parameter texture
by looking at the inverse image of the North Pole
$N \subset S^2$ and a fixed tangent vector at the North Pole.
The inverse images form a collection of ribbons.
Now, the order parameter manifold is actually
(approximated by) ${\cal C}$, but the bottom cell
in ${\cal C}$ is $S^2$. The effect of the 3-cell
is that hedgehogs lose their sign, so  there is no
well-defined ``arrow'' running lengthwise along the ribbons.
The 4-cell allows the ``twist'' or framings of ribbons to be
altered at will by $\pm 2$.

{\bf Long exact sequence for homotopy groups.}
It is very convenient to use fibrations to calculate homotopy groups.
(For those interested in $K$-theory, the last two chapters of
Milnor's {\em Morse Theory} \cite{Milnor63}
are a must read and exhibit these methods with clarity.) As noted above,
fibrations have all the homotopy properties of fiber bundles
but are (often) found arising between function spaces
where it would be a lot of work -- and probably a distraction from important business -- to attempt to verify the existence of
locally trivial coordinate charts. Operationally, fibrations share with fiber bundles the all-important ``homotopy long exact sequence'':

\[\begindc{0}[30]
    \obj(1,2)[a]{$F$}
    \obj(2,2)[b]{$E$}
    \obj(2,1)[c]{$B$}
    \mor{a}{b}{}
    \mor{b}{c}{}
\enddc\]

\noindent we have:
$$\cdots \to \pi_{i+1}(B) \to \pi_i(F) \to \pi_i(E) \to \pi_i(B) \to \pi_{i-1}(F) \to \cdots$$

We now compute $\pi_1(Y_{2n})$ from the exact sequence:
\begin{equation}
\label{eq:pi_1(Y_2n)_computation}
\begindc{0}[3]
    \obj(10,70)[a]{$\pi_2(X_{2n})$}
    \obj(28,70)[b]{$\pi_1(\mathcal{N}_{2n})$}
    \obj(45,70)[c]{$\pi_1(Y_{2n})$}
    \obj(62,70)[d]{$\pi_1(X_{2n})$}
    \obj(80,70)[e]{$\pi_0(\mathcal{N}_{2n})$}
    \mor{a}{b}{$\partial$}
    \mor{b}{c}{}
    \mor{c}{d}{}
    \mor{d}{e}{$\partial$}
\enddc
\end{equation}

\noindent
We can compute two of the homotopy
groups in this equation by inspection.
$\pi_1(X_{2n})$ is clearly the symmetric group of point exchange:
\begin{equation}
\label{eqn:pi_1_X}
\pi_1(X_{2n}) = S_{2n} .
\end{equation}
Meanwhile, $\pi_1(\mathcal{N}_{2n})$ amounts to (products of)
loops of maps from the $S_i^2$ to ${\cal C}$
and reduces to $2n$ copies of the third homotopy group
of ${\cal C}$ (and, therefore, to ${\pi_3}(\UO)$).
Thus, $\pi_1(\mathcal{N}_{2n})=(\mathbb{Z}_2)^{2n}$:
\begin{eqnarray}
\label{eqn:pi_1_N}
\pi_1(\mathcal{N}_{2n}) = \prod_{i=1}^{2n} \pi_1(\text{Maps}(S_i^2,\UO)) &=& \prod_{2n\text{ copies}}\pi_3(\UO)\cr &=&
(\Z_2)^{2n} .
\end{eqnarray}

To proceed further, we need to evaluate {\em boundary maps} in the homotopy exact sequence. In Appendix \ref{sec:hopf-map},
we explain boundary maps through the example of the Hopf map.
Consider Eq. \ref{eq:pi_1(Y_2n)_computation}.
$\pi_2(X_{2n})$ is generated by the $2n\choose{2}$
different 2-parameter motions in which a pair of hedgehogs
come close together and explore the 2-sphere of possible relative positions around their center of mass.  This 2-parameter family of motions involves no ``rotation'' of the maps $\til\Phi$
which describe $\pi_1(\mathcal{N}_{2n})$
(i.e the order parameter configuration in the neighborhood
of each hedgehog does not rotate as the hedgehogs are moved),
so the left most $\partial$ map in Eq. \ref{eq:pi_1(Y_2n)_computation}
is zero.  Similarly, a simple exchange of hedgehogs produces no twist
of the order parameter configuration in the neighborhood
of either hedgehog, so the second $\partial$
map of Eq. \ref{eq:pi_1(Y_2n)_computation} is also zero.
Thus, we have a short exact sequence:
\[\begindc{0}[3]
    \obj(10,30)[a]{$1$}
    \obj(23,30)[b]{$\Z_2^{2n}$}
    \obj(41,30)[c]{$\pi_1(Y_{2n})$}
    \obj(60,30)[d]{$S_{2n}$}
    \obj(75,30)[e]{$1$}
    \mor{a}{b}{}
    \mor{b}{c}{$\alpha$}
    \mor{c}{d}{$\beta$}
    \mor{d}{e}{}
\enddc\]
To derive this short exact sequence, we used
the triviality of the boundary maps noted above
and Eqs. \ref{eqn:pi_1_X}, \ref{eqn:pi_1_N}
to simplify Eq. \ref{eq:pi_1(Y_2n)_computation}.

There is a natural group homomorphism $s$:
$$s:S_{2n}\rightarrow\pi_1(Y_{2n})$$
which associates to each permutation a motion
of hedgehogs which permutes the hedgehogs in $Y_{2n}$
but does not rotate the order parameter configurations
$\til\Phi$ near the hedgehogs. Then
$\beta \circ s = id_{S_{2n}}$. In other words,
the sequence is {\em split}:
\[\begindc{0}[3]
    \obj(10,30)[a]{$1$}
    \obj(25,30)[b]{$\Z_2^{2n}$}
    \obj(40,30)[c]{$\pi_1(Y_{2n})$}
    \obj(60,30)[d]{$S_{2n}$}
    \obj(75,30)[e]{$1$}
    \mor{a}{b}{}
    \mor{b}{c}{$\alpha$}
    \mor{d}{e}{}
    \cmor((46,31)(51,33)(57,31)) \pright(51,35){$\beta$}
    \cmor((57,29)(51,27)(46,29)) \pleft(51,25){$s$}[\atleft,\dasharrow]
\enddc\]
Thus, $\pi_1(Y_{2n})$ is a semi-direct product.
To determine $\pi_1(Y_{2n})$ completely, it only remains to
identify how $s(S_{2n})$ acts on the twist factors $\Z_2^{2n}$
under conjugation. It is quite clear that this action is the only
natural one available: $s(p)$ acts on $\Z_2^{2n}$ by applying
the permutation $p$ to the $2n$ coordinates of $Z_2^{2n}$.
So, $\pi_1(Y_{2n}) \cong \Z_2^{2n} \rtimes S_{2n}$ with group law:

\begin{equation}\label{eq:pi_1(Y2n)_group_multiplication}
(v,p) \circ (v',p') = (v + p(v'), p \circ p')
\end{equation}

\noindent where $v \in \Z_2^{2n}$ is a $\Z_2$-vector, $p \in S_{2n}$ a permutation, and $p(v')$ the natural action of $S_{2n}$ on $\Z_2^{2n}$ applied to $v'$. Note that this is precisely the multiplication
rule which we obtained pictorially in Section \ref{sec:tethered}.

{\bf Computing the homotopy groups of $R_{2n}$,
the space of order parameter textures interpolating
between the hedgehogs}.
Of course, computing $\pi_1(Y_{2n})$ only gets us part
of the way home. Our ultimate goal is to compute
$\pi_1({\til K}_{2n})$. Thus, we now turn to
the homotopy long exact sequence:
$$\pi_2(Y_{2n}) \overset{\partial_2}{\longrightarrow} \pi_1(R_{2n}) \rightarrow \pi_1(\til{K}_{2n}) \rightarrow \pi_1(Y_{2n}) \overset{\partial_1}{\longrightarrow} \pi_0(R_{2n})$$
First consider $\partial_1$.  The kernel of $\partial_1$ is represented by loops in $Y_{2n}$ which extend to loops in $R_{2n}$.
A loop $\gamma$ in $Y_{2n}$ is a motion of the hedgehogs
together with rotations (about the spheres $S_i^2$)
of $\til\Phi$ which brings the system back to
its initial configuration. If a loop $\gamma$ is in
kernel of $\partial_1$, then
there is a corresponding loop in the configuration space of
ribbons in $B^3$ (obtained by lifting $\gamma$ to
$\til{K}_{2n}$).

The next steps are to compute ${\pi_0}(R_{2n})$
and ${\pi_1}(R_{2n})$. These computations are
detailed in Appendix \ref{sec:Postnikov}, where we see
that using the cell structure
${\cal C}$ which we introduced for $\UO$ is
tricky as a result of the higher cells.
Thus, we instead introduce the
``Postnikov tower'' for $\UO$ which allows
us to make all the calculations we need.
We find that $\pi_0(R_{2n})=\mathbb{Z}_2$
and ${\pi_1}(R_{2n})=(\mathbb{Z}_2)^{2n}$.

Thus, $\pi_1(\til{K}_{2n})$ sits in the following exact sequence:
\[\begindc{0}[3]
    \obj(10,20)[a]{$\pi_2(Y_{2n})$}
    \obj(28,20)[b]{$\pi_1(R_{2n})$}
    \obj(45,20)[c]{$\pi_1(\til{K}_{2n})$}
    \obj(62,20)[d]{$\pi_1(Y_{2n})$}
    \obj(81,20)[e]{$\pi_0(R_{2n})$}
    \obj(28,15)[b1]{\begin{sideways}$\cong$\end{sideways}}
    \obj(62,15)[c1]{\begin{sideways}$\cong$\end{sideways}}
    \obj(81,15)[c1]{\begin{sideways}$\cong$\end{sideways}}
    \obj(28,10)[b2]{$(\mathbb{Z}_2)^{2n}$}
    \obj(62,10)[d2]{$(\Z_2)^{2n} \rtimes S_{2n}$}
    \obj(81,10)[e2]{$\Z_2$}
    \mor{a}{b}{$\partial_2$}
    \mor{b}{c}{}
    \mor{c}{d}{}
    \mor{d}{e}{$\partial_1$}
\enddc\]
Recall that when we studied $\pi_2(Y_{2n})$, we found
(exactly as in the case of $\pi_2(X_{2n})$) that there are
$2n\choose{2}$ generators corresponding to relative
2-parameter motions of any pair of hedgehogs
around their center of mass.  This can be
used to understand the map
$\partial_2: \pi_2(Y_{2n}) \to \pi_2(R_{2n})$.
The image of any center of mass 2-motion is the ``bag''
containing the corresponding pair of hedgehogs.
Thus, $\text{coker}(\partial_2) \cong \Z_2$; it is
$\Z_2^{2n}$ modulo the even sublattice
(vectors whose coordinate sum is zero in $\Z_2$).
Thus, we have the short exact sequence:
\begin{equation}\label{eq:extension_problem}
\begindc{0}[3]
    \obj(10,20)[a]{$1$}
    \obj(27,20)[b]{$\text{coker}(\partial_2)$}
    \obj(44,20)[c]{$\pi_1(\til{K}_{2n})$}
    \obj(61,20)[d]{$\text{ker}(\partial_1)$}
    \obj(78,20)[e]{$1$}
    \obj(27,15)[c1]{\begin{sideways}$\cong$\end{sideways}}
    \obj(27,10)[b2]{$\Z_2$}
    \mor{a}{b}{}
    \mor{b}{c}{}
    \mor{c}{d}{$\pi$}
    \mor{d}{e}{}
\enddc
\end{equation}

The kernel $\text{ker}(\partial_1)$ consists of the part of $\pi_1(Y)$ associated with even ($2\pi$) twisting.  As shown in Figure \ref{fig:exchange2}, a simple exchange is associated to a total twisting of ribbons by $\pm 2\pi$.


Thus, $\partial_1(v,p) = \sum_{i=1}^{2n} v_i + \text{parity}(p) \in \Z_2$.  We use the notation $E(\mathbb{Z}_2^m \rtimes {S_m})$ for $\text{ker}(\partial_1)$.

{\em Note:} If $m=2$, $\Z_2^{m} \rtimes S_{m}$ is the dihedral group $D_4$ and its ``even'' subgroup $\text{ker}(\partial_1) \cong \Z_4$.  This shows that for $m$ even, the induced short exact sequence does not split, and the extension is more complicated:

\begin{equation*}
1 \longrightarrow \Z_2^{2n-1} \longrightarrow E(\mathbb{Z}_2^{2n} \rtimes S_{2n}) \longrightarrow S_{2n}\rightarrow 1
\end{equation*}

There is a final step required to solve the extension problem \ref{eq:extension_problem} and finish the calculation of $\pi_1(\til{K}_{2n})$.  We geometrically construct a homomorphism $s:\text{ker}(\partial_1) = E(\mathbb{Z}_2^{2n} \rtimes S_{2n}) \to \pi_1(\til{K}_{2n})$ which is a left inverse to the projection.

This will show that $\pi_1(\til{K}_{2n})$ is a semidirect product $\Z_2 \rtimes \text{ker}(\partial_1)$, but since $\Z_2$ has no nontrivial automorphism, the semidirect product is actually direct:

\begin{equation}\label{eq:pi_1(K_2n)_direct_product}
\pi_1(\til{K}_{2n}) \cong \Z_2 \times E(\mathbb{Z}_2^{2n} \rtimes S_{2n})\end{equation}

To construct $s$, note that all elements of $\text{ker}(\partial_1)$ can be realized by a loop $\gamma$ of maps into the bottom 2-cell of Eq. \ref{eq:U/O_cell_structure} $S^2$.  Still confining the order parameter (map) to lie in $S^2$, such a loop lifts to an arc $\til{\gamma}$ of ribbons representing an arc in $\til{K}_{2n}$.  We may choose the lift so that as the ribbons move, they never ``pass behind'' the $2n$ hedgehogs.  (For example, we may place the hedgehogs on the sphere of radius $=\frac{1}{2}$ inside the 3-ball $B^3$ (assumed to have radius $=1$) and then keep all ribbons inside $B_{\frac{1}{2}}^3$.  These arcs may be surgered (still as preimages of $N \subset S^2$) so that they return to their original position except for a possible accumulation of normal twisting $t2\pi$.  Since $\gamma \in \text{ker}(\partial_1)$, $t$ must be even.  Now, allowing the order parameter (map) to leave $S^2$ and pass over the 4-cell (of Eq. \ref{eq:U/O_cell_structure}), attached by $2\text{Hopf}:S^3 \to S^2$, we may remove these even twists.  (The 4-cell can introduce small closed ribbons with self-linking $=2$ in a small ball.  These small ribbons can be surgered into other ribbons.)  This lifts a generating set of $\text{ker}(\partial_1)$ into $\pi_1(\til{K}_{2n})$ as a set theoretic cross section (left inverse to $\pi$). But what about relations?  Because the entire loop is constant outside $B_{\frac{1}{2}}^3$, the corresponding homology class in $H_2(Q \times I;\Z_2) \cong \pi_1(R)$ is trivial, so $s$ is actually a group homomorphism.

\section{Representation Theory of the Ribbon Permutation Group}
\label{sec:projective}

In this section, we discuss the mathematics of the group
${\cal T}_{m}$ and its representation. The
purpose of this section is to show that
a direct factor of ${\cal T}_{m}$,
called the even ribbon permutation group,
is a ghostly recollection of the braid
group and the Teo-Kane representation of the even ribbon permutation group is a projectivized version of
the Jones representation of the braid group
at a $4^{th}$ root of unity, i.e. the representation
relevant to Ising anyons.

\subsection{Teo-Kane fundamental groups}

In Section V, we consider only even number of hedgehogs for
physical reasons.  In this section, we will include
the odd case for mathematical completeness.

The Teo-Kane fundamental
group is the fundamental group of the Teo-Kane configuration
space $K_m$.  As computed in Section V, ${\cal T}_{m}=\pi_1(K_m)\cong \mathbb{Z}\times \mathbb{Z}_2 \times
E(\mathbb{Z}_2^m \rtimes {S_m})$, where
the subgroup ${\cal T}^r_{m}=\mathbb{Z}_2 \times
E({\mathbb{Z}_2^m \rtimes S_m})$ is called the ribbon permutation group. Here, $E({\mathbb{Z}_2^m \rtimes S_m})$ is
the subgroup of ${\mathbb{Z}_2^m \rtimes S_m}$ comprised of
elements whose total parity in $\mathbb{Z}_2^m$
added to their parity in $S_m$ is even.
In the following, we will call
the group $G_m=E(\mathbb{Z}_2^m \rtimes {S_m})$ the {\it even} ribbon permutation group because
it consists of the part of $\pi_1(Y_m)$ associated with even ($2\pi$) twisting.
For the representations of the Teo-Kane fundamental
groups, we will focus on the even ribbon permutation groups $G_m$.
No generality is lost if we consider only irreducible representations
projectively because irreducibles of $\mathbb{Z}$ and $\mathbb{Z}_2$ contribute only overall phases.
But for reducible representations, the relative phases from representations of $\mathbb{Z}$
and $\mathbb{Z}_2$ might have physical consequences in interferometer
experiments.

The even ribbon permutation group $G_m$ is an index$=2$ subgroup of
$\mathbb{Z}_2^m \rtimes S_m$.  To have a better understanding of $G_m$, we
recall some facts about the important group $\mathbb{Z}_2^m \rtimes S_m$.
The group $\mathbb{Z}_2^m \rtimes S_m$ is the symmetry group of the
hypercube $\mathbb{Z}_2^m$, therefore it is called the hyperoctahedral group, denoted as
$C_m$.  $C_m$ is also a Coxeter group of type
$B_m$ or $C_m$, so in the mathematical literature it is also
denoted as $B_m$ or $BC_m$.  To avoid confusion with the braid
group ${\cal B}_m$, we choose to use the hyperoctahedral group notation $C_m$.
The group $C_m$ has a faithful representation as signed
permutation matrices in the orthogonal group $O(m)$: matrices with
exactly one non-zero entry $\pm 1$ in each row and column.
Therefore, it can also be realized as a subgroup of the
permutation group $S_{2m}$, called signed permutations: $\sigma:
\{\pm 1, \pm 2, \cdots, \pm m\} \rightarrow \{\pm 1, \pm 2, \cdots, \pm
m\}$ such that $\sigma(-i)=-\sigma(i)$.

We will denote elements in $C_m$ by a pair $(b,g)$, where
$b=(b_i)\in \mathbb{Z}_2^m$ and $g\in S_m$.  Recall the multiplication of
two elements $(b,g)$ and $(c,h)$ is given by $(b,g)\cdot (c,h)=(b+g.c, gh)$, where $g.c$
is the action of $g$ on $c$ by permuting its coordinates.  Let $\{e_i\}$ be the standard
basis elements of $\mathbb{R}^m$.  To save notation, we will also use it
for the basis element of $\mathbb{Z}_2^m$.  As a signed permutation matrix $e_i$ introduces
a $-1$ into the $i^{th}$ coordinate $x_i$.  Let $s_i$ be the transposition
of $S_m$ that interchanges $i$ and $i+1$.  As a signed permutation matrix, it
interchanges the coordinates $x_i, x_{i+1}$.
There is a total parity map $det: C_m\rightarrow \mathbb{Z}_2$ defined as
$det(b,g)=\sum_{i=1}^m b_i +parity(g)\; mod \; 2$.  We denote
the total parity map as $det$ because in the realization of $C_m$
as signed permutation matrices in $O(m)$, the total parity is just the
determinant.  Hence $G_m$, as the kernel of $det$,  can be identified as a subgroup of
$SO(m)$.  The set of elements
$t_i=(e_i,s_i), i=1,\cdots, m-1$ generates $G_m$.  As a signed permutation matrix,
$t_i(x_1,\cdots,x_i,x_{i+1},\cdots, x_m)=(x_1,\cdots,-x_{i+1},x_i,\cdots,
x_m)$.

Given an element $(b,g)\in C_m$, let $\mathbb{Z}_2^{m-1}$ be
identified as the subgroup of $\mathbb{Z}_2^m$ such that $\sum_{i=1}^{m} b_i$ is
even.  Then we have:

\begin{proposition}\label{presentation}

\begin{enumerate}

\item For $m\geq 2$, the even ribbon permutation group $G_m$ has a presentation as an abstract group
\begin{equation*}
\begin{split}
<t_1, \cdots, t_{m-1}| t_i^4=1, i=1,\cdots, m-1,\\
(t_it_{i+1}^{-1})^3=(t_i^{-1}t_{i+1})^3=1, i=1, \cdots, m-2>.
\end{split}
\end{equation*}

\item The exact sequence $$1\rightarrow \mathbb{Z}_2^{m-1} \rightarrow G_m \rightarrow S_m
\rightarrow 1$$ splits if and only if $m$ is odd.

\item When $m$ is even, a normalized $2$-cocycle
$f(g,h):  S_m\times S_m \rightarrow \mathbb{Z}_2^{m-1}$ for the extension of $S_m$ by $\mathbb{Z}_2^{m-1}$
above can be
chosen as $f(g,h)=0$ if $g$ or $h$ is even and $f(g,h)=e_{g(1)}$
if $g$ and $h$ are both odd.

\end{enumerate}
\end{proposition}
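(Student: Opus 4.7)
The plan is to prove the three parts of the proposition in order, working throughout with the concrete realization of $G_m$ as the subgroup of $C_m = \mathbb{Z}_2^m \rtimes S_m$ of total parity zero, and with the explicit generators $t_i = (e_i, s_i)$ given in the text.

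For part (1), I would begin by verifying that the stated relations hold on the generators via a direct computation in the semidirect product: $t_i^2 = (e_i + e_{i+1}, 1)$, hence $t_i^4 = 1$; and $t_i t_{i+1}^{-1} = (e_i + e_{i+2}, s_i s_{i+1})$ has permutation part the $3$-cycle $(i, i+1, i+2)$, whose cube restores both the $\mathbb{Z}_2^m$-factor and the $S_m$-factor, so $(t_i t_{i+1}^{-1})^3 = 1$, with the same check for $(t_i^{-1} t_{i+1})^3$. Letting $\Gamma$ denote the abstract group defined by this presentation, the tautological map $\Gamma \twoheadrightarrow G_m$, $t_i \mapsto (e_i, s_i)$, is surjective; to show it is an isomorphism I would bound $|\Gamma| \leq 2^{m-1} m! = |G_m|$ by identifying a normal subgroup $N \trianglelefteq \Gamma$ generated by the squares $t_i^2$, arguing that $N$ is abelian of order at most $2^{m-1}$ (since it maps onto the even-weight subgroup of $\mathbb{Z}_2^m$), and showing $\Gamma/N \cong S_m$. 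This reduction amounts to checking that modulo $N$ the stated relations imply the Coxeter relations $t_i^2 = 1$, $(t_i t_{i+1})^3 = 1$, and the far-commutations $t_i t_j = t_j t_i$ for $|i-j|\geq 2$. The last family is not among the stated relations, and deducing it from the given ones (modulo $N$) is the main obstacle of the entire proof; I expect it to follow from an identity specific to the rotation subgroup of a Coxeter system of type $B_m$.

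For part (2), in the odd case I would exhibit the explicit splitting $\sigma: S_m \to G_m$ defined by $\sigma(g) = (\varepsilon(g) \cdot \mathbf{1},\, g)$, where $\mathbf{1} = (1,\ldots,1) \in \mathbb{Z}_2^m$ and $\varepsilon(g)$ is the parity of $g$: the weight of $\varepsilon(g)\mathbf{1}$ is $m\,\varepsilon(g) \equiv \varepsilon(g) \pmod 2$ because $m$ is odd, so $\sigma(g) \in G_m$, and $\sigma$ is a homomorphism because $\mathbf{1}$ is $S_m$-invariant. In the even case I rule out any splitting by a commutation obstruction: any lift $\sigma(s_1) = (a, s_1)$ with $\sigma(s_1)^2 = 1$ forces $s_1 \cdot a = a$ (hence $a_1 = a_2$) while $a$ has odd weight. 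For $m \geq 4$, commutation $[s_1, s_j] = 1$ for $j \geq 3$ forces $[\sigma(s_1), \sigma(s_j)] = 1$, and an explicit calculation in the semidirect product shows this implies $s_j \cdot a = a$, i.e.\ $a_j = a_{j+1}$. Thus $a$ is constant on $\{1,2\}$ and on $\{3,\ldots,m\}$, so its weight equals $2 a_1 + (m-2)\, a_3$, which is even when $m$ is even, contradicting odd weight. The case $m=2$ is immediate since $G_2 \cong \mathbb{Z}_4$ contains no subgroup of order $2$ projecting isomorphically onto the quotient $\mathbb{Z}_2 \cong S_2$.

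For part (3), I would choose an explicit set-theoretic section $s: S_m \to G_m$, $s(g) = (\phi(g), g)$, with $\phi(g) = 0$ when $g$ is even and $\phi(g)$ a distinguished weight-one vector depending on $g$ when $g$ is odd, and extract the associated cocycle from the identity $s(g)\,s(h) = (f(g,h),\, 1) \cdot s(gh)$, yielding $f(g,h) = \phi(g) + g \cdot \phi(h) + \phi(gh)$. The parity identity $\text{parity}(g) + \text{parity}(h) + \text{parity}(gh) \equiv 0 \pmod 2$ guarantees $f(g,h)$ has even weight, hence lies in $\mathbb{Z}_2^{m-1}$. The four cases indexed by the parities of $g$ and $h$ then yield the stated formula: when at least one of $g, h$ is even, the weight-one contributions in $f(g,h)$ cancel via the $S_m$-equivariance of $\phi$, and when both are odd only the term $e_{g(1)}$ survives. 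The result is automatically a $2$-cocycle because it arose from a set-theoretic section, and its cohomology class is nontrivial precisely by part (2).
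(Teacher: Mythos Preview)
Your approaches to parts (2) and (3) are essentially the same as the paper's. For (2), the paper exhibits exactly your section $s(g)=(0,g)$ or $((1,\dots,1),g)$ in the odd case; for the even case it simply cites an argument of Jones, whereas you supply a direct elementary obstruction. Your argument there is correct (the key point you leave implicit is that $a - s_j\cdot a$ is supported on $\{j,j{+}1\}$ while $b - s_1\cdot b$ is supported on $\{1,2\}$, and for $j\ge 3$ these are disjoint, forcing both to vanish), and it is a genuine improvement in self-containedness over the paper's citation. For (3), the paper uses precisely your strategy with the concrete section $s(g)=(0,g)$ for $g$ even and $s(g)=(e_1,g)$ for $g$ odd; your phrase ``the weight-one contributions cancel via the $S_m$-equivariance of $\phi$'' is not quite right for this section (e.g.\ for $g$ even, $h$ odd one gets $e_{g(1)}+e_1$, not $0$), so you should just compute the four parity cases explicitly rather than appeal to an equivariance that $\phi$ does not have.

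Part (1) is where you diverge substantively from the paper, and where your acknowledged gap is real. The paper does not attempt a direct order-bounding argument; instead it starts from the Coxeter presentation of $C_m$ (type $B_m$) and applies the Reidemeister--Schreier procedure to the index-$2$ subgroup $G_m=\ker(\det)$, then simplifies the resulting relations. This is systematic and sidesteps exactly the obstacle you flag: in your approach you must show that the far-commutations $t_i t_j = t_j t_i$ for $|i-j|\ge 2$ (which certainly hold in $G_m$) are \emph{consequences} of the listed relations alone, and you offer no mechanism for this beyond an expectation. That expectation is optimistic: with only the relations $t_i^4=1$ and $(t_i t_{i+1}^{\pm1})^3=1$, there is no relation linking $t_i$ and $t_j$ for $|i-j|\ge 2$ at all, so the abstract group is an iterated amalgam over the cyclic subgroups $\langle t_i\rangle$ and is a priori infinite for $m\ge 4$. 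The Reidemeister--Schreier route is what makes the argument go through; if you want to keep your strategy, you would need to either add the missing $(t_i t_j^{-1})^2$-type relations and then match orders, or actually carry out the R--S rewriting to see how those commutations emerge.
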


We briefly give the idea of the proof of Prop. \ref{presentation}.  For $(1)$, first
we use a presentation of $C_m$ as a Coxeter group of type
$B_m$: $<r_1,\cdots, r_m|r_i^2=1, (r_1r_2)^4=1, (r_ir_{i+1})^3=1,
i=2,\cdots, m-1>$.  Then the Reidemeister-Schreier method \cite{Magnus76} allows
us to deduce the presentation for $G_m$ above.  For $(2)$, when
$m$ is odd, a section $s$ for the splitting can be defined as
$s(g)=(0,g)$ if $g$ is even and $s(g)=((11\cdots 1),g)$ if $g$ is
odd.  When $m$ is even, that the sequence does not split follows from
the argument in \cite{Jones83}.  For $(3)$, we choose a set map
$s(g)=(0,g)$ if $g$ is even and $s(g)=(e_1,g)$ if $g$ is odd.
Then a direct computation of the associated factor set as on page $91$ of \cite{Brown82} gives rise
to our $2$-cocycle.

As a remark, we note that there are another two obvious maps from the
hyperoctahedral group  $C_m$ to $\mathbb{Z}_2$.  One of them is the
sum of bits in $b$ of $(b,g)$.  The kernel of this map is the
Coxeter group of type $D_m$, which is a semi-direct product of $\mathbb{Z}_2^{m-1}$
with $S_m$.  The two groups $D_m$ and $G_m$ have
the same order, and are isomorphic when $m$ is odd (the two splittings induce the same
action of $S_m$ on $\mathbb{Z}_2^{m-1}$), but different
when $m$ is even.  To see the difference, consider the order $2$ automorphism $\phi: C_m\rightarrow C_m$  given by $\phi(x)=det(x) x$.
Its restriction is the identity on $G_m$, but non-trivial on $D_m$.

\subsection{Teo-Kane unitary transformations}

Suppose there are $m$ hedgehogs in $B^3$.  A unitary transformation
$T_{ij}=e^{\frac{\pi}{4}\gamma_i \gamma_j}$ of the
Majorana fermions is associated with the interchange of the ($i$,$j$)-pair
of the
hedgehogs. Interchanging the same pair twice results in the
\lq\lq braidless" operation $T_{ij}^2=\gamma_i \gamma_j$.  The
Majorana fermions $\gamma_i, i=1,2,\cdots, m$ form the Clifford
algebra $Cl_m(\mathbb{C})$.  Therefore, the Teo-Kane unitary
transformations act as automorphisms of the Clifford algebra
$$T_{ij}: \gamma \rightarrow T_{ij} \gamma T_{ij}^{\dagger}.$$
The projective nature of the Teo-Kane representation rears its
head here already as an overall phase cannot be constrained by
such actions.
Do these unitary transformations afford a linear representation
of the Teo-Kane fundamental group $K_m$? If so, what are their images?
The surprising fact is that the Teo-Kane
unitary transformations cannot give rise to a linear representation of the
Teo-Kane fundamental group.  The resulting representation is
intrinsically projective.   We consider only the even ribbon permutation group
$G_m$ from now on.

To define the Teo-Kane representation of $G_m$, we use the presentation of $G_m$
by $t_i, i=1,\cdots, m-1$ in Prop. \ref{presentation}.  The associated unitary matrix for $t_i$ comes from the
Teo-Kane unitary transformation $T_{i,i+1}$.  As was alluded above, there is a phase
ambiguity of the Teo-Kane unitary matrix.  We will discuss this ambiguity more in the
next subsection.  For the discussion below, we choose any matrix realization of the
Teo-Kane unitary transformation with respect to a basis of the Clifford algebra $Cl_m(\mathbb{C})$.
A simple computation using the presentation of $G_m$ in Prop. \ref{presentation}
verifies that the assignment of $T_{i,i+1}$ to $t_i$ indeed
leads to a representation of $G_m$.  Another verification follows from a relation to the Jones
representation in the next subsection.  We can also check directly that
this is indeed the right assignment for
$T^2_{i,i+1}: \gamma_i \rightarrow -\gamma_i,\gamma_{i+1} \rightarrow -\gamma_{i+1}$.
In the Clifford algebra $Cl_m(\mathbb{C})$, $\gamma_i, \gamma_{i+1}$
correspond to the standard basis element $e_i, e_{i+1}$.  The
element $t_i^2$ of $G_m$ is $(e_i+e_{i+1},1)$.  As a signed
permutation matrix, $t_i^2$ sends $(x_1,\cdots, x_i,
x_{i+1},\cdots, x_m)$ to $(x_1,\cdots, -x_{i},-x_{i+1},\cdots, x_m)$,
which agrees with the action of $T^2_{i,i+1}$ on $\gamma_i,
\gamma_{i+1}$.

To see the projectivity, the interchange of the ($i$,$i+1$)-pair hedgehogs corresponds
to the element $t_i=(e_i, s_i) \in G_m$.  Performing the
interchange twice gives rise to the element $t_i^2=(e_i+e_{i+1},1)$, denoted as $x_i$.  Since
$x_i$'s are elements of a subgroup of $G_m$ isomorphic $\mathbb{Z}_2^{m-1}$,
obviously we have
$x_ix_{i+1}=x_{i+1}x_i$.  On the other hand,
$T^2_{i,i+1}T^2_{i+1,i+2}=\gamma_i \gamma_{i+1} \cdot \gamma_{i+1}
\gamma_{i+2}=\gamma_i \gamma_{i+2}$, and $T^2_{i+1,i+2}T^2_{i,i+1}=\gamma_{i+1} \gamma_{i+2} \cdot \gamma_{i}
\gamma_{i+1}=-\gamma_i \gamma_{i+2}$.  Since it is impossible
to encode the $-1$ in the $x_i$'s of $G_m$, the representation has to
be projective.  Note that an overall phase in
$T_{ij}$ will not affect the conclusion.
In the next subsection, we will see this projective representation
comes from a linear representation of the braid group---the Jones
representation at a $4^{th}$ root of unity and the $-1$ is
encoded in the Jones-Wenzl projector $p_3=0$.

To understand the images of the Teo-Kane representation of $G_m$,
we observe that the Teo-Kane unitary transformations $T_{ij}=e^{\frac{\pi}{4}\gamma_i \gamma_j}$ lie inside
the even part $Cl_m^0(\mathbb{C})$ of $Cl_m(\mathbb{C})$.
Therefore, the Teo-Kane representation of $G_m$ is just the spinor
representation projectivized to $G_m \subset SO(m)$.  It follows
that the projective image group of Teo-Kane representation of the even ribbon permutation group is $G_m$ as an
abstract group.  Recall $Cl_m^0(\mathbb{C})$ is reducible into two irreducible sectors if $m$
is even, and irreducible if $m$ is odd.  When $m$ is even, it is important to
know how the relative action of the center $Z(G_m)\cong \mathbb{Z}_2$ of $G_m$ on the two
irreducible sectors. The center of $G_m$ is generated by the element $z=((1\cdots 1), 1)$, whose
corresponding Teo-Kane unitary transformation is $U=\gamma_1 \gamma_2 \cdots
\gamma_m$ up to an overall phase.  As we show in
Appendix \ref{sec:braid-group},
the relative phase of $z$ on the two sectors is always $-1$.

\section{Discussion}
\label{sec:discussion}

We now review and discuss the main results derived in this
paper. Using the topological classification of
free fermion Hamiltonians \cite{Ryu08,Kitaev09},
we considered a system of fermions in 3D which is allowed to
have arbitrary superconducting order parameter
and arbitrary band structure; and is also allowed to
develop any other possible symmetry-breaking
order such as charge density-wave, etc. -- i.e. we do
not require that any symmetries are preserved.
We argued in Section \ref{sec:free-fermion} that the space
of possible gapped ground states
of such a system is topologically equivalent to U($N$)/O($N$)
for $N$ large (for large $N$, the topology of U($N$)/O($N$)
becomes independent of $N$).
By extension, if we can spatially vary the superconducting
order parameter and band structure at will with no
regard to the energy cost, then there will be topologically
stable point-like defects classified by
${\pi_2}(U($N$)/O($N$))=\mathbb{Z}_2$.

This statement begs the question of whether
one actually can vary the order parameter and
band structure in order to create such defects.
In a given system, the energy cost may simply
be too high for the system to wind around
U($N$)/O($N$) in going around such a defect.
(This energy cost, which would include the condensation
energy of various order parameters, is not taken into
account in the free fermion problem.) If we
create such defects, they may be so costly
that it is energetically favorable for them
to simply unwind by closing the gap over large
regions. (The energy cost associated with such an unwinding
depends on the condensation energy of the order parameters
involved, which is not included in the topological classification.)
Thus, U($N$)/O($N$) is not the target space
of an order parameter in the usual sense because
the different points in U($N$)/O($N$) may not correspond to
different ground states with the same energy.
However, in Section \ref{sec:strong-coupling},
we have given at least one concrete model of free fermions with
no symmetries in 3D in which the topological defects
predicted by the general classification are present
and stable. Furthermore, Teo and Kane \cite{Teo10}
have proposed several devices in which
these defects are simply superconducting vortices
at the boundary of a topological insulator.

In order to understand the quantum mechanics of
these defects, it is important to first understand their
quantum statistics. To do this,
we analyzed the multi-defect configuration
space; its fundamental group governs defect
statistics. The configuration space of $2n$ point-like
defects of a system with `order parameter' taking values
in U($N$)/O($N$) is the space which we call $K_{2n}$.
It can be understood as a fibration.
The base space is $X_{2n}$, the configuration
space of $2n$ points (which we know has fundamental
group $S_{2n}$ in dimension three and greater).
Above each point in this base space there is
a fiber ${\cal M}_{2n}$ which is the space of maps from
the ball $B^3$ minu $2n$ fixed points to
U($N$)/O($N$) with winding number $1$ about each
of the $2n$ points. $K_{2n}$ is the total space of
the fibration. In Section \ref{sec:Kane_space},
we found that its fundamental group is
${\pi_1}(K_{2n})=\mathbb{Z}\times\mathbb{Z}_2
\times E({\mathbb{Z}_2^{2n} \rtimes S_{2n}})$, where
$E({\mathbb{Z}_2^{2n} \rtimes S_m})$ is
the subgroup of ${\mathbb{Z}_2^m \rtimes S_{2n}}$ comprised of
elements whose total parity in $\mathbb{Z}_2^{2n}$
added to their parity in $S_{2n}$ is even.

The fundamental group of the configuration
space is the same as the group of equivalence
classes of spacetime histories of a system
with $2n$ point-like defects. Since these different
equivalence classes cannot be continuously
deformed into each other, quantum mechanics
allows us to assign them different unitary
matrices. These different unitaries form
a representation of the fundamental group of the configuration
space of the system. However, we found in Section \ref{sec:projective}
that Teo and Kane's unitary transformations are not
a linear representation of ${\pi_1}(K_{2n})$,
but a {\it projective representation}, which is to say that
they represent ${\pi_1}(K_{2n})$ only up to a phase.
Equivalently, Teo and Kane's unitary transformations are
an ordinary linear representation of a {\it central
extension} of ${\pi_1}(K_{2n})$, as discussed in
Section \ref{sec:projective}.

This surprise lurks in a seemingly innocuous set of
defect motions: those in which defects $i$ and $j$ are rotated
by $2\pi$ and the order parameter field surrounding
them relaxes back to its initial configuration. This has the
following effect on the Majorana fermion zero mode operators
associated with the two defects:
\begin{equation}
{\gamma_i}\rightarrow -{\gamma_i}\, , \hskip 0.5 cm
{\gamma_j}\rightarrow -{\gamma_j}
\label{eqn:gamma-trans}
\end{equation}
One might initially expect that two such motions, one affecting
defects $i$ and $j$ and the other affecting $i$ and $k$,
would commute since they simply multiply the operators
involved by $-$ signs. However, the unitary operator which
generates (\ref{eqn:gamma-trans}) is\cite{footnote1}:
\begin{equation}
U^{ij} = e^{i\theta} \,{\gamma_i} {\gamma_j}\, .
\end{equation}
Thus, the unitary operators $U^{ij}$ and $U^{ik}$ do not commute;
they anti-commute:
\begin{equation}
U^{ij} \, U^{ik}= -U^{ik}\,U^{ij} \,  .
\end{equation}
However, as shown in Figure \ref{fig:movie-of-movies}, the
corresponding classical motions can be continuously deformed
into each other. Thus, a linear representation of the fundamental group
of the classical configuration space would have these two
operators commuting. Instead, the quantum mechanics of this
system involves a projective representation.

Projective quasi-particle statistics were first proposed by
Wilzcek\cite{Wilczek98}, who suggested a projective representation of the
permutation group in which generators $\sigma_j$ and $\sigma_k$ anti-commute
for $|j-k|\geq 2$, rather than commuting.  Read\cite{Read03} criticized
this suggestion as being in conflict with locality.  We can sharpen Read's
criticism as follows.
Suppose that one can perform the operation $\sigma_1$ by acting
on a region of space, called $A$, containing particles $1$ and $2$,
and one can perform $\sigma_3$ by acting on a region called $B$, containing
particles $3$ and $4$, and suppose that regions $A$ and $B$ are disjoint.
Consider the following thought experiment: let Bob perform operation
$\sigma_3$ at time $t=0$ and let Bob then repeat operation $\sigma_3$ at
time $t=1$.  Let Alice prepare a spin in the state $(1/\sqrt{2})(|\uparrow\rangle+|\downarrow\rangle)$ at time $t=-1$, and then let Alice perform the following
sequence of operations.
At time $t=-\epsilon$, for some small $\epsilon$, she
performs the unitary operation $|\uparrow\rangle\langle \uparrow| \otimes \sigma_1+|\downarrow\rangle\langle\downarrow | \otimes I$, where $I$ is the
identity operation, leaving the particles alone.
At time $t=+\epsilon$, she
performs the unitary operation $|\uparrow\rangle\langle \uparrow| \otimes I
+|\downarrow\rangle\langle\downarrow | \otimes \sigma_1$.   Thus, if
the spin is up, she performs $\sigma_1$ at time $t=-\epsilon$, while
if the spin is down, she does it at time $t=+\epsilon$.
Finally, at
time $t=2$, Alice performs the operation $\sigma_1$ again.  One may then
show that, due to the anti-commutation of $\sigma_1$ and $\sigma_3$, that
the spin ends in the state
$(1/\sqrt{2})(|\uparrow\rangle-|\downarrow\rangle)$.  However, if Bob had
not performed any operations, the spin would have ended in the original state
$(1/\sqrt{2})(|\uparrow\rangle+|\downarrow\rangle)$.  Thus, by performing these
interchange operations, Bob succeeds in transmitting information to a space-like
separated region (if $A$ and $B$ are disjoint, and the time scale in the
above thought experiment is sufficiently fast, then Alice and Bob are space-like
separated throughout).

Having seen this criticism, we can also see how Teo and Kane's construction
evades it.  The fundamental objects for Teo and Kane are not particles, but
particles with ribbons attached.  One may verify that, in every case where
operations in Teo and Kane's construction anti-commute, the two operations
do not act on spatially disjoint regions due to the attached ribbons.  That is,
the interchange of particles also requires a rearrangement of the order parameter field.

While this argument explains why a projective representation
does not violate causality, it does not really explain why
a projective representation actually occurs in this system.
Perhaps one clue is the fact that the hedgehogs have long-ranged
interactions in any concrete model. Even in the `best-case scenario',
in which the underlying Hamiltonian of the system is U($N$)/O($N$)-invariant, there will be a linearly-diverging
gradient energy for an isolated hedgehog configuration.
Thus, there will be a linear long-ranged force between hedgehogs.
Consequently, one might adopt the point of view that, as a result
of these long-ranged interactions, the overall phase associated with
a motion of the hedgehogs is not a purely topological quantity
(but will, instead depend on details of the motion) and, therefore,
need not faithfully represent the underlying fundamental group.
As one motion is continuously deformed into another
in Figure \ref{fig:movie-of-movies}, the phase of the wavefunction
varies continuously from $+1$ to $-1$ as the order parameter
evolution is deformed. It is helpful to compare this to another
example of a projective representation: a charged particle in
a magnetic field $B$. Although the system is invariant under the
Abelian group of translations, the quantum mechanics of the
system is governed by a non-Abelian projective representation
of this group (which may be viewed as a linear representation of
the `magnetic translation group'). A translation by $a$ in the $x$-direction,
followed by a translation by $b$ in the $y$-direction differs in its
action on the wavefunction from the same translations in
the opposite order by a phase $abB/\Phi_0$ equal to the magnetic flux through
the area $ab$ in units of the flux quantum $\Phi_0$.
If we continuously deform these two sequences of translations
into each other, the phase of the wavefunction varies continuously.
For any trajectory along this one-parameter family of trajectories
(or `movie of movies'), the resulting phase of the wavefunction
is given by the magnetic flux enclosed by the composition of
this trajectory and the inverse of the initial one. In our
model of non-Abelian projective statistics, the phase changes
continuously in the same way, but as a result of the evolution
of the order parameter away from the defects, rather than as a result of a magnetic field.

An obvious question presents itself: is there a related theory in which the hedgehogs are no longer confined? Equivalently, could 3D objects with non-Abelian ribbon permutation statistics ever be the weakly-coupled low-energy quasiparticles of a system? The most straightforward route will not work: if we had a U($N$)-invariant system and tried to gauge it to eliminate the linear confining force between hedgehogs, we would find that the theory is sick due to the chiral anomaly. If we doubled the number of fermions in order to eliminate the anomaly, there would be two Majorana modes in the core of each
hedgehog, and their energies could be split away from zero by a local interaction. This is not surprising since ribbon permutation statistics would violate locality if the hedgehogs were truly decoupled (or had exponentially-decaying interactions). On the other hand, if the hedgehogs were to interact through a Coulomb interaction (or, perhaps, some other power-law),
they would be neither decoupled nor confined, thereby satisfying the requirements that they satisfy locality and are low-energy particle-like excitations of the system. Elsewhere, we will describe a model which realizes this scenario \cite{Freedman10}

The non-Abelian projective statistics studied in the 3D class with no
symmetry can be generalized to arbitrary dimension.
As shown in Eq. (\ref{eq:defectclassification}), the classification of topological defects is independent of the spatial dimension.
Thus, in any dimension $d$ with no symmetry ($p=0$), point-like ($D=0$) topological defects are classified by
$\pi_0(R_{p-D+1})=\pi_0(R_1)=\mathbb{Z}_2$.
Moreover, it can be proved that analogous topological defects in different dimensions not only carry the same topological quantum number, but also have the same statistics. In Sec.
\ref{sec:Kane_space}, we have defined the configuration space
$\mathcal{M}_{2n}$ which is the space of maps from
$B^3\setminus\bigcup_{i=1}^{2n} B_i^3$ to $R_7=U/O$,
with specific boundary conditions. Now if we consider point defect
in the class with no symmetry in 4D, the configuration space
$\mathcal{M}_{2n}^{d=4}$ is defined by maps from
$B^4\setminus\bigcup_{i=1}^{2n} B_i^4$ to the classifying space
$R_6=Sp/U$. Noticing that $B^4\setminus\bigcup_{i=1}^{2n} B_i^4$
is homotopy equivalent to the suspension of
$B^3\setminus\bigcup_{i=1}^{2n} B_i^3$, we obtain that
$\mathcal{M}_{2n}^{d=4}$ is equivalent to the space of maps from
$B^3\setminus\bigcup_{i=1}^{2n} B_i^3\rightarrow \Omega\left(Sp/U\right)$, where $\Omega\left(Sp/U\right)$ is the loop space of
$Sp/U$. Since  $\Omega\left(Sp/U\right)\simeq U/O$,
we obtain $\mathcal{M}_{2n}^{d=4}\simeq \mathcal{M}_{2n}$. Thus we have proved that the configuration space $\mathcal{M}_{2n}$ is independent of the spatial dimension $d$. On the other hand, the fundamental
group of the configuration space $X_{2n}$ of $2n$ distinct points in $B^d$ is independent of $d$ as long as $d>2$. Consequently, the space $K_{2n}$ defined by the fibration $\mathcal{M}_{2n}\rightarrow K_{2n}\rightarrow X_{2n}$ is also topologically independent of spatial dimension $d$ for $d>2$. Thus, the proof we did for $d=3$ applies to generic dimension, and non-Abelian projective statistics exist in any spatial dimension $d>3$ for point defects
in the no-symmetry class. A similar analysis applies to extended
defects with dimension $D>0$. When the spatial dimension is increased by $1$ and the symmetry class remains the same, the classifying space is always changed from $R_{2+p-d}$ to $R_{2+p-d-1}\simeq \Omega^{-1}(R_{2+p-d})$. Consequently, at least for
simple defects with the topology of $S^D$, the statistics is independent of spatial dimension $d$ as long as $d$ is large enough. For point defects, the ``lower critical dimension" is $d=3$, while for line defects, i.e. $D=1$, the ``lower critical dimension" is at least $d=4$ since in $d=3$ we can have braiding between loops.


\appendix

\section{Topological Classification of Hamiltonians
with Charge Conservation but without Time-Reversal
Symmetry}

\label{sec:QnotT}

For the sake of completeness, in this appendix we
give the topological classification of free fermion Hamiltonians
with charge conservation but not time-reversal symmetry.
We begin with the zero-dimensional case. The charge
conservation condition is most transparent when the Hamiltonian
is written in terms of complex fermion operators, $c^{}_i$, $c_i^\dagger$,
rather than Majorana fermions:
\begin{equation}
\label{eqn:unitary-case-Ham}
H = \sum_{i,j} h_{ij} {c_i^\dagger} c^{}_j
\end{equation}
The indices $i,j$ range from $1$ to $N$, the number of
bands. So long as there is one $c$ and one $c^\dagger$ in each term,
the Hamiltonian will conserve charge. We make no further
assumption about the Hamiltonian. As in our discussion
in Section \ref{sec:free-fermion}, we are only interested in
the topology of the space of such Hamiltonians, so
two Hamiltonians are considered to be equivalent if they
can be continuously deformed into each other without closing
the gap. Thus, we can flatten the spectrum and assume that
$H$ and, therefore, $h_{ij}$ only has eigenvalues $\pm 1$;
any other gapped Hamiltonian can be continuously deformed
so that it satisfies this condition. Then, $h_{ij}$ can be written in
the form:
\begin{equation}
h = U^\dagger  \left( \begin{array}{cccccc}
1 &  &   &  & & \\
 & \ddots  &    &  &  &\\
   &    & 1  & &  &\\
   &    &   &  -1&  &\\
   &    &     &    &\ddots &\\
      &    &     &    &  & -1  \end{array} \right) U
\end{equation}
with $k$ diagonal entries equal to $+1$ and $N-k$
diagonal entries equal to $-1$ for some $k$. The space
of such matrices $h_{ij}$ is equal to the space of
matrices $U$, namely U($N$), modulo those matrices $U$
which commute with the diagonal matrix above, namely
U$(k)\times$U$(N-k)$. Thus, the space of matrices $h_{ij}$
is topologically equivalent to
\begin{equation}
{\cal M}_{2N} = \bigcup_{k=0}^{N} \text{U}(N)/(\text{U}(k)\times
\text{U}(N-k))
\end{equation}
For $N$ sufficiently large, the topology of these spaces
cannot depend on $k$, so we write this simply as
$\mathbb{Z}\times\text{U}(N)/(\text{U}(N/2)\times\text{U}(N/2))$.

Now suppose that we have a 1D system. As in section
\ref{sec:free-fermion}, we will consider the Dirac equation
which approximates the Hamiltonian in the vicinity of the
points in the Brillouin zone at which it becomes small.
We write:
\begin{equation}
H = \psi^\dagger(i{\gamma_1}{\partial_1} + M)\psi
\end{equation}
Here, we have suppressed the band index on
the fermion operators $\psi^\dagger$ and $\psi$.
Both $\gamma$ and $M$ are $N\times N$ matrices;
$\gamma_1$ satisfies  $\text{tr}({\gamma_1})=0$ and $\gamma_1^2 = 1$.
Then, the different fermion operators will have the same gap if
\begin{equation}
\{{\gamma_1},M\}=0\,, \hskip 0.5 cm M^2 = 1
\end{equation}
Viewed as a linear operator on $\mathbb{C}^N$,
$\gamma_1$ defines two $N/2$-dimensional subspaces,
its eigenvalue $+1$ eigenspace and its eigenvalue $-1$ eigenspace.
Let us call them $X_+$ and $X_-$. Since $\{{\gamma_1},M\}=0$,
$M$ is a map from $X_+$ to $X_-$. Thus, $M$ is an isometry
between two copies of $\mathbb{C}^{N/2}$ and can be viewed
as an element of $\text{U}(N/2)$. This can be made more
concrete by taking, without loss of generality.
\begin{equation}
{\gamma_1} = \left( \begin{array}{cc}
\mathbb{I} & 0 \\
0 & -\mathbb{I}  \end{array} \right)
\end{equation}
where $\mathbb{I}$ is the $N/2 \times N/2$ identity matrix.
Then, $M$ can be any matrix of the form
\begin{equation}
M = \left( \begin{array}{cc}
0 & U^\dagger \\
U & 0  \end{array} \right)
\end{equation}
for $U\in \text{U}(N/2)$.

Now, consider a 2D system. The Dirac equation
takes the form:
\begin{equation}
H = \psi^\dagger(i{\gamma_j}{\partial_j} + M)\psi
\end{equation}
with $j=1,2$. As before, $\gamma_1$ defines two
$N/2$-dimensional subspaces, $X_+$ and $X_-$.
Then $i{\gamma_2}M$ commutes with $\gamma_1$
and squares to $1$. Thus, it divides $X_+$ into
two spaces, $X_+^1$ and $X_-^2$ which are the $+1$ and $-1$
eigenspaces for $i{\gamma_2}M$ (and it similarly divides
for $X_-$). Thus, a choice of $M$ is a choice of
linear subspace $X_+^1$ of $X_+$ or, in other words,
\begin{eqnarray}
M &\in& \bigcup_{k=0}^{N} \text{U}(N/2)/(\text{U}(k)\times
\text{U}(N/2-k))\cr
&=& \mathbb{Z}\times\text{U}(N/2)/(\text{U}(N/4)\times
\text{U}(N/4))
\end{eqnarray}
Continuing in this way, we derive Table \ref{tbl:unitary-classifying}
analogous to Table \ref{tbl:classifying}
but for charge-conserving systems
without time-reversal symmetry. Table \ref{tbl:unitary-classifying}
has period-$2$ as the dimension is increased
whereas Table \ref{tbl:classifying} had period-$8$.

\begin{table*}
\begin{tabular}{c | c c c c c}
dim.: &  0    &   1 & 2 & 3 &\ldots\\
\hline\hline
no symm.& $\mathbb{Z}\times
 \frac{\text{U}(N)}{\text{U}(N/2)\times\text{U}(N/2)}$ &  $\text{U}(N/2)$&
 $\mathbb{Z}\times \frac{\text{U}(N/2)}{\text{U}(N/4)\times\text{U}(N/4)}$
 & $\text{U}(N/2)$ & \ldots\\
sublattice symm. & $\text{U}(N/2)$  &
$\mathbb{Z}\times \frac{\text{U}(N/2)}{\text{U}(N/4)\times\text{U}(N/4)}$
&$\text{U}(N/4)$
&$\mathbb{Z}\times \frac{\text{U}(N/4)}{\text{U}(N/8)\times\text{U}(N/8)}$
&\ldots
\end{tabular}
\caption{The period-$2$ (in both dimension and number of
symmetries) table of classifying spaces for
charge-conserving free fermion Hamiltonians
without time-reversal symmetry. The systems
have $N$ complex fermon fields in dimensions
$d=0,1,2,3,\ldots$ with no additional symmetries (apart from
charge-conservation) or with an additional symmetry, such
as a sublattice symmetry.
Moving $p$ steps to the right and $p$ steps down leads to
the same classifying space (but for $1/2^p$ as many fermion
fields), which is a reflection of Bott periodicity, as explained
in the text. The number of disconnected components of any such
classifying space -- i.e. the number of different phases in that
symmetry class and dimension -- is given by the corresponding
$\pi_0$, which may be found in Eq. \ref{eqn:unitary-stable-pi-0}.
Higher homotopy groups, which classify defects, can be computed
using Eq. \ref{eqn:unitary-periodicity}.}
\label{tbl:unitary-classifying}
\end{table*}

\begin{table*}
 \begin{tabular}[t]{|c|c|c|c|c|}
 \hline
Symmetry classes&Physical realizations&$d=1$&$d=2$&$d=3$
 \\\hline\hline
A&Generic ins.&0&Quantum Hall (GaAs, etc.)&0
\\\hline
AIII&Bipartite ins. &{ Carbon nanotube}&0&{\color{red} Z}
\\\hline
 \end{tabular}
  \caption{Topological periodic table in physical dimensions $1,2,3$ of the complex classes. A and AIII in the first column labels the symmetry classes in the Zirnbauer classification.\cite{Zirnbauer96,Altland97}. The second column are the requirements to the physical systems which can realize the corresponding symmetry classes. The three columns $d=1,2,3$ list the topological states in the spatial dimensions $1,2,3$ respectively. $0$ means the topological classification is trivial. The red label {\color{red} Z} stands for the topological state which have not been realized in realistic materials. }
  \label{tbl:periodiccomplex}
\end{table*}

The two rows in Table \ref{tbl:unitary-classifying}
correspond to the presence or absence of
an additional unitary symmetry which requires the eigenvalues
of the Hamiltonian to come in pairs with equal and
opposite energy $\pm E_i$, i.e. a sublattice symmetry,
as discussed in Sec. \ref{sec:bott}. 
Consider a zero-dimensional system described by
the Hamiltonian (\ref{eqn:unitary-case-Ham})
with such a symmetry.
Then there is a unitary matrix $U$ satisfying
\begin{equation}
\{ U, h\}=0 \hskip 1 cm {U^2} = 1
\end{equation}
We can now apply the same logic to $U$ and $M$
that we applied to $\gamma_1$ and $M$ when considering
one-dimensional systems. Thus, $M\in \text{U}(N/2)$.
More generally, it is clear that $U$ plays the role of an extra
$\gamma$-matrix, so the row of Table \ref{tbl:unitary-classifying}
for systems with sublattice symmetry is just shifted by
one from the row of the table for systems without it.

In order to compute the homotopy groups of these
classifying spaces, it is useful to follow the logic which
we employed in Sections \ref{sec:free-fermion}D,F.
We approximate the loop space of $\text{U}(N)$
by minimal geodesics from $\mathbb{I}$
to $-\mathbb{I}$: $L(\lambda) = e^{i\lambda A}$,
where $A$ is a Hermitian matrix satisfying
${A^2}=1$ so that $L(0)=\mathbb{I}$
and $L(\pi)=-\mathbb{I}$. Any such matrix $A$
divides $\mathbb{C}^N$ into $+1$ and $-1$ eigenspaces
or, in other words, it can be written in the form
\begin{equation}
A = U^\dagger  \left( \begin{array}{cc}
\mathbb{I}^{}_k & 0 \\
0 & -\mathbb{I}^{}_{N-k}  \end{array} \right)
 U
\end{equation}
where $\mathbb{I}_m$ is the $m\times m$ identity matrix.
Therefore $A\in\mathbb{Z}\times
 \frac{\text{U}(N)}{\text{U}(N/2)\times\text{U}(N/2)}$.
Thus, the loop space of $\text{U}(N)$ can be approximated
by $\mathbb{Z}\times \frac{\text{U}(N)}{\text{U}(N/2)\times\text{U}(N/2)}$.
Thus, ${\pi_k}(\text{U}(N)) = \pi_{k-1}(\mathbb{Z}\times
\text{U}(N)/(\text{U}(N/2)\times\text{U}(N/2)))$.

To compute homotopy groups of $\mathbb{Z}\times
\text{U}(N)/(\text{U}(N/2)\times\text{U}(N/2))$, consider
the homotopy long exact sequence associated to the
fibration
$$\text{U}(N/2)\rightarrow\text{U}(N)\rightarrow\text{U}(N)/\text{U}(N/2).
$$
It tells us that
$$
\ldots\pi_{i-1}(\text{U}(N/2))
\rightarrow {\pi_i}(\text{U}(N))\rightarrow
{\pi_i}(\text{U}(N)/\text{U}(N/2)) \ldots.
$$
If we are in the stable limit, i.e. for $N$ sufficiently large,
then $\pi_{i-1}(\text{U}(N/2)) = {\pi_i}(\text{U}(N))$,
so ${\pi_i}(\text{U}(N)/\text{U}(N/2))=0$. Now consider the
fibration
$$
\text{U}(N/2)\rightarrow\text{U}(N)/\text{U}(N/2)\rightarrow
\text{U}(N)/(\text{U}(N/2)\times\text{U}(N/2)).
$$
Then, since ${\pi_i}(\text{U}(N)/\text{U}(N/2))=0$,
the associated homotopy exact sequence
is simply
$$
0\rightarrow {\pi_i}(\text{U}(N)/(\text{U}(N/2)\times\text{U}(N/2)))
\rightarrow \pi_{i-1}(\text{U}(N/2))\rightarrow 0
$$
so ${\pi_i}(\text{U}(N)/(\text{U}(N/2)\times\text{U}(N/2)))=\pi_{i-1}(\text{U}(N/2))$.

Combining these two relations, we have
Bott periodicity for the unitary group:
\begin{eqnarray}
\label{eqn:unitary-periodicity}
{\pi_k}(\text{U}(N)) &=& \pi_{k-1}(\mathbb{Z}\times
 \text{U}(N)/(\text{U}(N/2)\times\text{U}(N/2)))\cr
 &=& \pi_{k-2}(\text{U}(N/2))
\end{eqnarray}
By inspection, we see that
\begin{eqnarray}
\label{eqn:unitary-stable-pi-0}
{\pi_0}(\text{U}(N))&=&0\cr
\pi_{0}(\mathbb{Z}\times
 \text{U}(N)/(\text{U}(N/2)\times\text{U}(N/2)))
 &=&\mathbb{Z}
\end{eqnarray}
Combining this with Eq. \ref{eqn:unitary-periodicity}, we
learn that the even homotopy groups of $\text{U}(N)$ vanish
and the odd ones are $\mathbb{Z}$; the reverse holds
for $ \text{U}(N)/(\text{U}(N/2)\times\text{U}(N/2)))$.

These homotopy groups can be used to classify the
different possible phases and topological
defects of systems in the symmetry classes/dimensions in
Table \ref{tbl:unitary-classifying}. Similar to Table \ref{tbl:periodic} in Sec. \ref{sec:bott}, we list in Table \ref{tbl:periodiccomplex} the nontrivial topological physical systems in the two symmetry classes.
In Ref. \onlinecite{Ryu08}, these were called A and AIII, respectively,
following the corresponding classification
in random matrix theory \cite{Zirnbauer96,Altland97}.
In A class (with charge conservation and without sublattice symmetry or time-reversal symmetry) a topological nontrivial state is classified by integer in 2d, which is the famous quantum Hall system. In the AIII class with sublattice symmetry, in 1d the carbon nanotube or graphene ribbon can be considered as an example with nontrivial edge states. Compared to the requirement of this symmetry class, carbon nanotube has too high symmetry because time-reversal symmetry is present. An orbital magnetic field can be coupled minimally to the carbon nanotube to get an example system with exactly the required symmetry of AIII class. In 3d there should be an integer classification of the AIII class but no realistic topological nontrivial material is known yet.

\section{Understanding the Role of Spatial Dimension
in the Classification of Free Fermion Systems}

\label{sec:dimension}

In Sections \ref{sec:free-fermion} and \ref{sec:QnotT},
we have seen that increasing the spatial dimension
moves a system through the progression of classifying
spaces oppositely to adding symmetries which square to
$-1$ (but in the same direction as adding symmetries
which square to $1$). The classifying space for
$d$-dimensional systems is the loop space of
the classifying space for $d+1$-dimensional systems
with the same symmetries. Symbolically,
\begin{equation}
R_{2+p-d}=\Omega(R_{2+p-(d+1)})
\end{equation}
where $\Omega$ denotes the loop space
and, as in Section \ref{sec:free-fermion}, $p$
is the number of symmetries and $d$ is the spatial
dimension. Thus, in using Tables \ref{tbl:classifying}
and \ref{tbl:unitary-classifying}, moving one step to the right,
which increases the dimension and one step down, which
increases the number of symmetries, leaves the classifying space
the same.

We explained this in Sections \ref{sec:free-fermion} and
\ref{sec:QnotT} by expanding the Hamiltonian
about the point(s) in the Brillouin
zone where the gap is minimum where it has the form
of the Dirac Hamiltonian. Analyzing the
space of such Hamiltonians then amounts to analyzing the
possible mass terms, namely the space of matrices
which square to $-1$ and anticommute with the $\gamma$-matrices
and anti-unitary symmetry generators.
One might worry that this analysis is not completely general
since it depends on our ability to expand the Hamiltonian
in the form of the Dirac Hamiltonian, and this seems non-generic.
However, Kitaev's texture theorem (unpublished)
states that a general gapped free fermion $H$
canonically deforms to a Dirac Hamiltonian as above without closing the gap, so only these need to be considered.

Another more algebraic approach involves the suspension isomorphism in KR theory. In complex K-theory: $KU^{n+d}(X\wedge S^d)\simeq KU^n(X),$
however, for $KR[A]$ with $^-(x_1,\ldots,x_d) = (-x_1,\ldots,-x_d),$
one gets: $KR^{n-d}(X\wedge \bar{S}^d)\simeq KO^n(X)$. (There is a
similar isomophism in twisted K-theory where $(^-)^2 = -1$.) In
(translation invariant) particle/hole
symmetric systems
\footnote{
  \begin{eqnarray*}
    &&\frac{i}{4}\sum\gamma_{x_i}^{\dag}A_{x_i-x_j}\gamma_{x_j} =
    \frac{i}{4}\sum\gamma_{x_i}A_{x_i-x_j}\gamma_{x_j}^{\dag}\\
    &&\hspace{1cm} = -\frac{i}{4}\sum\gamma_{x_j}^{\dag}A_{x_i-x_j}\gamma_{x_i} =
    -\frac{i}{4}\sum\gamma_{x_i}^{\dag}A_{x_i-x_j}\gamma_{x_j},
  \end{eqnarray*}
  so reversing a space coordinate introduces a minus sign.}
it can be argued via a high frequency cutoff that momentum space directions correspond not to
$\bbar{\R}^d$ but actually to $\bbar{S}^d$.

If translational symmetry is broken to a lattice $\Z^d$ symmetry,
then momentum space becomes a $d-$torus $T^d$
and it should replace $S^d$ in
the l.h.s. of the two isomorphisms above.
Fortunately, all K-theories are generalized cohomology theories and
thus depend only on \underline{stable} homotopy type. Tori have
extremely simple stable types:
\begin{equation}
  \label{eq:tori_types}
  \sum T^d = \sum(\vee_d(S^1)\vee \vee_{d \choose 2}(S^2) \vee \cdots \vee
  \vee_{d\choose d-1}(S^{d-1})\vee S^d),
\end{equation}
where $\sum$ denotes suspension. The same formula holds with
$(^-)$ above
all spaces.

Thus, for lattice-translational symmetry, we may employ
\begin{multline}
  \label{eq:trans_symm}
  KR^{n-d}(X\wedge \bar{T}^d) \simeq\\
   \til{KO}^n(X) \oplus_{d\choose
    d-1} \til{KO}^{n-1}(X) \oplus_{d\choose d-2} \cdots\\
    \oplus_{d\choose
    1}\til{KO}^{n-d+1}(X) \oplus KO^{n-d}(X).
\end{multline}

In this appendix, we want to sketch here a third way
of understanding ``spatial dimension = de-loop''
that is more geometric than either of the alternatives above.

  We warn the reader that this section is only a ``sketch'',
  so there may be tricky analytic
  details regarding the precise definition of the controlled spaces
  which we have over-looked. The math literature\cite{Roe96} in controlled
  K-theory is constructed in a less naive context. We are not sure if
  there is something essential we are missing.

Let $O_{\Z}$ be a limitingly large orthogonal group with integer $(\Z)$
control and a topology (like compact/open) where sliding a
disturbance off to $\pm\infty$ converges to the identity.
That is, $O_{\Z}$ is the set of infinite matrices $\Owe_{I,J}$
with a finite-to-one map (decapitalizing) $J\mapsto j\in\Z$ such that $|\Owe_{I,J}|\leq e^{-\text{const}|i-j|}$
and $\Owe_{I,J}^{\ast}\Owe_{I,K} = \delta_{J,K}$ with the following
identifications: Any matrix
$\Owe_{I,J}$ with $\Owe_{I_0,J}\equiv\delta_{I_0,J}$ is equivalent
to the matrix with one less row and column obtained by removing the
$I_0^{\textrm{th}}$ row and $I_0^{\textrm{th}}$ column.

By changing $l$, the number of distinct
$J$ map that map to a given $j$, we can choose any desired constant in the exponential
decay $\exp(-\text{const}|i-j|)$.  We could also have chosen to study the class of infinite
matrices with $\Owe_{I,J}=0$ for $|i-j|>1$; in this case, the matrix $\Owe_{I,J}$ is a banded
matrix, which is non-zero
only within distance $l$ of the main diagonal.  We refer to the case of exponential decay
as ``soft control", while we refer to the case that $\Owe_{I,J}=0$ for $|i-j|>1$ as ``strict control".

The key claim in this ``third'' approach is
\begin{claim}
  $O_{\Z} \simeq \faktor{O}{O\times O}\times \Z = BO\times Z$ ($\simeq$
  denotes weak homotopy type.)
\end{claim}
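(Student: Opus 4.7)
The plan is to identify $O_\Z$ with $\Omega O$, after which the claim follows from Bott periodicity: $\Omega O \simeq \Z \times BO = \Z \times \faktor{O}{O\times O}$. The central construction is a Toeplitz-type compression. Let $P_+$ denote the projection on $\ell^2(\Z)$ onto coordinates with index $\geq 0$. Given $\mathcal{O} \in O_\Z$, form the compression $T_\mathcal{O} := P_+ \mathcal{O} P_+ \oplus (1-P_+)$. The exponential (or strict) control forces the off-diagonal block $P_+ \mathcal{O}(1-P_+)$ to be compact --- finite rank in the strictly banded case, trace-class more generally --- so $T_\mathcal{O}$ is Fredholm. Thus $\mathcal{O}\mapsto T_\mathcal{O}$ defines a map $\Phi: O_\Z \to \mathrm{Fred}_{\mathrm{sa}}$ into the space of self-adjoint Fredholm operators, which by a classical theorem of Atiyah--Singer has the homotopy type $\Z \times BO$.

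To show $\Phi$ is a weak equivalence, I would construct an inverse by spreading a Fredholm operator along $\Z$: given $T$ representing a class in $\Z \times BO$, form $\widehat{T} \in O_\Z$ as a block-diagonal sum of shifted copies, each supported in a window of bounded size. That $\Phi$ and the spreading map are homotopy inverses reduces to the contractibility of the half-line version $O_{\Z_{\geq 0}}$, which follows from an Eilenberg swindle: because $\Z_{\geq 0}$ admits a shift, any $\mathcal{O}$ can be deformed through $\mathcal{O} \oplus I \oplus \mathcal{O}^{-1} \oplus I \oplus \cdots$ back to the identity while preserving control.

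A cleaner structural route to the same conclusion uses a homotopy pushout. The cover $\Z = \Z_{\geq 0} \cup \Z_{\leq 0}$ induces a diagram in which both halves contribute contractible pieces (by the swindle), while their overlap, consisting of operators essentially supported near the origin, retracts onto a finite orthogonal group and, in the stable range, onto $O$ itself. Hence $O_\Z$ is the homotopy pushout of $\ast \leftarrow O \to \ast$, namely the unreduced suspension $\Sigma O$, which in the stable range is $\Z \times BO$ by Bott.

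The principal obstacle, which the authors themselves acknowledge, is verifying that the topology on $O_\Z$ is compatible with all these constructions, and that each step --- swindle, compression, spreading --- respects the exponential control uniformly. This is where the naive geometric argument must be replaced by the apparatus of controlled K-theory \`a la Roe \cite{Roe96}, in which the analogues of the swindle and of Mayer--Vietoris are established with the care required to handle decay conditions. Once those foundational issues are settled, any of the three routes above delivers the stated homotopy equivalence.
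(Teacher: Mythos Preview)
Your Toeplitz-compression route is essentially the paper's argument in different clothing. The paper's map $f$ records exactly the kernel and cokernel of the compression $\Pi_+ \circ \mathcal{O}|_{S_+}$, i.e.\ the Fredholm data, and its inverse $g$ is your ``spreading'': it sends $(L_+,L_-)$ to a block-shift translating $L_+$ left and $L_-$ right along $\Z$. The paper then checks $f\circ g = \id$ directly and builds the homotopy $g\circ f \sim \id$ by an Eilenberg swindle sliding fiberwise rotations off to $\pm\infty$, just as you outline. One correction: $T_{\mathcal O} = P_+\mathcal{O}P_+ \oplus (1-P_+)$ is \emph{not} self-adjoint, since $\mathcal{O}^* = \mathcal{O}^{-1} \neq \mathcal{O}$ for generic orthogonal $\mathcal{O}$. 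You want the ordinary real Fredholm operators (Atiyah--J\"anich), which classify $KO^0 = \Z \times BO$, not $\mathrm{Fred}_{\mathrm{sa}}$.

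Your homotopy-pushout route, however, has a real gap. First, $O_\Z$ is not a union $O_{\Z_{\ge 0}} \cup O_{\Z_{\le 0}}$: a generic controlled orthogonal operator on $\Z$ is supported on neither half-line, so there is no Mayer--Vietoris cover at the level of spaces. Second, even granting the diagram, the pushout $\ast \leftarrow O \to \ast$ is $\Sigma O$, and $\Sigma O$ is \emph{not} weakly equivalent to $\Z\times BO$. The suspension is connected, whereas $\pi_0(\Z\times BO)=\Z$; and $\pi_1(\Sigma O)$ receives a free contribution from $\pi_0(O)=\Z_2$, whereas $\pi_1(BO)=\Z_2$. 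Bott periodicity identifies $\Z\times BO$ with a \emph{delooping} of $O$, not a suspension; the two agree only after passage to spectra. The Mayer--Vietoris heuristic from controlled $K$-theory is correct at the level of $K$-groups (via ideals and quotients in the Roe algebra), but it does not lift to a pushout square of the underlying topological groups.
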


We take for $BO\times \Z$ the well known model of pairs, considered as
formal differences, of transverse (not necessarily spanning) subspaces
$(L_+,L_-)$ of a $d$-dimensional real vector space $\R^d$
(with $d$ finite and a limit taken as $d\rightarrow\infty$.)
The $\Z$-coordinate above is the index $\dim L_+ - \dim L_-$.

Here are the maps: choose any cut in $\Z$, say $\Pi_+$ projects to $J$
with $j\geq 0$ and $\Pi_-$ projects to $J$ with $j<0$. Set
\begin{eqnarray*}
 &&  f:O_{\Z}\rightarrow BO\times\Z,\\ &&f(O) = (\ker(\Pi_+\circ
O|_{S_+}),\ker(\Pi_-\circ O|_{S_-})),
\end{eqnarray*}
where $S_{+(-)}$ is the space of basis elements mapping to $[0,L]~
([-L,-1])$ where $l\gg$ the decay constant for $O_{\Z}$, and both
kernels are regarded as subsets of the span $\R^d$ of elements mapping
to $[-L,L]$. Given $(L_+,L_-)\in BO\times \Z$, define $g(L_+,L_-)\in
O_{Z}$ as follows. Think of $L_{\pm}\subset \R^d$ with $P$ denoting
the perpendicular subspace to $\vspan(L_+,L_-)$. The infinite vector
space on which $g(L_+,L_-)$ acts is spanned by a copy of $\R^d$ for
each $i\in\Z$. $O = g(L_+,L_-)$ acts as the identity on each $P\times
i$, translation by $-1\in \Z$ on each $L_+\times i$ and translation
by $+1\in\Z$ on each $L_-\times i$.  Note: the use of the
kernel above is correct in the case of strict control.  In the case
of soft control, we should replace the kernel by the projector onto
right singular vectors of $\Pi_+ \circ O|_{S+}$ with singular value
less than or equal to some quantity which is exponentially small in $L$.

It is immediate that $f\circ g = \id_{BO\times\Z}$. On the other hand,
$g\circ f$ seems to turn a general $O\in O_{\Z}$ into a very special
form (no rotation at all, just various subspaces sliding left and
right.) However, we claim that there is a deformation retraction of
$O_{\Z}$ to maps which take this simple form: $r:O_{\Z}\rightarrow
\{\textrm{slides}\}\subseteq O_{\Z}$. First, think about a fiberwise
$O$ (no sliding) which is $X$ over some $i\in \Z$ and $\id$ over other
$j\neq i\in\Z$. Note that
\begin{displaymath}
\begin{vmatrix}
  X & 0\\ 0 & -X
\end{vmatrix}
\sim
\begin{vmatrix}
  0 & \id\\
  \id & 0
\end{vmatrix}
\end{displaymath}
via the canonical homotopy
\begin{displaymath}
  \cos\left(\frac{\pi t}{2}\right)
  \begin{vmatrix}
    X & 0\\
    0 & -X
  \end{vmatrix}
  + \sin\left(\frac{\pi t}{2}\right)
  \begin{vmatrix}
    0 & \id\\
    \id & 0
  \end{vmatrix}.
\end{displaymath}
When $d$ is even (which we may assume) there are further canonical
homotopies so that
\begin{displaymath}
  \begin{vmatrix}
    0 & \id\\
    \id & 0
  \end{vmatrix}
  \sim
  \begin{vmatrix}
    \begin{matrix}
      \sigma_{\Z} & 0\\
      0 & \sigma_{\Z}
    \end{matrix}
    & 0\\
    0 &
    \begin{matrix}
      \sigma_{\Z} & 0\\
      0 & \sigma_{\Z}
    \end{matrix}
  \end{vmatrix}
  \sim
  \begin{vmatrix}
    \id
  \end{vmatrix}
\end{displaymath}
Now a ``Eilenberg swindle'' runs $X$ (canonically) towards $+\infty$.
\begin{center}
  \begin{tabular}{c@{\hspace{0.5pt}}c@{\hspace{0.5pt}}c}
    \id & $X$ & \id\\ \hline
    $(i-1)$ & $i$ & $(i+1)$
  \end{tabular}
  $\leadsto$
  \begin{tabular}{c@{\hspace{0.5pt}}c@{\hspace{0.5pt}}c@{\hspace{0.5pt}}c@{\hspace{0.5pt}}c@{\hspace{0.5pt}}c}
    \id & $X$ & $-X$ & $X$ & $-X$ & $\cdots$\\ \hline
    $i$ & $(i+1)$ & $(i+2)$ & $(i+3)$ & $(i+4)$ & $\cdots$
  \end{tabular}\\
  $\leadsto$
  \begin{tabular}{c@{\hspace{0.5pt}}c@{\hspace{0.5pt}}c@{\hspace{0.5pt}}c@{\hspace{0.5pt}}c}
    \id & \id & \id & \id & $\cdots$\\ \hline
    $i$ & $(i+1)$ & $(i+2)$ & $(i+3)$ & $\cdots$
  \end{tabular}
\end{center}

Reusing this trick, any fiber-wise, though now it is better to slide from the center out to both $\pm\infty$,
$X$ can be canonically connected to
$\id\in O_{\Z}$. Similarly, one can (canonically) deform the general
$O\in O_{\Z}$ to a sum of three pieces --- fixed, left-sliding,
right-sliding --- and this is the deformation retraction $r$ written
above.

We now promote the (weak) homotopy equivalence $O_{\Z}\simeq
BO\times\Z$ to an entire table where the rows are all (weak) homotopy
equivalences.
\begin{displaymath}
 {\scriptstyle \begin{array}{cccccc}
     \faktor{O_{\Z^3}}{U_{\Z^3}} & O_{\Z+\Z} & \simeq_2 &
   {\hskip -0.2cm}
    \faktor{O_{\Z}}{O_{\Z}\times O_{\Z}}\times \Z & & \faktor{U}{O}\\
    & \cdots & \faktor{O_{\Z+\Z}}{U_{\Z+\Z}} & O_\Z & \simeq_1 &
      \!BO\times \Z\\
      &&\cdots & \faktor{O_{\Z}}{U_{\Z}} & \simeq_3 & O\\
      &&&&\cdots & \faktor{O}{U}\\
      &&&&&\cdots
    \end{array}}
\end{displaymath}

Moving diagonally up and left, say from $\simeq_1$ to $\simeq_2$
seems to be easy: one cuts the control space not at a point but along
a codimension 1 hyperplane, in this case the line $y = -\frac{1}{2}$,
and finds a pair of locally finite $\Z$-controlled kernels which
define an element in $\faktor{O_\Z}{O_\Z\times O_\Z}\times \Z$.
However, there is one subtle point: the matrices $\Pi_+$ and $O$ are
both controlled; however, this does not imply that the projector onto the
kernel of $\Pi_+\circ O|_{S_+}$ is also controlled.  If we knew that the
singular values of $\Pi_+ \circ O|_{S_+}$ had a spectral gap separating the
zero singular values (or, in the case of soft control, the singular values which
are exponentially small in $L$) from the rest of the spectrum, then we could
show that the projector onto the kernel had soft control, with decay
constant set by the gap, but in absence of knowledge of a gap, this
seems to be a difficult technical step.

Moving down the array, say from $\simeq_1$ to $\simeq_3$, requires
``looping'' $\simeq_1$. We must check that
$\Omega(O_\Z)\simeq\faktor{O_\Z}{U_\Z}$, i.e. that the first step in
Bott's ladder of eight rungs holds with $\Z$-controls. To do this, one
must do Lie theory in $O_\Z$: find the shortest geodesic arcs from
$\id$ to $-\id$ in $O_{\Z}$, understand which directions they pick out
in the Lie algebra $O_\Z$ and that their midpoints are $\Z$-controlled
complex structures, and finally make the analogs of Bott's index
calculations. The initial step is to consider a representative for a
``$\Z$-controlled complex structure'' (where coordinates are
consecutive in $\Z$):
\begin{displaymath}
  J = \Owe^+
  \begin{vmatrix}
    \ddots & & & \\
    & \begin{matrix}
      0 & 1\\
      -1 & 0
    \end{matrix} & & \\
    & & \begin{matrix}
      0 & 1\\
      -1 & 0
    \end{matrix} & \\
    & & & \ddots
  \end{vmatrix}
  \Owe,\hspace{1cm}\Owe\in O_\Z
\end{displaymath}
and see that the 1-parameter subgroup $\{e^{t\log \Owe}\}\subset
O_\Z$ retains $\Z$-control. This may be checked by writing
\begin{displaymath}
  \log J = \frac{\pi}{2} J,
\end{displaymath}
since $J^2=-I$.
This yields $\Z$-controls on $\log J$ and hence on the entire
1-parameter subgroup.

One interesting feature of this approach is that we are able to classify different symmetry
classes of controlled unitaries on a line.  The
controlled unitaries $U_\Z$ are classified by an integer.  This integer is precisely the ``flow"
described by Kitaev\cite{Kitaev06a}.  If the controlled unitary is assumed to be symmetric, it is in the orthogonal
group, and again $O_\Z$ is classified by an integer.  However, if the unitary
is chosen to be self-dual, in this case $\faktor{U_\Z}{Sp_\Z}$ has a $\Z_2$ classification.
Ryu et. al.\cite{Ryu08} note (see Table I or Table IV of that paper) that each of the 10 different symmetric spaces can be
obtained by considering the exponential of Hamiltonians lying in the 10 different symmetry
classes; for example, given a Hermitian matrix $H$, the exponential $\exp(i H t)$ is a unitary
matrix, while if $H$ is anti-symmetric and Hermitian, the exponential $\exp(i H t)$ is an
orthogonal matrix.  However, the  classification of unitaries here implies that
there can be an obstruction to writing a
controlled unitary as an exponential $\exp(i H t)$ with
$H$ of the correct symmetry class and $H$ also controlled.

\section{Phase Symmetry Unbroken}
\label{sec:appendix_phase_symmetry_unbroken}

If the phase symmetry of the order parameter is unbroken
(i.e. if we allow gradients in the overall phase), then
the effective sigma model target is $U/O$, not $\UO$.
This means that at every stage of the calculation,
an additional circle $\frac{U(1)}{O(1)}$ or $K(\Z,1)$ product factor must be added to the target space.
This leads to {\em no} change for $\pi_2(Y)$ or $\pi_1(R)$ (essentially because $\pi_2(\frac{U(1)}{O(1)}) \cong 0$),
however $\pi_1(Y)$ and $\pi_0(R)$ pick up an additional integer
which is the winding number around this circle.
Thus, (\ref{eq:pi_1(K_2n)_appendix}) below

\begin{equation}\label{eq:pi_1(K_2n)_appendix}
\begindc{0}[3]
    \obj(10,20)[a]{$\pi_2(Y)$}
    \obj(28,20)[b]{$\pi_1(R)$}
    \obj(44,20)[c]{$\pi_1(\til{K}_{2n})$}
    \obj(60,20)[d]{$\pi_1(Y)$}
    \obj(79,20)[e]{$\pi_0(R)$}
    \obj(60,15)[c1]{\begin{sideways}$\cong$\end{sideways}}
    \obj(79,15)[c1]{\begin{sideways}$\cong$\end{sideways}}
    \obj(60,10)[d2]{$\Z_2^{2n} \rtimes S_{2n}$}
    \obj(79,10)[e2]{$\Z_2$}
    \mor{a}{b}{$\partial_2$}
    \mor{b}{c}{}
    \mor{c}{d}{}
    \mor{d}{e}{$\partial_1$}
\enddc
\end{equation}
is replaced by (\ref{eq:new_pi_1(K_2n)_appendix}):
\begin{equation}\label{eq:new_pi_1(K_2n)_appendix}
\begindc{0}[3]
    \obj(10,20)[a]{$\pi_2(Y^{\text{new}})$}
    \obj(30,20)[b]{$\pi_1(R^{\text{new}})$}
    \obj(50,20)[c]{$\pi_1(K^{\text{new}}_{2n})$}
    \obj(20,10)[d]{$\pi_1(Y^{\text{new}})$}
    \obj(40,10)[e]{$\pi_0(R^{\text{new}})$}
    \obj(17,5)[c1]{\begin{sideways}$\cong$\end{sideways}}
    \obj(42,5)[c1]{\begin{sideways}$\cong$\end{sideways}}
    \obj(15,0)[d2]{$(\Z^{2n} \times \Z_2^{2n}) \rtimes S_{2n}$}
    \obj(45,0)[e2]{$\Z^{2n} \times \Z_2$}
    \obj(30,7){$\partial_1^{\text{new}}$}
    \mor{a}{b}{$\partial_2$}
    \mor{b}{c}{}
    \mor{c}{d}{}
    \mor{d}{e}{}
\enddc
\end{equation}

\noindent where $S_{2n}$ acts on $\Z^{2n}$ by permuting its standard generating set.

The new $\Z^{2n}$ factor in $\pi_0(R^\text{new})$ represents rolling the order parameter around the phase circle as one moves radially into the hedgehog.  Similarly, the $\Z^{2n}$ in $\pi_1(Y^\text{new})$ comes from wrapping the whole hedgehog around the phase circle as the parameter is advanced.  Clearly, the two $\Z^{2n}$'s cancel: $\text{ker}(\partial_1^{\text{new}}) = \text{ker}(\partial_1)$.  Thus, $$\pi_1(K_{2n}) \cong \pi_1(\til{K}_{2n})\times \Z,$$ where the extra factor $\Z$ comes from a base point in $B^3$ wraps around the additional circle $\frac{U(1)}{O(1)}$.

\section{Hopf Map: an example of boundary maps
in the homotopy exact sequnce}
\label{sec:hopf-map}

The Hopf fibration is an example of a
fibration (of a fiber bundle, in fact) which
arises in several contexts in physics.
It consists of a map (the Hopf map) from $S^3$ to $S^2$
such that the pre-image of each point in $S^2$
is an $S^1$. This is familiar from the representation
of a vector (e.g. the Bloch sphere) by a spinor
(e.g. a two-state system):
\begin{equation}
\label{eqn:spin-vector-map}
{\bf n} = {z^\dagger} {\bf \sigma} z
\end{equation}
where ${\bf \sigma}=({\sigma_x},{\sigma_y},{\sigma_z})$
are Pauli matrices and the spinor
\begin{equation}
z = \left(\begin{matrix}
             z_1  \cr
             z_2
             \end{matrix}\right)
\end{equation}
satisfies $|{z_1}|^2 + |{z_2}|^2 = 1$ so that
${\bf n}$ is a unit vector. Due to this normalization
condition, the spinor $z$ lives in $S^3$ while
the vector ${\bf n}$ lives in $S^2$.
Transforming $z\rightarrow e^{i\theta}\,z$ leaves
${\bf n}$ invariant. Thus, the map (\ref{eqn:spin-vector-map})
maps circles in $S^3$ to points in $S^2$, as depicted
below:
\[\begindc{0}[3]
    \obj(-3,20)[aa]{$U(1)\text{ phase} =$}
    \obj(10,20)[a]{$S^1$}
    \obj(20,20)[b]{$S^3$}
    \obj(20,10)[c]{$S^2$}
    \obj(42,20)[d]{$=\text{normalized states in }\C^2$}
    \obj(34,10)[e]{$=\text{Bloch sphere}$}
    \mor{a}{b}{}
    \mor{b}{c}{}
\enddc\]

One context, quite closely-related to the present
discussion, in which the Hopf map arises
is the O(3) non-linear $\sigma$ model
in $2+1$-dimensions. The order parameter field
${\bf n}(x)$ is a map from $S^3$ to $S^2$
(assuming that ${\bf n}(x)$ is constant at infinity, so that spacetime
can be compactified into $S^3$). The topological term
which can be added to such a model \cite{Wilczek83} is:
\begin{equation}
S_{H} = -\frac{1}{2\pi}\int {d^2}x\, d\tau
\epsilon^{\mu\nu\lambda} {A_\mu} F_{\nu\lambda}
\end{equation}
where $F_{\nu\lambda}=
{\partial_\nu}{A_\lambda}-{\partial_\lambda}{A_\nu}$
and
\begin{equation}
\epsilon^{\mu\nu\lambda}{\partial_\nu}{A_\lambda}
\equiv \frac{1}{8\pi} \epsilon^{\mu\nu\lambda}
{\bf n}\cdot {\partial_\nu}{\bf n} \times {\partial_\lambda}{\bf n}
\end{equation}
This term gives the linking number of skyrmion
trajectories in the O(3) non-linear $\sigma$ model.
To see this, suppose that ${\bf n}(\infty)=(0,0,1)$.
The pre-image of $(0,0,-1)$ (which we may view as the
centers of skyrmions) is a set of closed circles,
and the Hopf term gives their linking number.

We can use the Hopf map to explain boundary
maps in the homotopy exact sequence:
$\pi_2(S^2) \overset{\partial}{\underset{\cong}{\longrightarrow}} \pi_1(S^1)$:

\[\begindc{0}[3]
    \obj(10,30)[a]{$\pi_2(S^2)$}
    \obj(30,30)[b]{$\pi_1(S^1)$}
    \obj(50,30)[c]{$\pi_1(S^3)$}
    \obj(10,25)[d]{\begin{sideways}$\cong$\end{sideways}}
    \obj(30,25)[e]{\begin{sideways}$\cong$\end{sideways}}
    \obj(50,25)[f]{\begin{sideways}$\cong$\end{sideways}}
    \obj(10,20)[g]{$\Z$}
    \obj(30,20)[g]{$\Z$}
    \obj(50,20)[g]{$0$}
    \mor{a}{b}{$\partial$}
    \mor{a}{b}{$\cong$}[\atright,\solidarrow]
    \mor{b}{c}{}
\enddc\]

\noindent which, we see, must be an isomorphism to be consistent with $\pi_1(S^3) \cong 0$.  But how is this boundary map defined?  We think of a $2$-sphere $S^2$ as a loop of circles $S^1$, as depicted in Fig. \ref{fig:sphereloop}.

\begin{figure}[htpb]
\labellist \small\hair 2pt
  \pinlabel $\text{base }S^2=\text{Bloch sphere}$ at 370 100
\endlabellist
\hskip -0.8 cm \includegraphics[scale=0.5]{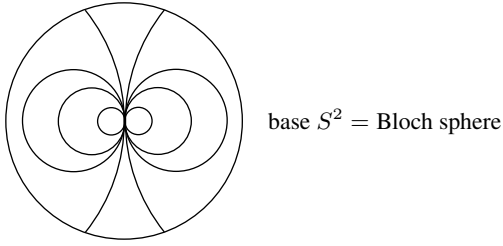}
\caption{The circle $S^1$ starts small, gets large, and becomes small again at the end of the ``loop,'' that is, after it has swung all the way around the sphere $S^2$.}
\label{fig:sphereloop}
\end{figure}

\noindent Using the (defining) property of a fibration (see any book on algebraic topology), we may lift this loop of circles $S^1$
from the base $S^2$ to the total space $S^3$.
The lifting is perfectly continuous, but
it does not give us a sphere. Instead, it gives us
a surface with boundary. This boundary is
an $S^1$. What happens is that, at the very end of
the loop of circles $S^1$, the lifting
``explodes'' the final circle $S^1$,
whose map to the Bloch sphere is shrinking to a point,
into a circle in $S^3$,
as pictured in figure \ref{fig:hopfcircles}.
In other words, an element of ${\pi_2}({S^2})$
is lifted to a surface in $S^3$ whose boundary is a circle.
This circle can wind around the fiber $S^1$,
thus defining an element of ${\pi_1}({S^1})$.
This map from an element of ${\pi_2}({S^2})$
to an element of ${\pi_1}({S^1})$ is called
a {\it boundary map} since the element
of ${\pi_1}({S^1})$ is the boundary
of the lift to $S^3$ of the corresponding element of ${\pi_2}({S^2})$.

\begin{figure}[htpb]
\labellist \small\hair 2pt
  \pinlabel $\text{another fiber}$ at 220 182
  \pinlabel $\text{fiber over base pt.}=$ at 320 155
  \pinlabel $\text{lift at end of family}$ at 320 140
  \pinlabel $\text{base pt.}=$ at 315 90
  \pinlabel $\text{lift at beginning}$ at 320 75
  \pinlabel $\text{of family}$ at 320 60
  \pinlabel $\text{family}$ at 128 115
  \pinlabel $S^3$ at 30 105
\endlabellist
\centering
\hskip -1 cm \includegraphics[scale=0.7]{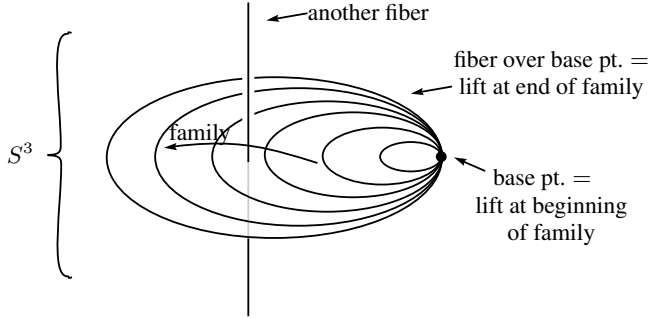}
\caption{The lifts of circles $S^1$
are depicted `upstairs' with respect to the usual picture of
Hopf circles filling $\R^3$. The lift at the end of the
family is the biggest ellipse in this picture, not a single
point, unlike the lift at the beginning of the family.
Thus, the loop of circles no longer closes when lifted.}
\label{fig:hopfcircles}
\end{figure}

\section{Computation of ${\pi_0}(R_{2n}), {\pi_1}(R_{2n})$
using the ``Postnikov tower'' for $\UO$}
\label{sec:Postnikov}

In this appendix, we compute the homotopy groups
${\pi_0}(R_{2n}), {\pi_1}(R_{2n})$.

Recall that the homotopy classes of maps from a 3-manifold $M^3$ to $S^2$ can be understood in terms of the ribbons which are the (framed) inverse image of the north pole $N \subset S^2$.  Such ribbons serve as generators for the path components of $\text{Maps}(M^3 \to S^2)$ and the relations are ``framed cobordisms,'' that is, imbedded surfaces $(S; r_0, r_1) \subset (M^3 \times [0,1]; M \times 0, M \times 1)$ where $S$ is a surface ``cobording'' between initial and final ribbons $r_0$ and $r_1$.  $S$, like the ribbons themselves, has a normal framing which restricts to the oriented ribbon directions of both $r_0$ and $r_1$.  This is a special case of what, in topology, is called the
Pontryagin-Thom construction (PTC) (see Appendix
\ref{sec:appendix_pontryagin_thom_construction} for more details).

Of course, there is more to $\UO$ than the bottom $S^2$.  Let's look next at the influence of the 3-cell $D^3$ on $R_{2n}$.  Whereas $S^2 \setminus N$ is already contractible, to make $S^3 \bigcup_{\text{deg } 2} D^3$ contractible, we must remove a diameter $\delta \subset D^3$ whose endpoints are glued to $N$.  Making our maps $f$ transverse to $\delta \subset D^3$ (and not just $N \subset S^2$ as would be ordinarily done in the PTC) we see that the ribbons (and their framed cobordisms) are now subject to a ``singularity'' at the $f$-preimages of the origin $O \subset D^3$.  The singularity is very mild: it is simply a point on the ribbon where the orientation (induced by comparing orientations on $M^3$ and $S^2$) longitudinally along the ribbon $r_0$ changes direction.  The upshot is that the generators for $\pi_0(R_{2n})$ now consist of ribbons with no longitudinal arrows assigning a direction along the ribbon. (These generating ribbons are, of course, subject to the boundary condition that there are exactly $2n$ arc endpoints meeting each hedgehog exactly once at its marked point $f^{-1}(N)$.)  Closed loops of ribbons may also arise.  Because all maps from a 3-manifold deform into these lowest cells, $S^2 \bigcup_{\text{deg }2} D^3 \subset \UO$, all generators of $\pi_0(R_{2n})$ are of this form.

Finding the relations defining $\pi_0(R_{2n})$ requires looking at maps of a four manifold $(Q \times I, \partial(Q \times I))$ into $\UO$.  To study these by a variant of the PTC, we would need to take into account all cells up to dimension 4, $sk^4(\UO)$, the ``4-skeleton.''  Similarly, a full analysis of $\pi_1(R_{2n})$ would require looking at $sk^5(\UO)$.

\begin{fact}\label{fact:1}
$\pi_3(S^2 \bigcup_{\text{deg } 2} D^3) \cong \Z_4$ and is generated by the Hopf map $h: S^3 \to S^2$.
\end{fact}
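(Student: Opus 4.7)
\emph{Plan.} Let $X := S^2 \cup_{\text{deg } 2} D^3$ and $i \colon S^2 \hookrightarrow X$ denote the inclusion of the bottom $2$-cell. The strategy is to bound the order of $i_*[h]$ from above by a direct null-homotopy argument, and then to pin down the exact group structure by identifying $\pi_3(X)$ with the $\pi_2$ of a homotopy fiber one can compute from a fibration long exact sequence.

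First I would show $4\,i_*[h] = 0$. Let $q \colon S^2 \to S^2$ be the degree-$2$ attaching map; once included into $X$, it becomes null-homotopic, with the attached $3$-cell providing an explicit null-homotopy. Hence $q \circ h \colon S^3 \to X$ is null-homotopic. On the other hand, by the classical multiplicativity of the Hopf invariant under precomposition with a self-map of $S^2$, $H(q \circ h) = (\deg q)^2 H(h) = 4$, so $q \circ h = 4h$ already in $\pi_3(S^2) = \Z$. Pushing forward gives $4\,i_*[h] = i_*[q \circ h] = 0$ in $\pi_3(X)$.

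Next I would compute the whole group. Let $F$ denote the homotopy fiber of $q$. By Blakers--Massey (applied to two $1$-connected spaces), the natural comparison map $F \to \Omega X$ induces an isomorphism on $\pi_k$ for $k \leq 2$, so $\pi_2(F) \cong \pi_3(X)$, with the canonical lift of $[h] \in \pi_3(S^2)$ to $F$ (provided by the null-homotopy $i \circ q \sim *$) corresponding to $i_*[h]$. The homotopy long exact sequence of the fibration $F \to S^2 \xrightarrow{q} S^2$ reads
\begin{equation*}
\pi_3(S^2) \xrightarrow{q_*} \pi_3(S^2) \to \pi_2(F) \to \pi_2(S^2) \xrightarrow{q_*} \pi_2(S^2).
\end{equation*}
Now $q_*$ acts as multiplication by $2$ on $\pi_2(S^2) = \Z$ (hence injective) and, by the same multiplicativity formula applied again, as multiplication by $4$ on $\pi_3(S^2) = \Z$. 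Thus $\pi_2(F) = \Z/4\Z$ with $[h]$ a generator, yielding $\pi_3(X) \cong \Z_4$ generated by $i_*[h]$.

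\emph{Main obstacle.} The delicate input is the precise Blakers--Massey connectivity needed to upgrade a surjection on $\pi_2$ to an \emph{isomorphism}; extracting the sharp form for the fiber-to-loop-of-cofiber comparison requires some care. Failing that, I would fall back on the Serre spectral sequence for the Postnikov fibration $\tilde X \to X \to K(\Z_2, 2)$, with $\tilde X$ the $2$-connected cover of $X$: Hurewicz identifies $\pi_3(X) = \pi_3(\tilde X) = H_3(\tilde X)$, and the vanishings $H_3(X) = H_4(X) = 0$ force the transgression $d_4 \colon H_4(K(\Z_2, 2); \Z) = \Z_4 \to H_3(\tilde X)$ to be an isomorphism, again yielding $\pi_3(X) \cong \Z_4$.
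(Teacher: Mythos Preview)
Your argument is correct and takes a genuinely different route from the paper. One caveat on the primary approach: the Blakers--Massey bound falls exactly one degree short of what you state --- with $q$ only $1$-connected and $S^2 \to *$ being $2$-connected, the pushout square is only $2$-cartesian, so $F \to \Omega X$ is $2$-connected, giving an isomorphism on $\pi_k$ for $k\leq 1$ and only an epimorphism on $\pi_2$. You correctly flag this as the obstacle, and your fallback via the Serre spectral sequence is solid; combining it with the Blakers--Massey epimorphism $\Z_4 = \pi_2(F) \twoheadrightarrow \pi_3(X)$ then forces an isomorphism and pins down $i_*[h]$ as the generator.

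The paper instead invokes the Whitehead exact sequence $H_4(X) \to \Gamma(\pi_2(X)) \to \pi_3(X) \to H_3(X)$; with the outer groups vanishing and $\Gamma(\Z_2) \cong \Z_4$, the answer drops out in one line, and a short transversality argument then pushes any generator into $S^2$. The two routes share the same hidden input --- one has $\Gamma(A) \cong H_4(K(A,2);\Z)$ for any abelian $A$, so the paper's $\Gamma(\Z_2) = \Z_4$ is exactly the $H_4(K(\Z_2,2);\Z) = \Z_4$ driving your spectral-sequence transgression. What your write-up adds over the paper is the elementary and illuminating upper bound $4\,i_*[h] = 0$ via $q_* = (\deg q)^2$ on $\pi_3(S^2)$, which exposes the quadratic origin of the $4$ without invoking $\Gamma$; the paper's argument packages the same quadratic behavior more abstractly but more quickly.
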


\begin{proof}
The Whitehead exact sequence reads:

\[\begindc{0}[3]
    \obj(10,10)[a]{$H_4(X)$}
    \obj(30,10)[b]{$\Gamma(\pi_2(X))$}
    \obj(50,10)[c]{$\pi_3(X)$}
    \obj(80,10)[d]{$H_3(X)$}
    \mor{a}{b}{}
    \mor{b}{c}{}
    \mor{c}{d}{Hurewicz}
\enddc\]

\noindent where $\Gamma$ is Whitehead's quadratic functor.  It is known that $\Gamma (\Z_2) \cong \Z_4$.  If $X = S^2 \bigcup_{\text{deg } 2} D^3$, the outer integral homology groups vanish, so $\pi_3(X) \cong \Z_4$ as claimed.

To see that the generator $g \in \pi_3(X)$ deforms to $S^2$, make transverse to the origin $O \subset D^3$.  Since $D^3$ is attached to $S^2$ with positive degree, the signed sum of inverse images $g^{-1}(O) = 0$.  These may be paired and then removed by building a framed cobordism (see PTC appendix) from $g^{-1}(O)$ to $\emptyset$ inside $S^3 \times [0,1]$.  The map $g'$ associated to the $S^3 \times 1$ level avoids $O \subset D^3$ and so may be radially deformed into $S^2$.

\end{proof}

\begin{consequence}
$sk^4(\UO) = S^2 \bigcup_{\text{deg } 2} D^3 \bigcup_{2 \text{ Hopf}} D^4$
\end{consequence}

\begin{proof}
Consider the commutative diagram:

\[\begindc{0}[3]
    \obj(50,43)[a]{$SO$}
    \obj(10,30)[b]{$S^3$}
    \obj(15,30)[c1]{$\cong$}
    \obj(22,30)[c]{$SU(2)$}
    \obj(50,30)[d]{$SU$}
    \obj(10,10)[e]{$S^2$}
    \obj(15,10)[c2]{$\cong$}
    \obj(22,10)[f]{$\frac{SU(2)}{SO(2)}$}
    \obj(50,10)[g]{$\frac{SU}{SO}$}
    \mor{a}{d}{fiber}
    \mor{b}{e}{Hopf map}[\atright,\solidarrow]
    \mor{c}{f}{coset}
    \mor{c}{d}{}
    \mor{f}{g}{}
    \mor{d}{g}{coset}
    \mor{d}{g}{$\delta$}[\atright,\solidarrow]
\enddc\]

\noindent Applying the $\pi_3$ functor, we obtain:

\[\begindc{0}[3]
    \obj(10,30)[a]{$\Z$}
    \obj(50,30)[b]{$\Z$}
    \obj(10,10)[c]{$\Z$}
    \obj(50,10)[d]{$\Z_2$}
    \mor{a}{b}{$\cong$}
    \mor{a}{c}{$\cong$}
    \mor{b}{d}{epimorphism}
    \mor{c}{d}{epimorphism}
\enddc\]

\noindent From the homotopy exact sequence of

\[\begindc{0}[3]
    \obj(10,20)[a]{$SO$}
    \obj(20,20)[b]{$SU$}
    \obj(20,10)[c]{$\frac{SU}{SO}$}
    \obj(29,10)[d]{$\cong\UO$}
    \mor{a}{b}{}
    \mor{b}{c}{}
\enddc\]

\noindent we can see that

\[\begindc{0}[3]
    \obj(10,20)[a]{$\pi_3(SO)$}
    \obj(30,20)[b]{$\pi_3(SU)$}
    \obj(50,20)[c]{$\pi_3(\frac{SU}{SO})$}
    \obj(70,20)[d]{$\pi_2(SO)$}
    \obj(10,15)[c1]{\begin{sideways}$\cong$\end{sideways}}
    \obj(30,15)[c1]{\begin{sideways}$\cong$\end{sideways}}
    \obj(50,15)[c1]{\begin{sideways}$\cong$\end{sideways}}
    \obj(70,15)[c1]{\begin{sideways}$\cong$\end{sideways}}
    \obj(10,10)[a2]{$\Z$}
    \obj(30,10)[b2]{$\Z$}
    \obj(50,10)[c2]{$\Z_2$}
    \obj(70,10)[d2]{$1$}
    \mor{a}{b}{$\times 2$}
    \mor{b}{c}{}
    \mor{c}{d}{}
\enddc\]

\noindent and we conclude that $\delta$ is an epimorphism.  Thus, the Hopf map, which is order 4 in $X$, needs a 4-cell to be attached in order to make it order 2 in $\UO$.
\end{proof}

We will not need this, but for the curious, $sk^5(\UO)$ requires one additional 5-cell which maps degree 2 over the 4-cell.

As remarked, studying maps {\em into} cell structures becomes laborious.  Homotopy theory works best when studying maps {\em out of} cell structures and into fibrations.  So, let us introduce the 2-stage Postnikov tower for $\UO$.

The notation $K(\pi,n)$ is used for any connected space (generally infinite dimensional) that has only a single nontrivial homotopy group, $\pi_n(K(\pi,n)) = \pi$.  It is easy to show that the homotopy type of $K(\pi, n)$ is unique.  At the next stage of complexity come spaces $Z$ with two nontrivial homotopy groups, say $\pi_m(Z) = A$ and $\pi_n(Z) = B$, $m < n$.  Such spaces (for simplicity we now assume $m \geq 2$) have a homotopy type determined by the total space $T$ of a fibration called a {\em 2-stage Postnikov Tower}:

\[\begindc{0}[3]
    \obj(10,25)[a]{$K(n,B)$}
    \obj(10,21)[a1]{(fiber)}
    \obj(30,25)[b]{$T$}
    \obj(30,10)[c]{$K(m,A)$}
    \obj(42,10)[c1]{(base)}
    \mor{a}{b}{}
    \mor{b}{c}{}
\enddc\]

To completely specify how the fibration twists and the homotopy type of $T$ we need to specify the ``$k$'' invariant $k \in H^{n+1}(K(m,A);B)$.

The space $K(m,A)$ classifies cohomology in the sense that for any space $S$, the homotopy classes of maps $[S, K(m,A)]$ naturally biject with $H^m(S;A)$.  (There is a ``fundamental'' class $\iota \in H^m(K(m,A);A)$ so that for $f: S \to K(m,A)$ we associate $f^* \iota \in H^m(S;A)$.) Thus, the $k$ invariant is really a map: $$k: K(m,A) \to K(n+1,B)$$ and $T$ is the pullback of the path loop fibration over $K(n+1,B)$:

\[\begindc{0}[3]
    \obj(0,25)[a]{$\Omega K(n+1,B) = K(n,B)$}
    \obj(30,25)[b]{pt.}
    \obj(30,10)[c]{$K(n+1,B)$}
    \mor{a}{b}{}
    \mor{b}{c}{}
\enddc\]
\\
\[\begindc{0}[3]
    \obj(10,30)[a]{$K(n,B)$}
    \obj(40,30)[b]{$K(n,B)$}
    \obj(10,15)[c]{$T$}
    \obj(40,15)[d]{pt.}
    \obj(10,0)[e]{$K(m,A)$}
    \obj(40,0)[f]{$K(n+1,B)$}
    \mor{a}{c}{}
    \mor{b}{d}{}
    \mor{c}{d}{}
    \mor{c}{e}{}
    \mor{d}{f}{}
    \mor{e}{f}{$k$}
\enddc\]

\noindent where we have denoted the contractible space of paths in $K(n+1,B)$ starting at its base point by its homotopy model, a single point, pt.

For spaces with many homotopy groups, the Postnikov tower can be continued iteratively.

In our case, $\pi_i(\UO) = 0, \Z_2, \Z_2, 0$ for $i = 1,2,3,4$, so a 2-stage tower is adequate for computing $\pi_i(R)$ for $i=0,1$.  Thus, our model $T$ for $\UO$ is now a fibration:

\[\begindc{0}[3]
    \obj(10,25)[a]{$K(\Z_2,3)$}
    \obj(30,25)[b]{$T$}
    \obj(30,10)[c]{$K(\Z_2,2)$}
    \mor{a}{b}{}
    \mor{b}{c}{}
\enddc\]

Similar to the previously computed cell structure of $\UO$, the cells of $K(\Z_2,2)$ through dimension $\leq 3$ are the same.  But for $K(\Z_2,2)$, the 4-cell kills the Hopf map and the 5-cell is now degree 4 over the 4-cell.  The cell complex for $K(\Z_2,2)$ begins as follows:

\[\begindc{0}[3]
    \obj(10,20)[a]{$C_5$}
    \obj(30,20)[b]{$C_4$}
    \obj(50,20)[c]{$C_3$}
    \obj(70,20)[d]{$C_2$}
    \obj(90,20)[e]{$C_1$}
    \obj(10,15)[c1]{\begin{sideways}$\cong$\end{sideways}}
    \obj(30,15)[c1]{\begin{sideways}$\cong$\end{sideways}}
    \obj(50,15)[c1]{\begin{sideways}$\cong$\end{sideways}}
    \obj(70,15)[c1]{\begin{sideways}$\cong$\end{sideways}}
    \obj(90,15)[c1]{\begin{sideways}$\cong$\end{sideways}}
    \obj(10,10)[a2]{$\Z$}
    \obj(30,10)[b2]{$\Z$}
    \obj(50,10)[c2]{$\Z$}
    \obj(70,10)[d2]{$\Z$}
    \obj(90,10)[e2]{$0$}
    \mor{a}{b}{}
    \mor{b}{c}{}
    \mor{c}{d}{}
    \mor{d}{e}{}
    \mor{a2}{b2}{$\times 4$}
    \mor{b2}{c2}{$0$}
    \mor{c2}{d2}{$\times 2$}
    \mor{d2}{e2}{}
\enddc\]

\noindent The $\Z_2$-cohomology is then the homology of the hom sequence of this chain complex into $\Z_2$:

\[\begindc{0}[3]
    \obj(10,20)[a]{$\Z_2$}
    \obj(30,20)[b]{$\Z_2$}
    \obj(50,20)[c]{$\Z_2$}
    \obj(70,20)[d]{$\Z_2$}
    \obj(90,20)[e]{$0$}
    \mor{b}{a}{0}[\atright,\solidarrow]
    \mor{c}{b}{0}[\atright,\solidarrow]
    \mor{d}{c}{0}[\atright,\solidarrow]
    \mor{e}{d}{}
\enddc\]

Thus, one finds that $H^4(K(\Z_2, 2);\Z_2) \cong \Z_2$, so there are only two possible fibrations (up to homotopy type) as above, the product $K(\Z_2,2) \times K(\Z_2,3)$ and a ``twisted product.''  Fact \ref{fact:1} says that $\pi_3(T)$ actually comes by composing the Hopf map into the (2-sphere inside the) base $K(\Z_2,2)$.  This is clearly false for the product where that composition would factor through $\pi_3(K(\Z_2,2)) \cong 1$.  This shows that the model $T$ for $\UO$ has nontrivial $k$ invariant.

If $X$ is a space and

\[\begindc{0}[30]
    \obj(1,2)[a]{$F$}
    \obj(2,2)[b]{$E$}
    \obj(2,1)[c]{$B$}
    \mor{a}{b}{}
    \mor{b}{c}{}
\enddc\]

\noindent is a fibration, then there is a corresponding fibration of the space of maps:

\[\begindc{0}[30]
    \obj(0,2)[a]{$\mathcal{M}(X,F)$}
    \obj(2,2)[b]{$\mathcal{M}(X,E)$}
    \obj(2,1)[c]{$\mathcal{M}(X,B)$}
    \mor{a}{b}{}
    \mor{b}{c}{}
\enddc\]

\noindent (We now use the usual comma instead of ``$\to$'' in the notation for map spaces.)  This is our main tool; it enables us to tear apart maps into $T$ into maps into $K(\Z_2,2)$ and $K(\Z_2,3)$ which are easy things to compute, namely cohomology groups.  A small wrinkle is that because of the boundary conditions we will study relative maps and end up with relative cohomology groups, but these are also easily computed.

So, let us now return to the computation of $\pi_0(R_{2n})$.  We study the fibration:

\[\begindc{0}[30]
    \obj(1,2)[a]{$R3$}
    \obj(2,2)[b]{$R$}
    \obj(2,1)[c]{$R2$}
    \mor{a}{b}{}
    \mor{b}{c}{}
\enddc\]

\noindent where $R$ is our original $R_{2n}$ (the subscript $n \geq 0$ is suppressed).  $R2$ is the space $\text{Maps}(Q \to K(\Z_2,2))$ with rigid boundary conditions (from $\partial Q$ to $S^2 \subset K(\Z_2,2)$) analogous to $R$ but with maps to $T$ replaced with maps to its base space $K(\Z_2,2)$.  Similarly, $R3$ is $\text{Maps}(Q \to K(\Z_2,3))$ with $\partial Q$ mapping to the base point of $K(\Z_2,3)$.  We have from the homotopy exact sequence:

\[\begindc{0}[3]
    \obj(10,30)[a]{$\pi_1(R2)$}
    \obj(30,30)[b]{$\pi_0(R3)$}
    \obj(45,30)[c]{$\pi_0(R)$}
    \obj(60,30)[d]{$\pi_0(R2)$}
    \obj(30,20)[e]{$H^3(Q,\partial Q;\Z_2)$}
    \obj(70,20)[f]{$H^2(Q,\partial Q;\Z_2)\cong H_1(Q;\Z_2){\hskip 1 cm}$}
    \obj(76,14)[g]{$\cong 1$}
    \obj(30,10)[g]{$\Z_2$}
    \obj(30,25)[c1]{\begin{sideways}$\cong$\end{sideways}}
    \obj(30,15)[c2]{\begin{sideways}$\cong$\end{sideways}}
    \obj(60,25)[c1]{\begin{sideways}$\cong$\end{sideways}}
    \obj(60,10)[p]{Poincare duality}
    \mor{a}{b}{$\partial_1$}
    \mor{b}{c}{}
    \mor{c}{d}{}
    \mor{p}{f}{}
\enddc\]

The boundary map is always zero for product fibrations, but in this case, it is potentially affected by the $k$ invariant (coming from the $k$ invariant of $T$).

Following the usual procedure, $\partial_1$ may be understood as the ``twist'' (mod 2) in a ribbon induced by the ``lasso move'' shown in figure \ref{fig:lassomove} below, which is well known to be $4\pi$.

\begin{figure}[htpb]
\labellist \small\hair 2pt
  \pinlabel $4\pi\text{ twist}$ at 360 75
\endlabellist
{\hskip -0.8 cm} \includegraphics[width=3in]{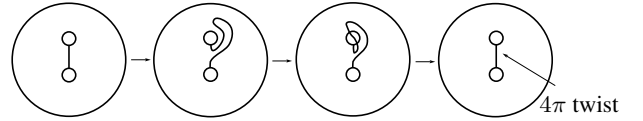}
\caption{By pulling a ribbon over a hedgehog and `lassoing' it,
a $4\pi$ twist is performed.}
\label{fig:lassomove}
\end{figure}

\noindent Since $4 \pi$ is an even multiple of $2 \pi$, $\partial_1$ is in fact zero and $\pi_0(R) \cong \Z_2$.  The two components can, on the level of framed ribbons, be identified with a total twisting of ribbons being an {\em even} or {\em odd} multiple of $2\pi$.

$\pi_1(R)$ may be similarly computed:

\[\begindc{0}[3]
    \obj(10,40)[a]{$\cdots$}
    \obj(30,40)[b]{$\pi_1(R3)$}
    \obj(50,40)[c]{$\pi_1(R)$}
    \obj(70,40)[d]{$\pi_1(R2)$}
    \obj(90,40)[e]{$\cdots$}
    \obj(30,35)[c1]{\begin{sideways}$\cong$\end{sideways}}
    \obj(70,35)[c2]{\begin{sideways}$\cong$\end{sideways}}
    \obj(30,30)[b2]{$H^3(Q\times I, \partial(Q \times I);\Z_2)$}
    \obj(70,30)[d2]{$H^2(Q\times I, \partial(Q \times I);\Z_2)$}
    \obj(30,25)[c3]{\begin{sideways}$\cong$\end{sideways}}
    \obj(70,25)[c4]{\begin{sideways}$\cong$\end{sideways}}
    \obj(30,20)[b3]{$H^1(Q\times I;\Z_2)$}
    \obj(70,20)[d3]{$H_2(Q\times I;\Z_2)$}
    \obj(30,15)[c5]{\begin{sideways}$\cong$\end{sideways}}
    \obj(70,15)[c6]{\begin{sideways}$\cong$\end{sideways}}
    \obj(30,10)[b4]{$1$}
    \obj(70,10)[d4]{$\Z_2^{2n}$}
    \obj(50,25)[p]{Poincare duality}
    \mor{a}{b}{}
    \mor{b}{c}{}
    \mor{c}{d}{}
    \mor{d}{e}{}
    \mor{p}{c3}{}
    \mor{p}{c4}{}
\enddc\]

Thus, $\pi_1(R) \cong \Z_2^{2n}$.  Intuitively, $\pi_1(R)$ is represented by 2-dimensional ``bags'' or homology classes that a loop of ribbons, defining an element of $\pi_1(R)$, sweeps out in $Q \times I$.

\section{Pontryagin-Thom Construction}
\label{sec:appendix_pontryagin_thom_construction}

Homotopy classes $[M^d, S^k]$ may be studied geometrically.  The idea is to make any map $f: M^d \to S^k$ transverse to a base point $N \in S^k$.  Then $f^{-1}(N)$ is a $(d-k)$-dimensional submanifold $K$ of $M^d$ equipped with a framing of its normal bundle obtained by pulling back a fixed normal $k$-frame to $N$.  This construction is reversible: Given a submanifold with normal framing $K^{d-k} \subset M$, there is a map $f:M^d \to S^k$ which wraps an $\epsilon$-neighborhood $\eta$ of $K$ over $S^k$ by using polar coordinates around the north pole $N$ and the framing to send each normal $D_\epsilon^k \subset \eta$ degree $=1$ over $S^k$ and sends the rest of $M \setminus \eta$ to the south pole in $S^k$.

This discussion can be relativized.  A homotopy $F: M \times [0,1] \to S$, $F|_{M \times 0} = f$, $F|_{M \times 1} = g$ yields by inverse image $\bbar{K}$ of $N$ a ``framed cobordism'' from $K_0 \subset M \times 0$ to $K_1 \subset M \times 1$, that is, a manifold $\bbar{K}^{d-k+1} \subset M \times [0,1]$ with a normal $k$-framing which restricts at 0 (1) to $K_{0\,(1)}$ and its normal k-framing.

The maps to $S^k$ that are produced from a framed submanifold may seem extreme and unrepresentative since most points map to the south pole.  However, the beauty of the construction is that since $S^k \setminus N$ is contractible, no real choice (up to homotopy) exists for the part of the map which avoids $N$.  Thus we have:

\begin{fact}\label{fact:pontryagin_thom_1}
The space of maps $\mathcal{M}(M^d, S^k)$ is (at least weakly) homotopy equivalent to the space of framed submanifolds $\{L^{d-k} \subset M^d\}$ (provided both are given reasonable topologies).
\end{fact}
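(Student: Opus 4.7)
The plan is to exhibit explicit continuous maps $\Phi \colon \mathcal{M}(M^d,S^k) \to \mathcal{F}(M^d,k)$ and $\Psi \colon \mathcal{F}(M^d,k) \to \mathcal{M}(M^d,S^k)$, where $\mathcal{F}(M^d,k)$ denotes the space of normally $k$-framed, properly embedded $(d-k)$-submanifolds of $M^d$, and to show that $\Phi$ and $\Psi$ induce isomorphisms on all homotopy groups. Since the claim is weak homotopy equivalence, I do not need to produce a homotopy inverse on the nose, only show that $\pi_i(\Phi)$ is a bijection for every basepoint and every $i \geq 0$; equivalently, I can test against compact parameter spaces $S^i$ (or more generally CW pairs) and verify surjectivity and injectivity.

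First, I will fix the target topology. On $\mathcal{M}(M^d,S^k)$ I take the compact-open topology, which is standard. For $\mathcal{F}(M^d,k)$, I model a framed submanifold as a pair $(K,\nu)$ where $K\subset M$ is a compact neat submanifold and $\nu$ is a smooth normal $k$-frame, and I topologize this space via the embedding into $\mathcal{M}(M,S^k)$ given by the tubular-neighborhood collapse $\Psi$ defined below — or equivalently via the Whitney $C^\infty$ topology on the space of such pairs, with care taken that different choices of $\epsilon$-neighborhood give homotopic representatives. With these choices, $\Phi$ and $\Psi$ will both be continuous.

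Next, to define $\Phi$: the subspace $\mathcal{M}^\pitchfork \subset \mathcal{M}(M^d,S^k)$ of maps transverse to the north pole $N$ is open and dense, and standard transversality arguments (Thom's transversality theorem applied parametrically) show the inclusion $\mathcal{M}^\pitchfork \hookrightarrow \mathcal{M}(M^d,S^k)$ is a weak equivalence — any map $S^i \to \mathcal{M}$ can be perturbed to land in $\mathcal{M}^\pitchfork$, and any such perturbation can itself be done relative to a boundary already in $\mathcal{M}^\pitchfork$. On $\mathcal{M}^\pitchfork$, the assignment $f \mapsto (f^{-1}(N), f^{*}\nu_N)$ is continuous, giving $\Phi$. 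To define $\Psi$: given $(K,\nu)$, choose an $\epsilon$-tubular neighborhood $\eta \cong K \times D^k_\epsilon$ via the framing and set $\Psi(K,\nu)$ to be the composite that sends $\eta$ onto $S^k$ by radial polar coordinates (normal disk $\mapsto$ $S^k$ via the standard degree-one collapse) and sends $M\setminus\eta$ to the south pole. Continuity is automatic once one fixes an $\epsilon$; changing $\epsilon$ produces a canonical homotopy, so the homotopy class is well-defined.

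The verification splits into two steps. For $\Phi\circ\Psi \simeq \mathrm{id}$: the preimage of $N$ under $\Psi(K,\nu)$ is exactly $K$ with its chosen framing, so this composition is already the identity on the nose. For $\Psi\circ\Phi \simeq \mathrm{id}$ (and, more importantly, for surjectivity and injectivity on $\pi_i$): given $f \in \mathcal{M}^\pitchfork$, I must show $\Psi(f^{-1}(N), f^{*}\nu_N)$ is homotopic to $f$. The key observation is that $S^k \setminus \{N\}$ is contractible, so once $f$ and $\Psi(\Phi(f))$ agree (up to the local collapse model) on a neighborhood of $f^{-1}(N)$, the homotopy extension property applied to the pair $(M, \eta)$ allows one to deform the complement freely, contracting the image in $S^k\setminus\{N\}$ to the south pole. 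Parametrically, given a family $g \colon S^i \to \mathcal{F}$, I recover a family of framed preimages from $\Psi\circ g$, proving surjectivity of $\pi_i(\Phi)$; given a framed cobordism $\bar{K}\subset M\times I$ between $\Phi(f_0)$ and $\Phi(f_1)$, applying $\Psi$ parametrically produces a homotopy $f_0 \simeq f_1$ in $\mathcal{M}$, proving injectivity of $\pi_i(\Phi)$.

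The main obstacle I anticipate is not the geometric content — which is classical Pontryagin–Thom — but the topological bookkeeping: making precise that the space of framed submanifolds, topologized so that $\Phi$ and $\Psi$ are continuous, has the right weak homotopy type, and that parametric transversality can be done with control near $\partial M$ (important here because our $M = Q_{2n}$ has boundary and we impose fixed boundary behavior). The cleanest route is probably to bypass a direct topology on $\mathcal{F}$ altogether: define $\mathcal{F}$ as the homotopy colimit of framed bordism data, prove the map–level version by showing that $\Phi$ on $\mathcal{M}^\pitchfork$ is a quasi-fibration with contractible fibers (the fiber being maps to $S^k$ with fixed preimage data, which retract to the constant map to the south pole on the complement), and deduce the weak equivalence. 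Either approach yields the desired statement.
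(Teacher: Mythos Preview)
Your proposal is correct and follows essentially the same approach as the paper: the paper does not give a formal proof but rather states the result as a ``Fact'' immediately after sketching the classical Pontryagin--Thom ingredients (transversality to $N$, pullback framing, the reverse tubular-neighborhood collapse, and the contractibility of $S^k\setminus N$), which is precisely the skeleton you have fleshed out. Your added care about parametric transversality, the topology on the space of framed submanifolds, and boundary control goes beyond what the paper records, but the underlying idea is identical.
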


\begin{fact}\label{fact:pontryagin_thom_2}
Fact \ref{fact:pontryagin_thom_1} holds in a relative setting, e.g.\ the $\{\text{space of homotopies} \} \overset{\text{weakly}}{\simeq} \{\text{space of framed cobordisms}\}$.  For example, merely on the level of $\pi_0$, this says that ``homotopy classes of maps $(M^d,S^k)$ are in 1-1 correspondence with framed cobordism classes of framed submanifolds $K^{d-k} \subset M^d$.''
\end{fact}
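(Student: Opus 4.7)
The plan is to bootstrap Fact~\ref{fact:pontryagin_thom_2} from Fact~\ref{fact:pontryagin_thom_1} by parametrizing everything over the interval $[0,1]$ (and, more generally, over higher-dimensional parameter simplices when we want the full weak-equivalence statement rather than just $\pi_0$). The two underlying technical inputs are the \emph{relative transversality theorem} (a smooth map into $S^k$ that is already transverse to $N$ on a closed subset can be perturbed, rel.\ that subset, to be transverse everywhere) and the \emph{tubular neighborhood theorem} (a framed submanifold admits a canonical normal $D^k$-bundle identification). I would treat these as black boxes.

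First I would construct the forward map on $\pi_0$. Given $f,g:M^d \to S^k$ and a homotopy $F:M\times [0,1]\to S^k$ with $F|_{M\times 0}=f$, $F|_{M\times 1}=g$, I would first make $f$ and $g$ individually transverse to $N$, producing framed submanifolds $K_0 = f^{-1}(N)$ and $K_1 = g^{-1}(N)$. Then I would apply relative transversality to $F$, keeping it fixed on $M\times\{0,1\}$ where it is already transverse, to produce a homotopy $F'$ transverse to $N$ throughout. The preimage $\bar K = (F')^{-1}(N) \subset M\times[0,1]$ is then a properly embedded $(d-k+1)$-manifold with boundary $K_0 \sqcup K_1$, and pulling back the fixed normal framing at $N$ produces the required normal framing on $\bar K$ extending those on $K_0$ and $K_1$. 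This is exactly a framed cobordism.

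Second, I would construct the inverse. Given a framed cobordism $\bar K \subset M\times[0,1]$, I would pick an $\epsilon$-tubular neighborhood $\eta(\bar K) \cong \bar K \times D^k_\epsilon$ using the framing, send each fiber disk radially and degree $1$ onto $S^k$ (north pole at the center, south pole on the boundary), and send the complement to the south pole. Because the framing restricts correctly to $K_0$ and $K_1$, this defines a homotopy $M\times[0,1]\to S^k$ between the corresponding maps produced from $K_0, K_1$. To see that the two passages are mutually inverse up to homotopy, I would use the fact that $S^k\setminus\{N\}$ is contractible: any two maps $M\to S^k$ that agree on the preimage of a neighborhood of $N$ (and are there constructed from the same framed submanifold) can be joined by a straight-line homotopy in the contractible complement of $N$.

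To upgrade $\pi_0$-bijection to a weak homotopy equivalence of mapping spaces, I would run exactly the same argument with $[0,1]$ replaced by a compact parameter space $P$ (e.g.\ $P=D^n$ to probe $\pi_n$): relative transversality on $M\times P$ rel.\ $M\times \partial P$, tubular neighborhood in $M\times P$, and contractibility of $S^k\setminus N$ all go through verbatim. The main obstacle is the usual subtlety of relative transversality --- arranging the perturbation to be small in a way that preserves the preexisting transverse structure on the closed subset $M\times \partial P$ and to respect whatever topology one puts on the space of framed submanifolds so that the inverse maps are continuous. This is essentially the content of the parametric transversality theorem; once invoked, everything else is a routine interpolation in $S^k\setminus N$. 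Uniqueness of the framed submanifold associated to a given map is likewise a relative-transversality argument applied to two different choices of perturbation.
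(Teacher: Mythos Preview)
Your proposal is correct and is, in fact, considerably more detailed than what the paper itself provides. The paper does not give a proof of this fact at all: it simply states it as a standard result, preceded only by a one-sentence description of the forward direction (take a homotopy $F:M\times[0,1]\to S^k$, make it transverse to $N$, and read off $\bar K = F^{-1}(N)$ as a framed cobordism). Your argument fleshes out exactly this sketch---relative transversality for the forward map, the collapse construction for the inverse, contractibility of $S^k\setminus N$ to show they are mutual inverses, and the parametrized upgrade---so you are not taking a different route, just supplying the details the paper omits.
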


This basic PTC may be generalized to maps $(N^d,M^k)$ whenever we can identify a ``spine'' $X \subset M^k$ so that $M \setminus X$ is contractible.  One then studies $f^{-1}(X)$ (actually $f^{-1}(\text{a germ of } X)$ corresponding to the old framing data) instead of $f^{-1}(N)$.  For this approach to be practical, $X$ and its neighborhood have to be fairly simple.  In this extended setting, one must also keep track of the maps $f|: (f^{-1}(X)) \to X$ since this is no longer unique.

\section{Multiplication Table for $\pi_1(K_{2n})$}\label{sec:appendix_mult_table_K_2n}

We have shown ${\cal T}_{2n}=\pi_1({K}_{2n}) \cong \Z \times \Z_2 \times G_{2n}$ and a short exact sequence (SES):

\[\begindc{0}[3]
    \obj(10,10)[a]{$1$}
    \obj(23,10)[b]{$G_{2n}$}
    \obj(43,10)[c]{$\Z_2^{2n}\rtimes S_{2n}$}
    \obj(70,10)[d]{$\Z_2$}
    \obj(85,10)[e]{$1$}
    \mor{a}{b}{}
    \mor{b}{c}{}
    \mor{c}{d}{total parity}
    \mor{d}{e}{}
\enddc\]

Thus, we have another SES:

\[\begindc{0}[3]
    \obj(10,30)[a]{$1$}
    \obj(30,30)[b]{$\Z_2^{2n-1}$}
    \obj(50,30)[c]{$G_{2n}$}
    \obj(70,30)[d]{$S_{2n}$}
    \obj(90,30)[e]{$1$}
    \mor{a}{b}{}
    \mor{b}{c}{}
    \mor{d}{e}{}
    \cmor((56,31)(61,33)(67,31)) \pright(61,35){}
    \cmor((67,29)(61,27)(56,29)) \pleft(61,25){$s$}
\enddc\]

\noindent with a merely set theoretic section $s$.  However, using $s$ (line 3.5 of \cite{Brown82}) we can write the multiplication table out for $G_{2n}$, hence $\pi_1(K_{2n})$ in terms of a twisted 2-cocycle $f$ on $S_{2n}$:

\begin{equation}\label{eq:appendix_mult_table_pi_1(Y_2n)_twisted_cocycle}
(a,g)(b,h) = (a+g.b+f(g,h),gh)
\end{equation}

\noindent where $a \in \Z_2^{2n}$, $g,h \in S_{2n}$.  $g$ acts on $\Z_2^{2n-1}$ by including $\Z_2^{2n-1} \subset \Z_2^{2n}$ as the ``even'' vectors and letting $g$ permute the factors of $\Z_2^{2n}$.  It only remains to determine a twisted cocycle $f\in H^2(S_{2n};\Z_2^{2n-1})$, where the
action of $S_{2n}$ on $\Z_2^{2n-1}$ is as above.  The cohomology class represented by $f$ is non-zero because the exact sequence does not split.  As a check, we know (line 3.10 in \cite{Brown82}) that $f$ should obey:

\begin{equation}\label{eq:appendix_twisted_cocycle_rule}
g(f(h,k))-f(gh,k)+f(g,hk)-f(g,k) = 0
\end{equation}

\begin{claim}\label{claim:appendix_cocycle_guess}
$f(g,h)=0$ if $g$ or $h$ is even and $f(g,h)=e_{g(1)}$
if $g$ and $h$ are both odd, where $e_{g(1)}$ is the $g(1)^{\text{th}}$ standard generator of $\Z_2^{2n}$.
\end{claim}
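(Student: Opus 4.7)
The strategy is to follow the standard prescription for extracting a $2$-cocycle from a set-theoretic section (as in Brown, \emph{Cohomology of Groups}, p.~91), apply it to the concrete section already introduced in the proof of Proposition~\ref{presentation}, and then verify by case analysis that the resulting factor set coincides with the formula in the claim (possibly after modification by an explicit coboundary). The plan is to take the set map $s:S_{2n}\to G_{2n}$ defined by $s(g)=(0,g)$ when $g$ is even and $s(g)=(e_1,g)$ when $g$ is odd. The first small check is that $s(g)\in G_{2n}$: the total parity of $(0,g)$ is $\mathrm{parity}(g)=0$ for $g$ even, and the total parity of $(e_1,g)$ is $1+1=0$ for $g$ odd, so $s$ lands in the ``even'' subgroup as required.

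Next I will compute $s(g)s(h)$ in the ambient hyperoctahedral group $C_{2n}=\mathbb{Z}_2^{2n}\rtimes S_{2n}$, using $(a,g)(b,h)=(a+g.b,\,gh)$, and compare with $s(gh)$. The cocycle is then extracted as the unique element $f(g,h)\in\mathbb{Z}_2^{2n-1}$ such that $s(g)s(h)=(f(g,h),1)\cdot s(gh)$. This reduces to four cases indexed by the parities of $g$ and $h$: when $g$ or $h$ is even, direct computation gives $s(g)s(h)=s(gh)$ (after, in the case $g$ even, $h$ odd, a coboundary adjustment involving the $S_{2n}$-equivariant identification of $\mathbb{Z}_2^{2n-1}$ inside $\mathbb{Z}_2^{2n}$), so $f(g,h)=0$; when both $g$ and $h$ are odd we obtain $s(g)s(h)=(e_1+g.e_1,\,gh)=(e_1+e_{g(1)},\,gh)$ while $s(gh)=(0,gh)$, yielding a representative of $f(g,h)$ that agrees with $e_{g(1)}$ after passing to the natural coordinates on $\mathbb{Z}_2^{2n-1}$.

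Having the formula, the cocycle identity \eqref{eq:appendix_twisted_cocycle_rule} is then automatic from associativity in $G_{2n}$, but as a sanity check I will also verify it directly by splitting into the eight parity patterns for $(g,h,k)$: five of them are trivial because at least two of $f(h,k),f(gh,k),f(g,hk),f(g,h)$ vanish in a matching way, and the remaining patterns reduce to identities of the form $g.e_{h(1)}=e_{(gh)(1)}$ together with $g.e_1=e_{g(1)}$, both of which are immediate from the definition of the permutation action. Finally, non-triviality of the cohomology class $[f]$ follows from Proposition~\ref{presentation}(2), since a splitting of the sequence would force $f$ to be a coboundary.

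The main obstacle is bookkeeping rather than conceptual: the symbol $e_{g(1)}$ in the claim must be read as its image in $\mathbb{Z}_2^{2n-1}$ under a fixed equivariant identification (e.g.\ $e_i\mapsto e_i+e_1$), and different natural choices of section differ by coboundaries, so I must be careful to present $f$ in a form that is literally $\mathbb{Z}_2^{2n-1}$-valued and matches the stated expression on the nose. Once this identification is nailed down, the remaining steps are routine case checks of the multiplication table.
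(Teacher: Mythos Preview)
Your approach is exactly the one the paper sketches (in the proof of Proposition~\ref{presentation}(3), which is the only place any argument is indicated): take the section $s(g)=(0,g)$ for $g$ even and $s(g)=(e_1,g)$ for $g$ odd, and read off the factor set \`a la Brown. So on the level of strategy you are aligned with the paper.

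However, your plan to massage the result into the \emph{literal} formula of the Claim will not succeed, and the obstacle is not just bookkeeping. Carrying out the computation with that section gives
\[
f(g,h)=\epsilon(h)\bigl(e_1+e_{g(1)}\bigr),
\]
i.e.\ $f(g,h)=e_1+e_{g(1)}$ whenever $h$ is odd (regardless of the parity of $g$) and $0$ when $h$ is even. This lands in the even sublattice $\Z_2^{2n-1}$ and satisfies the cocycle identity. By contrast, the formula printed in the Claim --- $f=0$ if $g$ \emph{or} $h$ is even and $f=e_{g(1)}$ if both are odd --- is not a cocycle at all: with $g=(12)$, $h=(13)$, $k=(14)$ in $S_4$ one gets
\[
g\cdot f(h,k)+f(gh,k)+f(g,hk)+f(g,h)=e_3+0+0+e_2\neq 0.
\]
So no coboundary adjustment can produce the stated expression on the nose; a coboundary cannot turn a cocycle into a non-cocycle. (Note also that $e_{g(1)}$ by itself is not even an element of the even sublattice $\Z_2^{2n-1}$.)

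The upshot: your computation is the right one and proves the substantive content (the cohomology class of the extension), but you should present the representative $f(g,h)=\epsilon(h)(e_1+e_{g(1)})$ that actually comes out, rather than trying to match the Claim verbatim. The printed formula appears to be a misstatement of this.
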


\section{Ghostly recollection of the braid group}
\label{sec:braid-group}

To promote the Teo-Kane representation to a linear
representation of the braid group ${\cal B}_m$, we have the freedom to introduce an
overall abelian phase to each $T_{i,j}$.  There are several different
choices related to different physical systems.  For our purpose
here, any choice suffices.  The different choices lead to
different linear images of the braid groups as the braid generator matrix has
different orders.

The Ising TQFT can be realized using the Kauffman bracket
by choosing $A=ie^{-\frac{2\pi i}{16}}$.  The braid
group representation from a TQFT is not well-defined as a matrix
representation because in general we do not have a canonical
choice of the basis vectors.  In the Kauffman bracket formulation of
Ising TQFT, bases of representation spaces can be constructed using
linear combinations of Temperley-Lieb (TL) diagrams or trivalent graphs.
Then the representation of the braid groups is given
by the Kauffman bracket interpreted as a map from ${\cal B}_m$ to
units of TL algebra.

The Temperley-Lieb-Jones algebra $TL_m(A)$ at the above chosen $A$ is
isomorphic to the Clifford algebra.  Therefore the $\gamma$-matrices $\{\gamma_i\}$
 can be represented by TL
generators $\{U_i\}$.  In terms of the $\gamma$-matrices, the Jones
representation in Kauffman bracket normalization is:
$$ \rho_A(\sigma_i)=e^{-\frac{3\pi
i}{8}}e^{-\frac{\pi}{4}\gamma_i\gamma_{i+1}}.$$
Jones original representation from von Neumann algebra is:
$$ \rho^J_A(\sigma_i)=e^{\frac{\pi
i}{4}}e^{-\frac{\pi}{4}\gamma_i\gamma_{i+1}}.$$
We will also refer to the $T_{i,i+1}$ representation of the braid group
$$ \rho^{\gamma}_A(\sigma_i)=e^{-\frac{\pi}{4}\gamma_i\gamma_{i+1}}$$
as the $\gamma$-matrix representation.

The ratio of the two distinct eigenvalues of the braid generator is $-i$,
independent of the overall abelian phase.
The orders of the braid
generator matrices in the three normalizations are $16, 4, 8$, respectively.  In the
following, we will focus on the Jones representation.

Let $q=A^{-4}=i$, then the Kauffman bracket is
$ \rho_A(\sigma_i)=<\sigma_i>=A+A^{-1}U_i$, where $U_i$'s are the
TL generators.  For convenience, we introduce the Jones generators of the
TL algebras $e_i=\frac{U_i}{d}, d=\sqrt{2}, i=1,\cdots, m-1$. These $e_i$'s
should not be confused with the basis element $e_i$'s of $\mathbb{R}^m$.  The TL algebra in terms of
$e_i$'s is: $$e_i^{\dagger}=e_i, e_i^2=e_i, e_ie_{i\pm 1}e_i=\frac{1}{d^2} e_i.$$
 The Jones representation is
$\rho^J_A(\sigma_i)=-A\rho_A(\sigma_i)=-1+(1+q)e_i$.  Let
$x_i=<\sigma^2_i>$.  Then we have: $x_i^2=1$, $x_ix_j=x_jx_i$ if
$|i-j|>1$, and $x_ix_{i+1}=-x_{i+1}x_i$.  The $-1$ in the relation
$x_ix_{i+1}=-x_{i+1}x_i$ is from the Jones-Wenzl projectors
$p_3=0$.  By a simple calculation, the Jones-Wenzl projector $p_3$ in
terms of $\{e_i, e_{i+1}\}$ is $p_3=1+2(e_ie_{i+1}+e_{i+1}e_i-e_i-e_{i+1})$, and
$x_ix_{i+1}+x_{i+1}x_i=2p_3$.

Note that $\{x_i=1-2e_i, i=1,\cdots, m-1\}$ generate $TL_m(A)$.
Write $x_i$ in terms of $\gamma$-matrices, we have
$x_i=\sqrt{-1}\gamma_i\gamma_{i+1}$, which
identifies $TL_m(A)$ with the even part $Cl_m^0(\mathbb{C})$ of
$Cl_m(\mathbb{C})$. On the other hand, if we set
$v_i=(\sqrt{-1})^{i-1} x_i\cdots x_1, i=1,\cdots, m-1$, a
direct computation shows that $v_i^{\dagger}=v_i,
v_i v_j+v_j v_i=2\delta_{i,j}$.  Therefore, $\{v_i,i=1,\cdots,
m-1\}$ form the Clifford algebra $Cl_{m-1}(\mathbb{C})$.  Of
course, the two different realizations $TL_m(A)$ as Clifford
algebras are just the well known isomorphism
$Cl_m^0(\mathbb{C})\cong Cl_{m-1}\mathbb{C})$.

To identify the image of the Jones representation, we use the
exact sequence $1\rightarrow PB_m \rightarrow {\cal B}_m \rightarrow
S_m\rightarrow 1$.  The image of the pure braid group $PB_m$ is
generated by $\{x_i\}$.  As an abstract group, the group can be
presented
with generators $x_i, i=1\cdots, x_{m-1}$ and relations $x_i^2=1$,
$x_ix_j=x_jx_i$ if $|i-j|>1$, and $x_ix_{i+1}=-x_{i+1}x_i$.
This group, called the nearly-extra-special $2$-group in \cite{Franko06}
and denoted as $E^1_{m-1}$, is of order $2^m m!$.
Consequently, the image ${\til{G}}_m$ of the Jones representation of
${\cal B}_m$ fits into the exact sequence $1\rightarrow
E^1_{m-1}\rightarrow
\tilde{G}_m \rightarrow S_m\rightarrow 1.$
If we projectivize this sequence, we have $1\rightarrow
\mathbb{Z}^{m-1}_2\rightarrow
{G}_m \rightarrow S_m\rightarrow 1.$ (This sequence splits if and only if $m$ is odd.)
In this sense, the ribbon
permutation group is a ghostly recollection of the braid group.

The center $Z(E^1_{m-1})$ of $E^1_{m-1}$ is $\mathbb{Z}_2=\{\pm 1\}$ if $m$ is
even, and $\mathbb{Z}_2\times \mathbb{Z}_2=\{\pm 1,\pm
x_1x_3\cdots x_{m-1}\}$ if $m$ is odd.  The
representation of $E^1_{m-1}$ is faithful, but when $m$ is even,
none of the irreducible sector is faithful because the central
element $\pm x_1x_3\cdots x_{m-1}$ acts by $\pm 1$, too.  The
action of $x_i$ on the two irreducible sectors are the same for $i=1,2,\cdots, m-2$ and
differs only on $x_{m-1}$, which is $\pm 1$.
When projectivized, note that the element $\alpha=(11\cdots 1)\in
\mathbb{Z}_2^{m}$, as an element in $\mathbb{Z}_2^{m-1}$, is invariant under the action of $S_m$, hence
the projective image of each irreducible sector is
isomorphic to $\mathbb{Z}_2^{m-2}$.

As comparison, we mention the related result in \cite{Read03}.  The Jones representation there
is in the $\gamma$-matrix normalization.  The pure braid generators are of order $4$.  Hence the
pure braid group image is a variant $E^{-1}_{m-1}$ of $E^1_{m-1}$.  The
generators of the two groups are related by the change of $x_i$ to $-\sqrt{-1} x_i$.  A presentation of  $E^{-1}_{m-1}$
 with generators $x_i, i=1\cdots, x_{m-1}$ has relations $x_i^2=-1$,
$x_ix_j=x_jx_i$ if $|i-j|>1$, and $x_ix_{i+1}=-x_{i+1}x_i$.  The center $Z(E^{-1}_{m-1})$ is
more complicated:  $\mathbb{Z}_2$ if $m$ is odd, and $\mathbb{Z}_2\times \mathbb{Z}_2$ if $m=4k+1$ and
$\mathbb{Z}_4$ if $m=4k+3$.  When $m$ is
even, the image of each irreducible sector is identified as the lifting of $G_m$ to $Spin(m)$ \cite{Read03}.
It follows from the above discussion, the linear image of the Jones
representation in $\gamma$ matrix-normalization fits into the
exact sequence $1\rightarrow
\mathbb{Z}_4 \times E^{1}_{m-1} \rightarrow
\mathbb{Z}_8 \times \tilde{G}_m \rightarrow \mathbb{Z}_2 \times S_m \rightarrow 1$, where ${\til{G}}_m$ is the image of ${\cal B}_m$ above.


\end{document}